\def\endofproof {\hfill{$\Box$}\\}
\renewcommand{\(}{\begin{equation}}
\renewcommand{\)}{\end{equation}}
\newcommand{\bea}{\begin{eqnarray}}
\newcommand{\eea}{\end{eqnarray}}
\newcommand{\Z}{{\mathbb Z}}
\theoremstyle{plain}
\newtheorem{theorem}{Theorem}[subsection]
\newtheorem{proposition}[theorem]{Proposition}
\newtheorem{corollary}[theorem]{Corollary}
\theoremstyle{definition}
\newtheorem{definition}[theorem]{Definition}
\newtheorem{example}[theorem]{Example}
\newtheorem{remark}[theorem]{Remark}
\numberwithin{equation}{subsection}
\begin{document}

\title{Extended higher cup-product Chern-Simons theories}

\author{Domenico Fiorenza}
\address{Dipartimento di Matematica ``Guido Castelnuovo'' ---
Universit\`a degli Studi di Roma ``la Sapienza'' --- P.le Aldo Moro, 2 --
00185 -- Roma, Italy -- {\tt fiorenza@mat.uniroma1.it}}

\author{Hisham Sati}
\address{Math Department --- University of Pittsburgh ---
301 Thackeray Hall -- Pittsburgh -- PA  15260 -- {\tt hsati@pitt.edu}}

\author{Urs Schreiber}
\address{Mathematical Institute ---
Utrecht University --- 80125 --
3508 TC-- Utrecht, The Netherlands -- {\tt urs.schreiber@googlemail.com}}

\begin{abstract}
 The
  proper action functional of $(4k+3)$-dimensional $U(1)$-Chern-Simons theory
  including the instanton sectors has a well known description: it is   
  given on the moduli space of fields by the fiber integration 
  of the cup product square of classes in degree-$(2k+2)$ differential
  cohomology.
  We first refine this statement from the moduli space to 
  the full higher smooth moduli stack of fields,  to which the higher-order-ghost 
  BRST complex is the infinitesimal approximation.
  Then we generalize the refined formulation to cup product Chern-Simons theories of 
  nonabelian and higher nonabelian gauge fields,
  such as the nonabelian $\mathrm{String}^{\mathbf{c}}$-2-connections appearing in quantum-corrected
  11-dimensional supergravity and M-branes.  
  We discuss aspects of the 
  off-shell extended geometric pre-quantization
 (in the sense of extended or multi-tiered QFT) 
 of these theories, 
 where there is a prequantum 
 $U(1)$-$k$-bundle (equivalently: a $U(1)$-$(k\!-\!1)$-bundle gerbe) in each codimension $k$.
 Examples we find include moduli stacks for differential T-duality structures 
 as well as the anomaly line bundles of 
 higher electric/magnetic charges, such as the 5-brane charges 
 appearing in heterotic supergravity, appearing
 as line bundles with connection on the smooth higher moduli stacks of field configurations.
\end{abstract}

\keywords{
Chern-Simons theory; topological field theories;
differential cohomology; M-theory; M-branes; geometric quantization.}
\subjclass[2010]{81T45 (Primary); 53C08, 53C80, 81T50, 53D50 (Secondary).}

\maketitle


\tableofcontents


\section{Introduction and Overview}
  \label{Introduction}
  
It has become a familiar fact, known from various examples, that there should be an $n$-dimensional
topological quantum field theory $Z_{[c]}$ associated to the following data:
\begin{enumerate}
  \item a \emph{gauge group} $G$: a Lie group such as $U(N)$; or more generally 
  a higher smooth group, such as the smooth \emph{circle $m$-group} $\mathbf{B}^{m-1}U(1)$
  or the \emph{String 2-group} or the smooth \emph{Fivebrane 6-group}
  \cite{SSSIII, FSS};
  \item a universal characteristic class $[c]\in H^{n+1}(BG,\mathbb{Z})$
  and/or its image $\omega$ in real/de Rham cohomology.
\end{enumerate}  
The tqft $Z_{[c]}$ is a 
$G$-gauge theory defined naturally over all closed oriented
$n$-dimensional smooth manifolds $\Sigma_n$, and its topological nature is expressed by the fact that whenever $\Sigma_n$
happens to be the boundary of some manifold $\Sigma_{n+1}$ then the action fuctional 
on a field configuration $\phi$ can be written as the integral over $\Sigma_{n+1}$ of the pullback $\hat \phi^* \omega$ of a certain universal $(n+1)$-form $\omega$, for some extension $\hat \phi$ of $\phi$.
These are \emph{Chern-Simons type} gauge theories. 
See 
\cite{DJT} for the emergence of Chern-Simons theories in physics and 
also \cite{Za} for a recent introduction to the general idea of 
such theories. 

\medskip
Notably for 
$G$ a compact connected and simply connected simple Lie group, for
$c \in H^4(B G, \mathbb{Z})\simeq \mathbb{Z}$ any integer -- the ``level'' --,
this quantum field theory is the original and standard Chern-Simons theory
introduced in \cite{WittenCS}. See \cite{FreedCS} for a comprehensive review. 
Familiar as this theory is, there is an interesting aspect of it that has maybe not yet found the attention it deserves, 
and which is an example of our constructions here. 

\medskip
To motivate this, it is helpful to 
look at the 3d Chern-Simons action functional as follows:
if we write $H(\Sigma_3, \mathbf{B}G_{\mathrm{conn}})$ for the set of gauge equivalence classes of $G$-principal connections on $\Sigma_3$, then the (exponentiated) action functional of 3d Chern-Simons theory over $\Sigma_3$ is a function of sets
$$
  \exp(i S(-)) : H(\Sigma_3, \mathbf{B}G_{\mathrm{conn}}) \to U(1)
  \,.
$$
This function acts as follows: since the classifying space $BG$ is homotopically trivial in degree less than or equal to $3$, any principal $G$-bundle on $\Sigma_3$ can be trivialized, and so for any gauge 
equivalence class of $G$-connections there exists a representative given by a smooth $\mathfrak{g}$-valued 1-form $A$ on $\Sigma_3$; the  action functional
sends $A$ to the element $\exp(2 \pi i k \int_{\Sigma_3} \mathrm{CS}(A))$ in $U(1)$, 
where $\mathrm{CS}(A) \in \Omega^3(\Sigma_3)$ is the Chern-Simons 3-form of $A$ \cite{CS},
which gives the whole theory its name. 
That this is indeed well defined is the fact that for every gauge 
transformation $g : A \stackrel{}{\to} A^g$, for $g \in C^\infty(\Sigma_3, G)$, both $A$ 
as well as its gauge transform $A^g$, are sent to the same element of $U(1)$. 
A natural formal way to express this is to consider the \emph{groupoid}
$\mathbf{H}(\Sigma_3, \mathbf{B}G_{\mathrm{conn}})$ whose objects are gauge fields $A$ and whose 
morphisms are gauge transformations $g$ as above. Then the fact that the Chern-Simons action is
defined on individual gauge field configurations while being invariant under gauge transformations
is equivalent the statement that it is a \emph{functor}, hence a morphism of groupoids, 
$$
  \exp(i S(-)) : \mathbf{H}(\Sigma_3, \mathbf{B}G_{\mathrm{conn}}) \to U(1)
  \,,
$$
where the set underlying $U(1)$ is regarded as a groupoid with only identity morphisms.
Hence the fact that $\exp(i S(-))$ has to send every morphism on the left to a
morphism on the right is the gauge invariance of the action.

\medskip
Furthermore, the action functional has the property of being \emph{smooth}. It takes any \emph{smooth family} of gauge fields, over some parameter space $U$, to a corresponding smooth family of elements of $U(1)$, and these assignments are compatible with pullback along smooth functions $U_1 \to U_2$ between parameter spaces. 
The formal language that expresses this concept is that of 
\emph{stacks on the site of smooth manifolds}
(see section \ref{GeneralTheory} below for a review and pointers to the literature): 
to say that for every $U$ there is a groupoid, as above, of smooth $U$-families of gauge fields and smooth $U$-families of gauge transformations between them, in a consistent way, is to say that there is a \emph{smooth moduli stack}, denoted $[\Sigma_3,\mathbf{B}G_{\mathrm{conn}}]$, of gauge fields on $\Sigma_3$ -- 
the mapping stack\footnote{A more detailed discusison of such mapping stacks and their
role as moduli stacks is in section 2.3 of \cite{hgp}.}. 
Finally, the fact that the Chern-Simons action functional is not only gauge invariant but also smooth is the fact that it refines to a morphism of smooth stacks
$$
  \exp(i \mathbf{S}(-)) : [\Sigma_3, \mathbf{B}G_{\mathrm{conn}}] \to U(1)
  \,,
$$
where now $U(1)$ is regarded as a smooth stack by declaring that a smooth family of elements is a smooth function with values in $U(1)$. 
This is the refined perspective on Chern-Simons theory which we make use of here.
A pedagogical introduction that walks through the example of 1- dimensional 
and 3-dimensional Chern-Simons theory in this fashion is in the first part of 
\cite{FiorenzaSatiSchreiberCS}.

\medskip
To see what this means, 
it is useful to think of a smooth stack simply as being a \emph{smooth groupoid}.
Lie groups and Lie groupoids are examples (and are called ``differentiable stacks''
when regarded as special cases of smooth stacks)
but there are important smooth groupoids which are not Lie groupoids, since they do not have an
underlying smooth \emph{manifold} but a more general smooth space of objects and of morphisms.
Just as Lie groups have an infinitesimal approximation given by Lie algebras, so 
smooth stacks/smooth groupoids have an infinitesimal approximation given by
\emph{Lie algebroids}. For instance, the smooth moduli stack 
$[\Sigma_3, \mathbf{B}G_{\mathrm{conn}}]$ of gauge field configuration on $\Sigma_3$
(or rather its concretification, see \cite{hgp}) 
is best known in the physics literature in the guise of its 
underlying Lie algebroid: this is the formal dual of the
(off-shell) \emph{BRST complex} of the $G$-gauge theory on $\Sigma_3$: in degree 0
this consists of the functions on the space of gauge fields on $\Sigma_3$, and in 
degree 1 it consists of functions on infinitesimal gauge transformations between these:
the ``ghost fields''.

\medskip
The smooth structure on the action functional is of course crucial in 
field theory: in particular 
it allows to define the \emph{differential} $d \exp(i \mathbf{S}(-))$ of the action functional and hence its critical locus, characterized by   
the Euler-Lagrange equations of motion. In particular, if $\Sigma_3=\Sigma_2\times [0,1]$ is a 3-dimensional worldvolume swept out by a 2-dimensional surface $\Sigma_2$, then the groupoid of critical field configurations on $\Sigma_3$
is equivalent to the groupoid of their initial values (i.e., of their
restrictions to $\Sigma_2 \times\{0\}$), since the equation of motion uniquely
determine the extension of these to the whole cylinder $\Sigma_3$. This gives the \emph{phase space}
of the theory, which is the substack
$$
  [\Sigma_2, \flat \mathbf{B}G] \hookrightarrow [\Sigma_2, \mathbf{B}G_{\mathrm{conn}}]
$$
consisting of flat $G$-connections on $\Sigma_2$. The phase space has a natural structure of symplectic manifold; moreover, its canonical symplectic form is the restriction to $ [\Sigma_2, \flat \mathbf{B}G]$ of a presymplectic 2-form defined on the whole of $[\Sigma_2, \mathbf{B}G_{\mathrm{conn}}]$. 
To formalize 
this, write $\Omega^2_{\mathrm{cl}}(-)$ for the 
smooth stack of closed 2-forms (without gauge transformations), sending
a parameter manifold $U$ to the set $\Omega^2_{\mathrm{cl}}(U)$ of smooth closed 2-forms on $U$.
This may be regarded as the \emph{smooth moduli 0-stack} of closed 2-forms: for
every smooth manifold $X$ (considered as a smooth groupoid with only trivial morphisms) the set of morphisms $X \to \Omega^2_{\mathrm{cl}}(-)$
is in natural bijection to the set $\Omega^2_{\mathrm{cl}}(X)$ of closed 2-forms on $X$.
This is an instance of the \emph{Yoneda lemma}.
Similarly, a smooth 2-form on the moduli stack of field configurations
is a morphism of smooth stacks of the form
$$
  [\Sigma_2, \mathbf{B}G_{\mathrm{conn}}] \to \Omega^2_{\mathrm{cl}}(-)
  \,.
$$
Explicitly, for Chern-Simons theory this morphism sends, for each smooth parameter space $U$, a
given smooth $U$-family of gauge fields 
$A  \in \Omega^1(\Sigma_2 \times U, \mathfrak{g})$
to the 2-form
$$
  \int_{\Sigma_2} \langle d_U A \wedge d_U A\rangle \in \Omega^2_{\mathrm{cl}}(U)
  \,.
$$
Notice that if we restrict to \emph{genuine} families $A$, i.e., to $\mathfrak{g}$-valued 1-forms on $\Sigma_2\times U$ vanishing on tangent vectors to $U$ (to be thought of as $\mathfrak{g}$-valued 1-forms on $\Sigma_2$ parametrized by $U$; technically these are elements in the
\emph{concretification} of the moduli stack) then this 2-form is the 
\emph{fiber integral}
of the Poincar{\'e} 2-form $\langle F_A \wedge F_A\rangle$ along the projection
$\Sigma_2 \times U \to U$, where $F_A := d A + \tfrac{1}{2}[A \wedge A]$ is the 
curvature 2-form of $A$. This is the first sign of a general pattern, 
which we highlight in a moment.

\medskip
There is a more fundamental smooth moduli stack equipped with a closed 2-form:
the moduli stack $\mathbf{B}U(1)_{\mathrm{conn}}$ of $U(1)$-gauge fields,
hence of smooth circle bundles with connection. This is the rule
that sends a smooth parameter manifold $U$ to the groupoid 
$\mathbf{H}(U, \mathbf{B}U(1)_{\mathrm{conn}})$ of $U(1)$-gauge fields $\nabla$ on $U$,
which we have already seen above. Since the  curvature 2-form 
$F_\nabla \in \Omega^2_{\mathrm{cl}}(U)$ of a $U(1)$-principal connection on $U$
 is gauge invariant, the assignment
$\nabla \mapsto F_\nabla$ gives rise to a morphism of smooth stacks of the form
$$
  F_{(-)} : \mathbf{B}U(1)_{\mathrm{conn}} \to \Omega^2_{\mathrm{cl}}(-)
  \;.
$$
In terms of this morphism the fact that every $U(1)$-gauge field $\nabla$
on some space $X$
has an underlying field strength 2-form $\omega$ is expressed by the
existence of a commuting diagram of smooth stacks of the form
$$
  \raisebox{20pt}{
  \xymatrix{
    & \mathbf{B}U(1)_{\mathrm{conn}}
	\ar[d]^{F_{(-)}} & \mbox{gauge field / differential cocycle}
    \\
    X
	\ar[r]^\omega 
	\ar[ur]^{\nabla}
	&
	\Omega^2_{\mathrm{cl}}(-) & \mbox{ field strength / curvature}\;.
  }
  }
$$
Conversely, if we regard the bottom morphism $\omega$ as given, and regard this 
closed 2-form as a (pre)symplectic form, then a \emph{choice of lift} $\nabla$
in this diagram is a choice of refinement of the 2-form by a circle bundle with connection,
hence the choice of a \emph{prequantum circle bundle} in the language of 
geometric quantization (see for instance section II in \cite{Brylinski} for a review).

\medskip
Applied to the case of Chern-Simons theory this means that a smooth (off-shell)
prequantization of the theory is a choice of dashed morphism in a diagram
of smooth stacks of the form
$$
  \raisebox{20pt}{
  \xymatrix{
    && \mathbf{B}U(1)_{\mathrm{conn}}
	\ar[d]^{F_{(-)}}
    \\
    [\Sigma_2, \mathbf{B}G_{\mathrm{conn}}]
	\ar[rr]_-{\int_{\Sigma_2}\langle F_{(-)},F_{(-)}\rangle}
	\ar@{-->}[urr]
	&&
	\Omega^2_{\mathrm{cl}}(-)~~.
  }
  }
$$
Similar statements apply to on-shell geometric (pre)quantization of Chern-Simons theory,
which has been so successfully applied in the original article
\cite{WittenCS}; see also \cite{GomiCS}.
In summary, this means that in the context of smooth stacks 
the Chern-Simons action functional and its prequantization are as in the following table:

\medskip
\medskip
\hspace{1cm}
\begin{tabular}{|c|c|c|}
  \hline
  {\bf dimension} & {\bf description} & {\bf moduli stack description }
  \\
  \hline
  $k = 3$ & action functional (0-bundle) 
  & $ \exp(i \mathbf{S} (-)) : [\Sigma_3, \mathbf{B}G_{\mathrm{conn}}] 
   \to \;\;\;U(1)_{\;\;\;\;\;\;\;}$
  \\
  \hline
  $k = 2$ & prequantum circle 1-bundle 
  & \hspace{2.0cm}$
  [\Sigma_2, \mathbf{B}G_{\mathrm{conn}}]  
  \to
  \mathbf{B} U(1)_{\mathrm{conn}}
  $
  \\
  \hline
\end{tabular}

\medskip
\medskip
There is a precise sense, discussed in section \ref{CircleBundleModuli} below,
in which a $U(1)$-valued function is a \emph{circle $k$-bundle with connection}
for $k = 0$. If we furthermore regard an ordinary $U(1)$-principal bundle
as a \emph{circle 1-bundle} then this table says that in dimension $k$
Chern-Simons theory appears as a \emph{circle $(3-k)$-bundle with connection}
-- at least for $k = 3$ and $k = 2$.

\medskip
Formulated this way, this should remind one of what is called an
\emph{extended} or \emph{multi-tiered} topological quantum field theory
(formalized and classified in \cite{LurieTQFT}) 
which is the full formalization of \emph{locality}
in the Schr{\"o}dinger picture of quantum field theory. This roughly says that, \emph{after quantization}, an 
$n$-dimensional topological field theory should be a rule that 
to a closed manifold of dimension $k$ assigns an $(n-k)$-categorical
analog of a vector space of quantum states. 
Since ordinary geometric quantization of Chern-Simons theory
assigns to a closed $\Sigma_2$ the vector space of \emph{polarized sections}
(holomorphic sections) of the line bundle associated with the above circle 1-bundle, 
this suggests that there should be an \emph{extended} or \emph{multi-tiered}
refinement of geometric (pre)quantization of Chern-Simons theory,
which to a closed oriented manifold of dimension $0 \leq k \leq n$ assigns
a \emph{prequantum circle $(n-k)$-bundle} 
(bundle $(n-k-1)$-gerbe) on the moduli stack of 
field configurations over $\Sigma_k$. In other words, one expects  a natural morphism 
$[\Sigma_k, \mathbf{B}G_{\mathrm{conn}}] \to \mathbf{B}^{(n-k)} U(1)_{\mathrm{conn}}$
to the moduli $(n-k)$-stack of circle $(n-k)$-bundles with connection
(details on this are given below in section \ref{CircleBundleModuli}).

\medskip
In particular for $k = 0$ and $\Sigma_0$ connected, hence for $\Sigma_0$ the space $*$ consisting of a single point,
we have that the moduli stack of fields on $\Sigma_0$ is the 
\emph{universal} moduli stack itself:
$[*, \mathbf{B}G_{\mathrm{conn}}] \simeq \mathbf{B}G_{\mathrm{conn}}$.
Therefore, a  \emph{fully extended prequantization} of 
3-dimensional $G$-Chern-Simons theory would have to involve a 
\emph{universal characteristic} morphism
$$
  \mathbf{c}_{\mathrm{conn}} : \mathbf{B}G_{\mathrm{conn}}
  \to 
  \mathbf{B}^3 U(1)_{\mathrm{conn}}
$$
of smooth moduli stacks, hence a smooth circle 3-bundle with connection
on the universal moduli stack of $G$-gauge fields. This indeed naturally
exists: an explicit construction is given in \cite{FSS}.
This morphism of smooth higher stacks is a differential refinement of a smooth refinement
of the level itself: forgetting the connections and only remembering the 
underlying (higher) gauge bundles, we still have a morphism of smooth higher stacks
$$
  \mathbf{c}: \mathbf{B}G \to \mathbf{B}^3 U(1)
  \,.
$$
This expression should remind one of the continuous map of topological 
spaces
$$
  c : B G \to B^3 U(1) \simeq K(\mathbb{Z},4)
$$
from the classifying space $B G$ to the Eilenberg-MacLane space $K(\mathbb{Z},4)$, which 
represents the level as a class in integral cohomology $H^4(B G, \mathbb{Z}) \simeq \mathbb{Z}$.
Indeed, there is a canonical topological realization \emph{derived functor} or \emph{$\infty$-functor}
$$
  {\vert-\vert} : \mathbf{H} \to \mathrm{Top}
$$
from smooth higher stacks to topological spaces \cite{cohesive}, 
derived left adjoint to the operation of forming \emph{locally constant higher stacks},
and under this map we have
$$
  \vert \mathbf{c}\vert \simeq c
  \,.
$$
In this sense $\mathbf{c}$ is a \emph{smooth refinement} of $[c] \in H^4(B G, \mathbb{Z})$
and then $\mathbf{c}_{\mathrm{conn}}$ is a further \emph{differential refinement}
of $\mathbf{c}$.

\medskip
However, more is true. Not only  there is an extension of the prequantization of 
3d $G$-Chern-Simons theory to the point, but this also induces 
the extended prequantization in every other dimension by \emph{tracing}:
for $0 \leq k \leq n$ and $\Sigma_k$ a closed and oriented smooth manifold,
there is a canonical morphism of smooth higher stacks of the form
$$
  \exp(2 \pi i \int_{\Sigma_k}(-)) : [\Sigma_k, \mathbf{B}^n U(1)_{\mathrm{conn}}]
  \to 
  \mathbf{B}^{n-k}U(1)_{\mathrm{conn}}\;,
$$
which refines the fiber integration of differential forms from curvature $(n+1)$-forms to their entire prequantum 
circle $n$-bundles (we discuss this below in section \ref{FibInt}).
Since, furthermore, the formation of mapping stacks $[\Sigma_k,-]$ is functorial, this means
that from a morphism $\mathbf{c}_{\mathrm{conn}}$ as above we get for every
$\Sigma_k$ a composite morphism as such:
$$
  \exp(2 \pi i \int_{\Sigma_k} [\Sigma_k, \mathbf{c}_{\mathrm{conn}}])
  :
  \xymatrix{
    [\Sigma_k, \mathbf{B}G_{\mathrm{conn}}]
	\ar[rr]^-{[\Sigma_k, \mathbf{c}_{\mathrm{conn}}]}
	&&
	[\Sigma_k, \mathbf{B}^n U(1)_{\mathrm{conn}}]
	\ar[rrr]^-{\exp(2 \pi i\int_{\Sigma_k}(-))}
	&&&
	\mathbf{B}^{n-k}U(1)_{\mathrm{conn}}
  }
  \,.
$$ 
For 3d $G$-Chern-Simons theory and $k = n = 3$ this composite 
\emph{is} the action functional of the theory. 
This is effectively
the perspective on ordinary Chern-Simons theory amplified in \cite{CJMSW}).
Therefore, for general $k$ we may speak of this as the 
\emph{extended action functional}, with values not in $U(1)$
but in $\mathbf{B}^{n-k}U(1)_{\mathrm{conn}}$.

\medskip
This way we find that the above table, containing the
Chern-Simons action functional together with its prequantum circle 1-bundle,
extends to the following table that goes all the way from dimension 3
down to dimension 0.\\

\vspace{3mm}
\hspace{-1.2cm}
{\small
\begin{tabular}{|c|c|c|c|}
  \hline
  {\bf dim.} & & {\bf prequantum $(3-k)$-bundle} & 
  \\
  \hline
  $k = 0$ & \begin{tabular}{c}differential fractional \\ first Pontrjagin\end{tabular} & 
  $\mathbf{c}_{\mathrm{conn}} : \mathbf{B}G_{\mathrm{conn}} 
  \to \mathbf{B}^3 U(1)_{\mathrm{conn}}$ &
  \cite{FSS}
  \\
  \hline
  $k = 1$ & \begin{tabular}{c} WZW \\ background B-field \end{tabular} &
  $ \xymatrix{ [S^1, \mathbf{B}G_{\mathrm{conn}}]
   \ar[rr]^-{[S^1,\mathbf{c}_{\mathrm{conn}}]} &&
    [S^1, \mathbf{B}^3 U(1)_{\mathrm{conn}}] 
	 \ar[rr]^-{\exp(2 \pi i \int_{S^1}(-))} && \mathbf{B}^2 U(1)_{\mathrm{conn}}
	 } $
	 &
	 \begin{tabular}{c}
	   \cite{InfinityWZW}
	 \end{tabular}
  \\
  \hline
  $k = 2$ & \begin{tabular}{c} off-shell CS \\ prequantum bundle \end{tabular} & 
  $ \xymatrix{ [\Sigma_2, \mathbf{B}G_{\mathrm{conn}}]
   \ar[rr]^-{[\Sigma_2,\mathbf{c}_{\mathrm{conn}}]} &&
    [\Sigma_2, \mathbf{B}^3 U(1)_{\mathrm{conn}}] 
	 \ar[rr]^-{\exp(2 \pi i \int_{\Sigma_2}(-))} && \mathbf{B}U(1)_{\mathrm{conn}}
	 }
  $
   & \cite{cohesive}
  \\
  \hline
  $k = 3$ & \begin{tabular}{c} 3d CS \\ action functional \end{tabular} & 
  $ \xymatrix{[\Sigma_3, \mathbf{B}G_{\mathrm{conn}}]
   \ar[rr]^-{[\Sigma_3,\mathbf{c}_{\mathrm{conn}}]}
     &&
    [\Sigma_3, \mathbf{B}^3 U(1)_{\mathrm{conn}}] 
	 \ar[rr]^-{\exp(2 \pi i \int_{\Sigma_3}(-))} && U(1)
	 }
  $  &
  \cite{FSS}
  \\
  \hline
\end{tabular}  
}

\vspace{5mm}
\noindent For each entry of this table one may compute the \emph{total space} object of the
corresponding prequantum $k$-bundle. This is now in general itself
a higher moduli stack. In full codimension $k = 0$ one finds \cite{ExtPrequant} that 
this is the moduli 2-stack of $\mathrm{String}(G)$-2-connections described in 
\cite{SSSIII, FiorenzaSatiSchreiberII}. This we discuss in section \ref{3dCSSpin} below.

\medskip

It is clear now that this is just the first example
of a general class of theories which we may call
\emph{higher extended prequantum Chern-Simons theories} or just
\emph{$\infty$-Chern-Simons theories}, for short \cite{FS}.
These are defined by a choice of
\begin{enumerate}
  \item a smooth higher group $G$;
  \item a smooth universal characteristic map $\mathbf{c} : \mathbf{B}G \to \mathbf{B}^n U(1)$;
  \item a differential refinement $\mathbf{c}_{\mathrm{conn}} : \mathbf{B}G_{\mathrm{conn}} \to
    \mathbf{B}^n U(1)_{\mathrm{conn}}$.
\end{enumerate}
An example of a 7-dimensional such theory on $\mathrm{String}$-2-form gauge fields
is discussed in \cite{FiorenzaSatiSchreiberI}, given by a differential refinement
of the second fractional Pontrjagin class to a morphism of smooth moduli 7-stacks
$$
  \tfrac{1}{6}(\mathbf{p}_2)_{\mathrm{conn}}
  :
  \mathbf{B}\mathrm{String}_{\mathrm{conn}}
  \to
  \mathbf{B}^7 U(1)_{\mathrm{conn}}
  \,.
$$
We expect that these $\infty$-Chern-Simons theories are part of a general procedure of 
\emph{extended geometric quantization} (\emph{multi-tiered} geometric quantization)
which proceeds in two steps, as indicated in the following table.

\vspace{5mm}
\hspace{-1cm}
{\small
\begin{tabular}{|c|c|c|}
  \hline
  {\bf classical system}
  &
  {\bf geometric prequantization}
  &
  {\bf quantization}
  \\
  \hline
  \begin{tabular}{c}
    char. class $c$ of deg. $(n+1)$
	\\
	with de Rham image $\omega$:
	\\
	invariant polynomial/
	\\
	$n$-plectic form 
  \end{tabular}
  &
  \begin{tabular}{c}
    prequantum circle $n$-bundle
	\\
	on moduli $\infty$-stack of fields
	\\
	$\mathbf{c}_{\mathrm{conn}} : \mathbf{B}G_{\mathrm{conn}} \to \mathbf{B}^n U(1)_{\mathrm{conn}}$
  \end{tabular}
  &
  \begin{tabular}{c}
    extended quantum field theory 
    \\
    $Z_{\mathbf{c}}: \Sigma_k \mapsto
	 \left\{
	   \begin{array}{l}
	    \mbox{polarized sections of }
	   \\
	    \mbox{prequantum $(n-k)$-bundle}
		\\
	   \exp(2 \pi i \int_{\Sigma_k}[\Sigma_k, \mathbf{c}_{\mathrm{conn}}])
	   \end{array}
	 \right\}
	$
  \end{tabular}
  \\
  \hline
\end{tabular}
}

\vspace{5mm}
\noindent Here we are concerned with the first step, the 
discussion of $n$-dimensional Chern-Simons gauge theories
(higher gauge theories)
in their incarnation as prequantum circle $n$-bundles on their
universal moduli $\infty$-stack of fields. 
A dedicated discussion of 
higher geometric prequantization, including the discussion of 
higher Heisenberg groups, higher quantomorphism groups, 
higher symplectomorphisms and higher
Hamiltonian vector fields,
and their action on higher prequantum spaces of states by higher
Heisenberg operators, is in 
\cite{ExtPrequant}, see also \cite{Goettingen}. As shown there, plenty of interesting
physical information turns out to be captured by extended
prequantum $n$-bundles. For instance, if one regards the
B-field in type II superstring backgrounds as a prequantum 2-bundle, then 
its extended prequantization knows all about
twisted Chan-Paton bundles, the Freed-Witten anomaly 
cancellation condition for type II superstrings on D-branes and the associated anomaly line bundle
on the string configuration space.

\vspace{3mm}
Generally, all higher Chern-Simons theories that arise from extended action functionals this way 
enjoy a collection of  very good formal properties. Effectively, they may be understood as constituting examples of a fairly extensive generalization of the \emph{refined} Chern-Weil homomorphism with coefficients in \emph{secondary characteristic cocycles}. Moreover, we have shown previously  that the class of theories arising this way is large and contains not only several  familiar theories, some of which are not traditionally recognized to be of this good form, but also contains various new QFTs that turn out to be of interest  within known contexts, e.g. \cite{FiorenzaSatiSchreiberI,FiorenzaSatiSchreiberII}.  
Here we further enlarge the pool of such examples. 

\medskip

Notably, here we are concerned with examples arising from \emph{cup product}
characteristic classes, hence of $\infty$-Chern-Simons theories which are decomposable
or non-primitive secondary characteristic cocyles, obtained by cup-ing more
elementary characteristic cocycles. 
The most familiar example of these is again ordinary 3-dimensional Chern-Simons theory,
but now for the non-simply connected gauge group $U(1)$. In this case 
a gauge field configuration in $\mathbf{H}(\Sigma_3, \mathbf{B}U(1)_{\mathrm{conn}})$
is not necessarily given by a globally defined 1-form $A \in \Omega^1(\Sigma_3)$.
Instead it may have a non-vanishing ``instanton number'', the first Chern class
of the underlying circle bundle.
Only if that happens to vanish the value of the action functional is  given again by the 
simple expression $\exp(2 \pi i k \int_{\Sigma_3} A \wedge d A)$ as before.
Yet, in view of the above discussion, we are naturally led to conjecture there should be a circle 3-bundle (bundle 2-gerbe) with connection over $\Sigma_3$, depending naturally on the $U(1)$-gauge field, whose connection 3-form reduces to $A \wedge d A$ in  the special case of vanishing first Chern class. This would imply that the expression $\exp(2 \pi i k \int_{\Sigma_3} A \wedge d A)$ is actually the special instance of an
action functional which is naturally defined in the general situation: the \emph{volume holonomy} of a 3-bundle with connection (see section \ref{VolumeHolonomy} below). And it is indeed so: this circle 3-bundle with connection is the one induced by  the \emph{differential cup square}
of the gauge field with itself. As a fully extended action functional this is 
a natural morphism of higher moduli stacks of the form
$$
  (-)^{\cup_{\mathrm{conn}}^2} 
    : 
  \mathbf{B}U(1)_{\mathrm{conn}}
  \to
  \mathbf{B}^3 U(1)_{\mathrm{conn}}
  \,.
$$
We explain this below in section \ref{CircleBundleModuli}.
This morphism of higher stacks is characterized by the
fact that under forgetting the differential refinement and then taking geometric realization as
before, it is exhibited as a differential refinement of the ordinary 
cup square on Eilenberg-MacLane spaces 
$$
  (-)^{\cup^2} : K(\mathbb{Z},2) \to K(\mathbb{Z}, 4)
$$
and hence on ordinary integral cohomology.
By the above general procedure, we obtain a well-defined action functional
for $3d$ $U(1)$-Chern-Simons theory by the expression
$$
  \exp(2 \pi i \int_{\Sigma_3} [\Sigma_3, (-)^{\cup_{\mathrm{conn}}^2}])
  :
  [\Sigma_3, \mathbf{B}U(1)_{\mathrm{conn}}]
  \to
  U(1)
$$
and this is indeed the action functional of the familiar $3d$ $U(1)$-Chern-Simons theory,
also on non-trivial instanton sectors, see section \ref{3dU1CS} below.

\medskip
In terms of this general construction, there is nothing particular to the
low degrees here, and we have generally a differential cup square
/ extended action functional for a $(4k+3)$-dimensional Chern-Simons theory
$$
 (-)^{\cup_{\mathrm{conn}}^2} : 
 \mathbf{B}^{2k+1}U(1)_{\mathrm{conn}}
 \to
 \mathbf{B}^{4k+3}U(1)_{\mathrm{conn}}
$$
for all $k \in \mathbb{N}$, which induces an ordinary action functional
$$
  \exp(2 \pi i \int_{\Sigma_3} [\Sigma_{4k+3}, (-)^{\cup_{\mathrm{conn}}^2}])
  :
  [\Sigma_{4k+3}, \mathbf{B}^{4k+3}U(1)_{\mathrm{conn}}]
  \to
  U(1)
$$
on the moduli $(2k+1)$-stack of $U(1)$-$(2k+1)$-form gauge fields, 
given by the fiber integration on differential cocycles over the 
differential cup product of the fields.
This is discussed in section \ref{4k+3} below.

\vspace{3mm}
Forgetting the smooth structure on $[\Sigma_{4k+3}, \mathbf{B}^{2k+1}U(1)_{\mathrm{conn}}]$ 
and passing to gauge equivalence classes of fields yields to the cohomology group
$H^{2k+2}_{\mathrm{conn}}(\Sigma_{4k+3})$. This is what is known as 
\emph{ordinary differential cohomology} and is equivalent to the group of
\emph{Cheeger-Simons differential characters}. An excellent review on the subject with further pointers to the literature is 
in \cite{HopkinsSinger}.
That gauge equivalence classes of 
higher degree $U(1)$-gauge fields are to be regarded as
differential characters and that the $(4k+3)$-dimensional 
$U(1)$-Chern-Simons action functional on these is given by the 
fiber integration of the cup product is discussed in 
detail in \cite{FP}, also mentioned notably in \cite{Witten96,Witten98}
and expanded on in \cite{Freed}. Effectively this observation led to the general
development of differential cohomology in \cite{HopkinsSinger}. 
Or rather, the main 
theorem there concerns a shifted version of the functional of
$(4k+3)$-dimensional $U(1)$-Chern-Simons theory which allows to further divide it
by 2. We have discussed the refinement of this to smooth moduli stacks of fields in 
\cite{FiorenzaSatiSchreiberII}. 
These developments were largely motivated by the 
relation of $(4k+3)$-dimensional $U(1)$-Chern-Simons theories 
 to theories of self-dual
forms in dimension $(4k+2)$ via holography (see \cite{BM} for a survey and references):
a choice of conformal structure on a $\Sigma_{4k+2}$ naturally induces
a polarization of the prequantum 1-bundle of the $(4k+3)$-dimensional theory,
and for every such a choice the resulting space of quantum states is naturally
identified with the corresponding space of conformal blocks (correlators) for the
$(4k+2)$-dimensional theory. 

\medskip
Therefore we see that looking at the differential cup square on smooth higher moduli 
stacks as an extended action functional yields the following table of familiar notions
under extended geometric prequantization.

\vspace{5mm}
\hspace{-2cm}
{\small
\begin{tabular}{|c|c|c|}
  \hline
  {\bf dim.} & & {\bf prequantum $(4k+3-d)$-bundle} 
  \\
  \hline
  $d = 0$ & \begin{tabular}{c}differential cup square\end{tabular} & 
  $(-)^{\cup_{\mathrm{conn}}^2} : \mathbf{B}^{2k+1} U(1)_{\mathrm{conn}} 
  \to \mathbf{B}^{4k+3} U(1)_{\mathrm{conn}}$ 
  \\
  \hline
   $\vdots$ & $\vdots$ & $\vdots$
   \\
   \hline
   $d = 4k+2 $ & \begin{tabular}{c} ``pre-conformal blocks'' of \\ self-dual $2k$-form field \end{tabular}
   & 
   $ 
    \xymatrix{[\Sigma_{4k+2}, \mathbf{B}^{2k+1}U(1)_{\mathrm{conn}}]
    \ar[rr]^-{[\Sigma_{4k+2},(-)^{\cup_{\mathrm{conn}}^2}]}
      &&
      [\Sigma_{4k+2}, \mathbf{B}^{2k+1} U(1)_{\mathrm{conn}}]  
 	  \ar[rr]^-{\exp(2 \pi i \int_{\Sigma_{4k+2}}(-))} && \mathbf{B}U(1)_{\mathrm{conn}}
	  }
    $   
   \\
  \hline
  $d = 4k+3$ & \begin{tabular}{c} CS \\ action functional \end{tabular} & 
  $ 
   \xymatrix{[\Sigma_{4k+3}, \mathbf{B}^{2k+1}U(1)_{\mathrm{conn}}]
   \ar[rr]^-{[\Sigma_{4k+3},(-)^{\cup_{\mathrm{conn}}^2}]}
     &&
     [\Sigma_{4k+3}, \mathbf{B}^{2k+1} U(1)_{\mathrm{conn}}] 
	 \ar[rr]^-{\exp(2 \pi i \int_{\Sigma_{4k+3}}(-))} && U(1)
	 }
  $ 
  \\
  \hline
\end{tabular}  
}

\vspace{5mm}
This fully extended prequantization of $(4k+3)$-dimensional $U(1)$-Chern-Simons theory
allows for instance to ask for and
compute the total space of the prequantum circle $(4k+3)$-bundle. 
This is now itself
a higher smooth moduli stack. For $k = 0$, hence in $3d$-Chern-Simons theory 
it turns out to be the moduli 
2-stack of \emph{differential T-duality structures}.
This we discuss in section \ref{3dU1CS} below. 

\medskip

More generally, as the name suggests, the \emph{differential cup square} is a specialization
of a general \emph{differential cup product}. As a morphism of bare homotopy types
this is the familiar cup product of Eilenberg-MacLane spaces
$$
  (-)\cup (-) : K(\mathbb{Z},p+1) \times K(\mathbb{Z}, q+1) \to K(\mathbb{Z}, p+q+2)
$$
for all $p,q \in \mathbb{N}$.
Its smooth and then its further differential refinement is a morphism of smooth higher stacks of the
form
$$
  (-)\cup_{\mathrm{conn}} (-)
  :
  \mathbf{B}^p U(1)_{\mathrm{conn}}
  \times
  \mathbf{B}^1 U(1)_{\mathrm{conn}}
  \to
  \mathbf{B}^{p+q+1}U(1)_{\mathrm{conn}}
  \,,
$$
which, as before, we describe below in section \ref{CircleBundleModuli}.

\medskip
By the above discussion this now defines a higher extended gauge theory
in dimension $p+q+1$ of \emph{two different} species of higher 
$U(1)$-gauge fields.
One example of this is the \emph{higher electric-magnetic coupling anomaly} in higher
(Euclidean) $U(1)$-Yang-Mills theory, as explained in section 2 of \cite{Freed}. 
In this example one considers on an oriented smooth manifold $X$ 
(here assumed to be closed, for simplicity) 
an \emph{electric current}
$(p+1)$-form $J_{\mathrm{el}} \in \Omega^{p+1}_{\mathrm{cl}}(X)$
and a \emph{magnetic current} $(q+1)$-form 
$J_{\mathrm{mag}} \in \Omega^{q+1}_{\mathrm{cl}}(X)$, such that
$p + q = \mathrm{dim}(X)$ is the dimension of $X$.
A \emph{prequantization}
of these current forms in our sense of higher geometric quantization 
\cite{ExtPrequant}
is a lift to differential cocycles 
$$
  \raisebox{20pt}{
  \xymatrix{
    && \mathbf{B}^p U(1)_{\mathrm{conn}}
	\ar[d]^{F_{(-)}}
    \\
    X
	\ar@{-->}[urr]^{\widehat J_{\mathrm{el}}}
	\ar[rr]^{J_{\mathrm{el}}} 
	&&
	\Omega^{p+1}_{\mathrm{cl}}(-)~,
  }
  }
  \;\;\;~~~~~~~~~
    \raisebox{20pt}{
  \xymatrix{
    && \mathbf{B}^q U(1)_{\mathrm{conn}}
	\ar[d]^{F_{(-)}}
    \\
    X
	\ar@{-->}[urr]^{\widehat J_{\mathrm{mag}}}
	\ar[rr]^{J_{\mathrm{mag}}} 
	&&
	\Omega^{q+1}_{\mathrm{cl}}(-)
  }
  }
$$
and here this amounts to electric and magnetic \emph{charge quantization}, respectively:
the electric charge is the universal integral cohomology class of 
the circle $p$-bundle underlying the electric charge cocycle: its
\emph{higher Dixmier-Doudy class}
$[\widehat J_{\mathrm{el}}] \in H^{p+1}_{\mathrm{cpt}}(X, \mathbb{Z})$
(see section \ref{VolumeHolonomy} below); and similarly
for the magnetic charge.
Accordingly, the higher mapping stack 
$[X, \mathbf{B}^p U(1)_{\mathrm{comm}}\times \mathbf{B}^q U(1)_{\mathrm{conn}}]$
is the smooth higher moduli stack of charge-quantized electric and magnetic currents on 
$X$. Recall that this assigns to a smooth test manifold $U$ the higher groupoid
whose objects are $U$-families of pairs of charge-quantized electric and magnetic 
currents.
As \cite{Freed} explains in terms of such families of fields, the 
$U(1)$-principal bundle with connection which in the present formulation is 
the one described by the morphism
$$
 \nabla_{\mathrm{an}}
 :=
  \exp(2 \pi i \int_X [X, (-)\cup_{\mathrm{conn}}(-)])
  :
  [X, \mathbf{B}^p U(1)_{\mathrm{comm}}\times \mathbf{B}^q U(1)_{\mathrm{conn}}]
  \to
  \mathbf{B}U(1)_{\mathrm{conn}}
$$
is the \emph{anomaly line bundle} of $(p-1)$-form electromagnetism on $X$
in the presence of electric and magnetic currents subject to charge quantization.
In the language of $\infty$-Chern-Simons theory as above, this is
equivalently the off-shell prequantum 1-bundle of the higher cup
product Chern-Simons theories on pairs of $U(1)$-gauge $p$-form and $q$-form fields.

\medskip
It is customary to call \emph{local anomaly} the curvature of the anomaly bundle, and \emph{global anomaly} its 
 holonomy. In our contex the holonomy of $\nabla_{\mathrm{an}}$ is
(discussed again in section \ref{VolumeHolonomy} below) the morphism
$$
 \mathrm{hol}(\nabla_{\mathrm{an}})
 =
  \exp(2 \pi i \int_{S^1} [S^1,  \nabla_{\mathrm{an}}])
  :
  [S^1, [X, \mathbf{B}^p U(1)_{\mathrm{comm}}\times \mathbf{B}^q U(1)_{\mathrm{conn}}]]
  \to 
  U(1)
$$
from the loop space of the moduli stack of fields to $U(1)$.
By the characteristic universal propery of higher mapping stacks, 
together with the ``Fubini-theorem''-property of fiber integration, this is equivalently
the morphism
$$
  \exp(2 \pi i \int_{X \times S^1} [X \times S^1, (-)\cup_{\mathrm{conn}}(-)])
  :
  [X\times S^1, \mathbf{B}^p U(1)_{\mathrm{comm}}\times \mathbf{B}^q U(1)_{\mathrm{conn}}]
  \to 
  U(1)
  \,.
$$
But from the point of view of $\infty$-Chern-Simons theory  this is the \emph{action functional}
of the higher cup product Chern-Simons field theory induced by $\cup_{\mathrm{conn}}$. 
The situation is now summarized in the following table.

\vspace{3mm}
{\small
\hspace{-2cm}
\begin{tabular}{|c|c|c|}
  \hline
  {\bf dim.} & {\bf description} & {\bf prequantum $(\mathrm{dim}(X)+1-k)$-bundle} 
  \\
  \hline
  $k = 0$ & \begin{tabular}{c}differential cup product\end{tabular} & 
  $(-)^{\cup_{\mathrm{conn}}^2} : 
  \mathbf{B}^{p} U(1)_{\mathrm{conn}} 
  \mathbf{B}^{q} U(1)_{\mathrm{conn}}
  \to \mathbf{B}^{d+2} U(1)_{\mathrm{conn}}$ 
  \\
  \hline
   $\vdots$ & $\vdots$ & $\vdots$
    \\
   \hline
   $k = \mathrm{dim}(X) $ & \begin{tabular}{c} higher E/M-charge \\ anomaly line bundle \end{tabular}
   & 
   $ 
    \xymatrix{
	  \exp(2 \pi i \int_{X} [X,(-)\cup_{\mathrm{conn}} (-)])
	  :
	  [X, \mathbf{B}^p U(1)_{\mathrm{conn}} \times \mathbf{B}^q U(1)_{\mathrm{conn}}]
	  \ar[r]
	  &
	  \mathbf{B}U(1)_{\mathrm{conn}}
	}
    $   
   \\
  \hline
  $k = \mathrm{dim}(X)+1$ & \begin{tabular}{c} global anomaly \end{tabular} & 
   $ 
    \xymatrix{
	  \exp(2 \pi i \int_{X\times S^1} [X\times S^1,(-)\cup_{\mathrm{conn}} (-)])
	  :
	  [X \times S^1, \mathbf{B}^p U(1)_{\mathrm{conn}} \times \mathbf{B}^q U(1)_{\mathrm{conn}}]
	  \to
	  &
	  \hspace{-1.2cm}
	  U(1)
	}
    $   
  \\
  \hline
\end{tabular}  
}

\vspace{3mm}
These higher electric-magnetic anomaly Chern-Simons theories
are of particular interest when the higher electric/magnetic currents
are themselves induced by other gauge fields. Namely if we have any two 
$\infty$-Chern-Simons theories given by extended action functionals
$\mathbf{c}^{(1)}_{\mathrm{conn}} : \mathbf{B}G_{1,\mathrm{conn}} \to \mathbf{B}^{p}U(1)_{\mathrm{conn}}$
and 
$\mathbf{c}^{(2)}_{\mathrm{conn}} : \mathbf{B}G_{2,\mathrm{conn}} \to \mathbf{B}^{q}U(1)_{\mathrm{conn}}$,
respectively, then composition of these with the differential cup product yields an
extended action functional of the form
$$
  \mathbf{c}_{\mathrm{conn}}^{(1)}
  \cup_{\mathrm{conn}}
  \mathbf{c}_{\mathrm{conn}}^{(2)}
  :
  \xymatrix{
    \mathbf{B}(G_1 \times G_2)_{\mathrm{conn}}
	\ar[rr]^-{(\mathbf{c}^{(1)}_{\mathrm{conn}}, \mathbf{c}^{(2)}_{\mathrm{conn}})}
	&&
	\mathbf{B}^p U(1)_{\mathrm{conn}}
	\times
	\mathbf{B}^q U(1)_{\mathrm{conn}}
	\ar[rr]^-{(-)\cup_{\mathrm{conn} }(-)}
	&&
    \mathbf{B}^{p+q+1} U(1)_{\mathrm{conn}}
  }
  \,,
$$
which describes extended topological field theories in dimension $p+q+1$ on
two species of (possibly non-abelian, possibly higher) gauge fields, 
or equivalently describes the higher electric/magnetic anomaly for
higher electric fields induced by $\mathbf{c}^{(1)}$ and higher magnetic
fields induced by $\mathbf{c}^{(2)}$.

\medskip
For instance for heterotic string backgrounds $\mathbf{c}^{(2)}_{\mathrm{conn}}$
is the differential refinement of the first fractional Pontrjagin class
$\tfrac{1}{2}p_1 \in H^4(B \mathrm{Spin}, \mathbb{Z})$
\cite{SSSIII, FSS}, i.e., one has
$$
  \mathbf{c}^{(2)}_{\mathrm{conn}}
  =
  \widehat J^{\mathrm{NS5}}_{\mathrm{mag}}
  =
  \tfrac{1}{2}(\mathbf{p}_1)_{\mathrm{conn}}
  :
  \mathbf{B}\mathrm{Spin}_{\mathrm{conn}}
  \to
  \mathbf{B}^3 U(1)_{\mathrm{conn}}
  \,,
$$
formalizing the \emph{magnetic NS5-brane charge} needed to cancel the 
fermionic anomaly of the heterotic string by the Green-Schwarz mechanism.
It is curious to observe, going back to the very first 
example of this introduction, that this $\widehat J_{\mathrm{mag}}^{\mathrm{NS5}}$
is at the same time the extended action functional for 3d $\mathrm{Spin}$-Chern-Simons theory.

\medskip

Still more generally, we may differentially cup this way more than two 
factors. Examples for such \emph{higher order cup product theories}
appear in 11-dimensional supergravity. We discuss this in section \ref{Cubic examples}.
Notably plain classical 11d supergravity contains an 11-dimensional cubic
Chern-Simons term whose extended action functional in our sense is
$$
  (-)^{\cup_{\mathrm{conn}}^3}
  :
  \mathbf{B}^3 U(1)_{\mathrm{conn}}
  \to
  \mathbf{B}^{11} U(1)_{\mathrm{conn}}
  \,.
$$
Here for $X$ the 11-dimensional spacetime, a field in $[X, \mathbf{B}^3 U(1)]$
is a first approximation to a model for the \emph{supergravity $C$-field}.
If the differential cocycle happens to be given by a globally defined 3-form
$C$, then the induced action functional
$\exp(2 \pi i \int_{X} [X, (-)^{\cup_{\mathrm{conn}}^3}])$ sends this 
to the element in $U(1)$ given by the familiar expression
$$
  \exp(2 \pi i \int_{X} [X, (-)^{\cup_{\mathrm{conn}}^3}])
    : 
	C
	\mapsto
  \exp(2 \pi i \int_{X} C \wedge dC \wedge d C)
   \,.
$$
More precisely this model receives quantum corrections from an 11-dimensional
Green-Schwarz mechanism. In \cite{FiorenzaSatiSchreiberI,FiorenzaSatiSchreiberII} 
we have discussed in detail the relevant corrections to the above extended
cubic cup-product action functional on the moduli stack of flux-quantized
$C$-field configurations.

\vspace{3mm}
This paper is meant to be of interest to both mathematicians and theoretical/mathematical physicists. 
It provides some basic constructions and variations on theories that are 
familiar to the former,
and illustrates this with reduction to explicit examples familiar to the latter.
Our aim is to show and illustrate by further classes of interesting examples
how Chern-Weil theory interpreted in 
higher geometry, i.e.,  \emph{$\infty$-Chern-Simons theory}, usefully
unifies a wealth of structures that are of interest both in themselves as well
as for the role they play in quantum field theory and string theory.
A more general
and encompassing 
discussion should appear in \cite{ExtPrequant, FS, InfinityWZW}.

\section{General theory}
\label{GeneralTheory}

In this section we describe the  general formal definition and construction of
higher extended cup-product Chern-Simons theories defined on their full higher
moduli stacks of fields. This provides the conceptual and theoretical basis for the
discussion of the examples below in section \ref{ExamplesAndApplications}.

\subsection{Fields as smooth higher stacks}
\label{FieldsAsSmoothHigherStacks}
We briefly indicate the context of \emph{smooth higher stacks}
(equivalently: \emph{smooth $\infty$-groupoids} or \emph{smooth homotopy types})
in which we place our discussions of differential cohomology and 
extended action functionals. We initiated this approach in 
\cite{SSSIII} (with an unpublished precursor set of notes \cite{nactwist},
presented at \cite{OberwolfachTalk}), 
and so the reader can find more detailed surveys with emphasis 
on different aspects in the series of papers
 \cite{SchreiberSkoda, FSS, FRS, FiorenzaSatiSchreiberI,
FiorenzaSatiSchreiberII,  FiorenzaSatiSchreiberCS,
NSSa, NSSb}. A comprehesive account is in \cite{cohesive};
an introductory lecture series with emphasis on applications to string theory is in
\cite{Lectures}. 
The basic idea has then also been propagated in 
\cite{Hopkins}\footnote{We are grateful to 
Alexander Kahle for pointing out this talk to us at \emph{String-Math 2012}.}, 
together with making explicit this is the context in which 
the seminal article \cite{HopkinsSinger} was eventually meant to be considered.
This section should serve
to fix our notation and terminology for the present purpose, 
and to give the reader unfamiliar
with the details a quick idea of the conceptual background.
\par
A field theory associates to a spacetime $\Sigma$ a whole hierarchy of objects: fields, gauge transformations between fields, gauge transformations of gauge transformations, etc. In mathematical terms, these data define an $\infty$-categories, 
where the objects are the fields, the 1-morphisms are the gauge transformations, the 2-morphisms are the gauge of gauge transformations, and so on. The $\infty$-category of fields and gauge transformations is of a quite special kind: all $k$-morphisms are invertible for $k\geq 1$. One refers to this by saying that  the $\infty$-category of fields is an $\infty$-groupoid. From a combinatorial point of view, these can be seen as particular simplicial sets, known as \emph{Kan complexes}. Letting the spacetime $\Sigma$ vary, the above can be expressed by saying that the
field content of a field theory is an assignment
\[
\mathbf{Fields}:\mathrm{Smooth\ Manifolds}\to \infty\text{-groupoids}.
\]
Fields (and their gauge transformations) can be restricted to sub-regions of spacetime and, more generally, they can 
be pulled back along smooth maps between different spacetimes: the assignment 
$\mathbf{Fields}$ is a \emph{simplicial presheaf} on the site of smooth manifolds taking values in Kan complexes.
\par
The fact that fields and gauge transformations can be restricted to sub-regions of spacetime (in particular to open regions) allows one to speak of \emph{local data} for fields. The familiar fact that a field theory can be completely described in terms of local data then becomes a distinguished feature of the the assignment 
$\mathbf{Fields}$; it is not just a simplicial presheaf, but it is a \emph{simplicial sheaf} (or, equivalently, an \emph{$\infty$-sheaf} or \emph{$\infty$-stack}). Moreover, since every construction which can be expressed in terms of local charts, is a construction only involving \emph{Cartesian spaces}, i.e., 
those smooth manifolds diffeomorphic to $\mathbb{R}^n$ for some $n$, we see that fields are equivalently $\infty$-stacks on the site of Cartesian spaces.
\par
Actually, the correct formalization of the intuitive notion of an $\infty$-stack on the site of smooth manifolds as something which is equivalent to its local data requires a bit of work, which we now briefly recall. Details can be found in \cite{cohesive}. 
We write $[ \mathrm{CartSp}^{\mathrm{op}}, \mathrm{sSet}]$ for the
category whose objects are simplicial presheaves over Cartesian spaces, and whose morphisms are
natural transformations between them. Since Kan complexes are precisely the 
fibrant objects in the standard model category structure on simplicial sets, we write 
$[ \mathrm{CartSp}^{\mathrm{op}}, \mathrm{sSet}_{\mathrm{fib}}]$ for the subcategory of 
presheaves taking values in Kan complexes.
We say a morphism $f : X \to Y$ of Kan-complex valued presheaves is a 
\emph{local homotopy equivalence} if it is \emph{stalkwise} a homotopy
equivalence of Kan complexes, hence if for every manifold $U$ and every
point $x \in U$ there is a neighbourhood $x \in  U_x \subset U$ such that 
$f(U_x) : X(U_x) \to Y(U_x)$ is a homotopy equivalence of Kan complexes. 
We then write 
$$
  \mathbf{H} := \mathrm{Sh}_\infty( \mathrm{CartSp})\cong  \mathrm{Sh}_\infty( \mathrm{SmoothMfd}) 
$$
for the simplicial category which is universal with respect to the property that
local homotopy equivalences in $[ \mathrm{CartSp}^{\mathrm{op}}, \mathrm{sSet}_{\mathrm{fib}}]$
become actual homotopy equivalences. For $X$ and $A$ any two simplicial presheaves, we 
write $\mathbf{H}(X,A)$ for the resulting $\infty$-groupoid of morphisms between them, and call it the \emph{cocycle $\infty$-groupoid} of $A$-cocycles on $X$.
This construction is called the \emph{simplicial localization} of the category of simplicial 
presheaves at the local homotopy equivalences, and
$\mathbf{H}$ is called the \emph{$\infty$-topos}
of \emph{$\infty$-stacks} over $ \mathrm{CartSp}$. This is the context for higher geometry modeled
on $ \mathrm{CartSp}$. The existence of good open covers of smooth $n$-dimensional manifolds, i.e., of open covers $\mathcal{U}=\{U_\alpha\}$ such that all the $U_\alpha$'s as well as their multiple intersections $U_{\alpha_1\dots\alpha_k}=U_{\alpha_1}\cap\cdots \cap U_{\alpha_k}$ are diffeomorphic to $\mathbb{R}^n$ (when nonempty), ensures that every local construction on smooth manifolds can be equivalently expressed entirely in terms of Cartesian spaces, giving the equivalence $\mathrm{Sh}_\infty( \mathrm{CartSp})\cong  \mathrm{Sh}_\infty( \mathrm{SmoothMfd})$.

\subsection{From local to global: $\infty$-stackification}
\label{Stackification}
Since $\mathbf{H}$ is a localization of $[ \mathrm{CartSp}^{\mathrm{op}}, \mathrm{sSet}]$, every simplicial presheaf $\mathcal{F}ields$ in $[ \mathrm{CartSp}^{\mathrm{op}}, \mathrm{sSet}]$ has a corresponding image $\mathbf{Fields}$ in 
$\mathbf{H}$, called its \emph{$\infty$-stackification}. Unwinding the definitions one finds the following recipe for describing the $\infty$-groupoid $\mathbf{Fields}(\Sigma)$ for a given smooth manifold $\Sigma$. To begin with, since the presheaf $\mathcal{F}ields$ is a functor on $\mathrm{CartSp}^{\mathrm{op}}$ with values in simplicial sets for 
any Cartesian space $U$, we have a simplicial set $\mathcal{F}ields(U)$, with natural restriction maps $\mathcal{F}ields(U)\to \mathcal{F}ields(V)$ along inclusions $V\hookrightarrow U$ of Cartesian spaces. Let us denote the set of its $k$-simplices by 
$\mathcal{F}ields(U)_k$. Next, fix a good open cover $\mathcal{U}$ of $\Sigma$.   A $\mathbf{Fields}$-cocycle on $\Sigma$ is then given by the following local data:
\begin{itemize}
\item 0-simplices $\phi_\alpha\in \mathcal{F}ields(U_\alpha)_0$ for any $\alpha$;
\item 1-simplices $\phi_{\alpha\beta}\in \mathcal{F}ields(U_{\alpha\beta})_1$ for any $\alpha,\beta$, whose boundary 0-simplices are the restrictions of  $\phi_\alpha$ and $\phi_\beta$ to $U_{\alpha\beta}$
\item 2-simplices $\phi_{\alpha\beta\gamma}\in \mathcal{F}ields(U_{\alpha\beta\gamma})_2$ for any $\alpha,\beta,\gamma$, whose boundary 1-simplices are the restrictions of  $\phi_{\alpha\beta}$, $\phi_{\beta\gamma}$ and $\phi_{\gamma\alpha}$ to $U_{\alpha\beta\gamma}$;
\item and so on.
\end{itemize}
The above description only gives the objects (or 0-morphisms) of the $\infty$-groupoid $\mathbf{Fields}(\Sigma)$ of $\mathbf{Fields}$-cocycles on $\Sigma$. The description of morphisms is straightforward: if $\phi$ and $\phi'$ are cocycles, then a 1-morphism $\psi$ between them is the data of
 \begin{itemize}
\item 1-simplices $\psi_\alpha\in \mathcal{F}ields(U_\alpha)_1$ for any $\alpha$, whose boundary 
0-simplices are $\phi_\alpha$ and ${\phi'}_\alpha$, respectively;
\item ``squares'' $\psi_{\alpha\beta}$ (to be thought of as pairs of 2-simplices in $\mathcal{F}ields(U_{\alpha\beta})_2$ with a common edge) whose boundary 1-simplices are as in the following diagram
 \[
 \xymatrix{
 \phi_\alpha\bigr\vert_{U_{\alpha\beta}}\ar[r]^{\psi_\alpha}\ar[d]_{\phi_{\alpha\beta}}&  {\phi'}_\alpha\bigr\vert_{U_{\alpha\beta}}\ar[d]^{{\phi'}_{\alpha\beta}}\\
  \phi_\beta\bigr\vert_{U_{\alpha\beta}}\ar[r]^{\psi_\beta}&  {\phi'}_\beta\bigr\vert_{U_{\alpha\beta}}
  \ar@{=>}_{\psi_{\alpha\beta}}(8,-9);(13,-5)
}
\]
\item and so on.
\end{itemize}
Similarly, one describes $k$-morphisms for any $k\geq 1$.
\medskip

\begin{example}\label{example.manifold}
Let $M$ be a smooth manifold. Then $M$ induces a presheaf of Kan 
complexes on $\mathrm{CartSp}$ mapping a Cartesian space $U$ to the set (i.e., to the 0-groupoid) of smooth functions from $U$ to $M$. We want to describe the associated $\infty$-stack $\mathbf{M}$. To do so, we need to say what 
the $\infty$-groupoid associated to a smooth manifold $\Sigma$ is. 
By the above recipe, once a good open cover $\mathcal{U}$ of $\Sigma$ has been fixed, the objects of this $\infty$-groupoid are given by local maps $\phi_\alpha:U_\alpha\to M$ such that $\phi_\alpha\bigr\vert_{U_{\alpha\beta}}=\phi_\beta\bigr\vert_{U_{\alpha\beta}}$ for any $\alpha,\beta$. In other words, the objects of the $\infty$-groupoid of $\mathbf{M}$-valued-cocycles on $\Sigma$ are just smooth maps form $\Sigma$ to $M$. Also, it is immediate to see from the recipe above that the only morphisms between the objects in $\mathbf{M}(\Sigma)$ are the identities. In other words, $\mathbf{M}$ is nothing but the image of $M$ via the Yoneda lemma:
\[
\mathbf{M}:\Sigma\mapsto C^\infty(\Sigma,M).
\]
By this reasoning, we will identify a smooth manifold $M$ and the $\infty$-stack $\mathbf{M}$ it defines and will denote them both by the symbol $M$. Note that by the Yoneda lemma we have a natural equivalence
\[
\mathbf{Fields}(M)\simeq \mathbf{H}(M,\mathbf{Fields})
\]
for any smooth $\infty$-stack $\mathbf{Fields}$ and any smooth manifold $M$ (where on the right $M$ is identified with the $\infty$-stack it defines).
 \end{example}
 
 \begin{example}
For any nonnegative integer $n$, let $\Omega^n$ be the simplicial presheaf which associates with any Cartesian space $U$ the set $\Omega^n(U;\mathbb{R})$ of degree $n$ smooth differential forms with real coefficients on $U$ (seen as a trivial groupoid). As in example \ref{example.manifold}, the presheaf $\Omega^n$ is already an $\infty$-stack and the $\infty$-groupoid of $\Omega^n$-cocycles on a manifold $\Sigma$ is equivalent to the set $\Omega^n(\Sigma)$ of  of degree $n$ smooth differential forms with real coefficients on $\Sigma$. The same considerations apply to the simplicial presheaf $\Omega^n_{\mathrm{cl}}$ of closed $n$-forms.
\end{example}
 
\begin{example}
Let $G$ be a Lie group $G$. It defines a presheaf of Kan 
complexes which sends a test manifold $U$ to the 1-groupoid $*//C^\infty(U,G)$ with a single object $*$
and the (discrete) group of smooth functions $C^\infty(U,G)$ as morphisms from that
object to itself. One denotes by $\mathbf{B}G \in \mathbf{H}$ the corresponding
smooth stack. For $\Sigma$ a smooth manifold with  a good open cover $\mathcal{U}$, 
an object of $\mathbf{B}G(\Sigma)\simeq  \mathbf{H}(\Sigma, \mathbf{B}G)$ is given by the following set of data:
\begin{itemize}
\item  smooth functions $g_{\alpha\beta}:U_{\alpha\beta}\to G$ for any $\alpha,\beta$, such that  $g_{\alpha\beta}g_{\beta\gamma}g_{\gamma\alpha}=1$ on $U_{\alpha\beta\gamma}$.
\end{itemize}
These are manifestly the data defining a $G$-principal bundle on $\Sigma$. Also, it is immediate to check that morphisms in $\mathbf{B}G(\Sigma)$ are precisely isomorphisms of $G$-principal bundles. This means that
$\mathbf{B}G(\Sigma)$ is equivalent to the groupoid $G \mathrm{Bund}(\Sigma)$ of $G$-principal bundles and 
smooth gauge transformations between these. We will call $\mathbf{B}G$  the \emph{moduli stack of $G$-principal bundles}.
%
More generally, an $\infty$-stack $G$ with group structure (up to higher homotopy: 
a groupal $A_\infty$-structure) determines and is determined by a moduli
$\infty$-stack $\mathbf{B}G$ which \emph{modulates} $G$-principal $\infty$-bundles
in this way \cite{NSSa}. 
The set of connected components of the cocycle $\infty$-groupoid $\mathbf{H}(X, \mathbf{B}G)$ is
$$
  H^1(X, G) := \pi_0 \mathbf{H}(X, \mathbf{B}G)
  \,,
$$
the \emph{degree-1 nonabelian cohomology} of $X$ with coefficients in $G$.
If $\mathbf{B}G$ itself again has a group structure, we may form $\mathbf{B}^2 G$,
and so on. Generally, if an object $A$ is \emph{$n$-times deloopable} this way we write
$$
  H^n(X, A) := \pi_0 \mathbf{H}(X, \mathbf{B}^n A)
$$
for the degree-$n$ cohomology of $X$ with coefficients in $A$. For $G$ a Lie group or $A$ an abelian Lie group, these constructions precisely
reproduce the usual degree-1 nonabelian cohomology with coefficients in $G$
and the usual degree-$n$ cohomology with coefficients in $A$, respectively. A
detailed account of the general situation can be found in \cite{cohesive,NSSb}.

\end{example}

\begin{example}\label{example.bgconn}Let $G$ be a Lie group and let $\mathfrak{g}$ be its Lie algebra. These data define a presheaf of Kan 
complexes which sends a test manifold $U$ to the 1-groupoid   $\Omega^1(U;\mathfrak{g})//C^\infty(U,G)$ which has 
the set $\Omega^1(U;\mathfrak{g})$ of $\mathfrak{g}$-valued 1-forms on $U$ as objects and with the group $C^\infty(U,G)$ acting on $\Omega^1(U;\mathfrak{g})$ via the gauge transformations 
\[
g: A\mapsto g^{-1}Ag+g^{-1}dg
\]
 as set of morphisms. It is then immediate to see, following closely the description in the example above, that the stackification $\mathbf{B}G_{\mathrm{conn}}$ of this simplicial presheaf is the stack mapping a smooth manifold $\Sigma$ to the groupoid of principal $G$-bundles with connections on $\Sigma$. 
\end{example}

\begin{example}As a particular case of example \ref{example.bgconn}, consider $G=U(1)$. Then, under the identification $\Omega^1(U;\mathbb{R})\cong \Omega^1(U;\mathfrak{u}_1)$ given by the multiplication by $2\pi i$, the simplicial presheaf defining the stack $\mathbf{B}U(1)_{\mathrm{conn}}$ of principal $U(1)$-bundles with connection is identified with the simplicial presheaf $\Omega^1//C^\infty(-;U(1))$, with the action
\[
g: A\mapsto A+\tfrac{1}{2\pi i}d\,\mathrm{log} g.
\]
We will come back to this example in section \ref{Sec BD}.
\end{example}

\medskip 
Since localization is a functorial procedure, a morphism of simplicial presheaves $\varphi:\mathcal{A}\to \mathcal{B}$ naturally induces a morphism $\varphi:\mathbf{A}\to \mathbf{B}$ between their $\infty$-stackifications.

\begin{example}
Let $G$ be a Lie group with Lie algebra $\mathfrak{g}$. Then the evident morphism of simplicial presheaves
\[
\Omega^1(-;\mathfrak{g})//C^\infty(-;G) \to *//C^\infty(-;G)
\]
induces the forgetful morphism of stacks $\mathbf{B}G_{\mathrm{conn}}\to \mathbf{B}G$ which forgets the connection.
\end{example}

\begin{example}
We now turn to the curvature corresponding to the above connection. 
The de Rham differential $d:\Omega^1\to \Omega^2$ induces a morphism of simplicial presheaves
\[
d:\Omega^1//C^\infty(-;U(1))\to \Omega^2_{\mathrm{cl}}.
\]
The $\infty$-stackification of this morphism is the morphism
\[
\mathrm{curv}:\mathbf{B}U(1)_{\mathrm{conn}}\to \Omega^2_{\mathrm{cl}}
\]
mapping a $U(1)$-bundle with conenction to its curvature 2-form. We will come back to this example in section \ref{Sec BD}.
\end{example}

\subsection{The stack of field configurations}
\label{FieldConfigurations}
One of the basic properties of $\infty$-toposes such as our
$\mathbf{H} = \mathrm{Sh}_\infty(\mathrm{CartSp})$ is that they are
\emph{cartesian closed}. This means that:
%
for every two objects $X, A \in \mathbf{H}$
 there is another object denoted 
$[X,A] \in \mathbf{H}$ that behaves like the ``space of smooth maps from $X$ to $A$''; 
namely the object $[X,A]$ is such that, for every $Y \in \mathbf{H}$, there is a natural
equivalence of $\infty$-groupoids 
$$
  \mathbf{H}(X \times Y, A) \simeq \mathbf{H}(Y, [X,A])
  \,.
$$
In other word, cocycles with coefficients in $[X,A]$ on $Y$ are 
naturally equivalent to 
$A$-cocycles on the product $X \times Y$.
As remarked, by the Yoneda lemma, each spacetime $\Sigma$ can be viewed 
as a smooth stack, so for any stack of fields $\mathbf{Fields}$ we have a natural notion of the smooth 
stack $[\Sigma,\mathbf{Fields}]$ of \emph{field configurations} on $\Sigma$. Note that, by definition 
and Yoneda lemma, one has a natural equivalence
\[
[*,\mathbf{Fields}]\cong \mathbf{Fields}
\]
and that the stack of field configurations on $\Sigma$ assigns to the 1-point manifold $*$ the $\infty$-groupoid $\mathbf{H}(\Sigma,\mathbf{Fields})$. 

\begin{example}\label{example.i4i8}
  In the discussion of anomaly polynomials in heterotic string theory over a 
  10-dimensional spacetime $\Sigma_{10}$ one encounters degree-12 differential forms
  $I_4 \wedge I_8$, where $I_i$ is a degree $i$ polynomial in characteristic forms. 
  Clearly these cannot live on $\Sigma_{10}$, as every 12-form on $\Sigma_{10}$, 
  given by an element in the hom-$\infty$-groupoid 
  $$
    \xymatrix{
      \mathbf{H}(\Sigma_{10}, \Omega^{12}(-))
	  \ar[rr]_-{\mathrm{Yoneda}}^-\simeq
	  &&
	  \Omega^{12}(\Sigma_{10})
	}
  $$
  is trivial.
  Instead, these differential forms are elements in the internal hom $[\Sigma_{10}, \Omega^{12}(-)]$, which means
  that for every choice of smooth parameter space $U$ there is a smooth 12-form
  on $\Sigma_{10} \times U$, such that this system of forms transforms naturally in $U$. See \cite{SSSII, SSSIII}
  for more detail on the context.
\end{example}
\par
To get a more explicit description of the stack of field configurations on a fixed spacetime manifold $\Sigma$, recall the discussion in section \ref{Stackification}: for $M$ a smooth manifold, the $\infty$-groupoid of $[\Sigma,\mathbf{Fields}]$-valued cocycles on $M$ is 
completely described in terms of local data associated with 
any good open cover $\mathcal{U}_{\Sigma \times M}$ of $\Sigma \times M$. If $\mathcal{U}_\Sigma$ and $\mathcal{U}_M$ are good open covers of $\Sigma$ and $M$, respectively, then $\mathcal{U}_\Sigma\times \mathcal{U}_M$ is a good open cover of $\Sigma \times M$. This means that the stack of field configurations can be seen as the $\infty$-stackification of the simplicial presheaf
\[
 U \mapsto\{\text{local $\mathbf{Fields}$-data on $\mathcal{U}_\Sigma \times U$}\}
	\,,
\]
where $\mathcal{U}_\Sigma$ is a good open cover of $\Sigma$.
\begin{example}\label{example.holonomy}
As an illustrative example let us spell out the simplicial presheaf $\mathrm{pre}\text{-}[\Sigma,\mathbf{B}U(1)_{\mathrm{conn}}]$ inducing the stack $[\Sigma,\mathbf{B}U(1)_{\mathrm{conn}}]$ of principal $U(1)$-bundles with connections on a fixed spacetime $\Sigma$, and construct a natural morphism from this presheaf to the presheaf of $U(1)$-valued functions, the holonomy of the connection. Let $\mathcal{U}_\Sigma=\{U_\alpha\}$ be a fixed good open cover of $\Sigma$. Then to a Cartesian space $U$ is associated the following groupoid:
\begin{itemize}
\item  objects in $\mathrm{pre}\text{-}[\Sigma,\mathbf{B}U(1)_{\mathrm{conn}}](U)$ are collections $\{A_\alpha, g_{\alpha\beta}\}$ of 1-forms, $A_\alpha\in\Omega^1(U_\alpha\times U;\mathbb{R})$ and of smooth functions $g_{\alpha\beta}:U_{\alpha\beta}\times U\to U(1)$ such that
\[
\tfrac{1}{2\pi i}d\,\mathrm{log}g_{\alpha\beta}=A_\beta\bigr\vert_{U_{\alpha\beta}\times U}-A_\alpha\bigr\vert_{U_{\alpha\beta}\times U}
\]
on $U_{\alpha\beta}\times U$ and $g_{\alpha\beta}g_{\beta\gamma}g_{\gamma\alpha}=1$ on $U_{\alpha\beta\gamma}\times U$
\item morphisms between $\{A_\alpha, g_{\alpha\beta}\}$ and $\{{A'}_\alpha, {g'}_{\alpha\beta}\}$ in $\mathrm{pre}\text{-}[\Sigma,\mathbf{B}U(1)_{\mathrm{conn}}](U)$ are collections $\{\psi_\alpha\}$ of smooth functions $\psi_{\alpha}:U_{\alpha}\times U\to U(1)$ such that
\[
\tfrac{1}{2\pi i}d\,\mathrm{log}\psi_{\alpha}={A'}_\alpha-A_\alpha
\]
and
\[
{g'}_{\alpha\beta}\psi_\alpha\bigr\vert_{U_{\alpha\beta}\times U}=\psi_\beta\bigr\vert_{U_{\alpha\beta}\times U}g_{\alpha\beta}
\]
on $U_{\alpha\beta\times U}$.
\end{itemize}
\par
Picking a partition of unity $\{\rho_\alpha\}$ subordinate to the good open $\mathcal{U}_\Sigma$ we can consider the set of 1-forms $\rho_\alpha A_\alpha$, which are compactly supported in the $U_\alpha$ direction. In particular, if $\Sigma=S^1$ then each $U_\alpha$ is 1-dimensional and we can integrate $\rho_\alpha A_\alpha$ along the $U_\alpha$-direction to get a smooth function on $U$. Moreover, since $S^1$ is compact, we can choose the good open cover $\mathcal{U}_\Sigma$ to be finite and add up all these contributions to get a smooth function
\[
\sum_\alpha \int_{U_\alpha} \rho_\alpha A_\alpha
\]
on $U$. One can think of this construction as assigning an element in the set $\Omega^0(U;\mathbb{R})$ to an object $\{A_\alpha, g_{\alpha\beta}\}$ in the groupoid $\mathrm{pre}\text{-}[S^1,\mathbf{B}U(1)_{\mathrm{conn}}](U)$:
\[
\{A_\alpha, g_{\alpha\beta}\}\mapsto \sum_\alpha \int_{U_\alpha} \rho_\alpha A_\alpha.
\]
However, one sees that there is something unnatural in this assignment since half of the data in $\{A_\alpha, g_{\alpha\beta}\}$ are forgotten, and indeed the above map is not a morphism of groupoids
\[
\int_{S^1,{\rho_\alpha}}: \mathrm{pre}\text{-}[S^1,\mathbf{B}U(1)_{\mathrm{conn}}](U)\to \Omega^0(U).
\]
That is, if $\{\psi_\alpha\}: \{A_\alpha, g_{\alpha\beta}\}\to \{{A'}_\alpha, {g'}_{\alpha\beta}\}$ is a morphism, then in general we have
\[
\int_{S^1,{\rho_\alpha}}\{{A}_\alpha, {g}_{\alpha\beta}\}\neq \int_{S^1,{\rho_\alpha}}\{{A'}_\alpha, {g'}_{\alpha\beta}\}.
\]
A natural way to cure this problem, taking into account also the $\{g_{\alpha\beta}\}$ part of the datum, is to notice that the cocycle equation $g_{\alpha\beta}g_{\beta\gamma}g_{\gamma\alpha}=1$ tells us that the $\{g_{\alpha\beta}\}$ define a principal $U(1)$-bundle on $S^1\times U$ and that, since $S^1\times U$ is homotopy equivalent to $S^1$ and the classifying space $BU(1)$ is simply connected, every principal $U(1)$-bundle on $S^1\times U$ can be trivialized. That is, there exist smooth functions $k_\alpha:U_\alpha\times U\to U(1)$ such that
\[
g_{\alpha\beta}=k_\beta\bigr\vert_{U_{\alpha\beta\times U}} \, k_\alpha^{-1}\bigr\vert_{U_{\alpha\beta\times U}}.
\]
Let $\omega_\alpha=A_\alpha-\frac{1}{2\pi}d\, \mathrm{log}k_\alpha$. On $U_{\alpha\beta}\times U$ we have
\[
\omega_\beta\bigr\vert_{U_{\alpha\beta}\times U}-\omega_\alpha\bigr\vert_{U_{\alpha\beta}\times U}=A_\beta\bigr\vert_{U_{\alpha\beta}\times U}-A_\alpha\bigr\vert_{U_{\alpha\beta}\times U}-\tfrac{1}{2\pi i}d\,\mathrm{log}g_{\alpha\beta}=0,
\]
i.e., the $\omega_\alpha$ are the local data for a globally defined 1-form $\omega$ on $S^1\times U$.
This way we are led to correct our naive assignment by adding the counterterm
\[
-\tfrac{1}{2\pi i}\sum_\alpha \int_{U_\alpha} \rho_\alpha d\,\mathrm{log}k_\alpha,
\]
and consequently to
consider the assignment
\[
\{A_\alpha, g_{\alpha\beta}\}\mapsto\sum_\alpha \int_{U_\alpha} \rho_\alpha (A_\alpha- \tfrac{1}{2\pi i}d\,\mathrm{log}k_\alpha)=\sum_\alpha \int_{S^1} \rho_\alpha \omega=\int_{S^1}\omega.
\]
Since $\omega$ is independent of the partition of unity $\{\rho_\alpha\}$, this map only depends on the choice of the trivialization $\{k_\alpha\}$ of the cocycle $\{g_{\alpha\beta}\}$. However, if $\{\tilde{k}_\alpha\}$ is another trivialization and we set $f_\alpha=k_\alpha\tilde{k}^{-1}_\alpha$, we see that 
\[
f_\alpha\bigr\vert_{U_{\alpha\beta}\times U}\,{f_\beta}^{-1}\bigr\vert_{U_{\alpha\beta}\times U}=1,
\]
and so the $f_\alpha$ define a global smooth function $f:S^1\times U\to U(1)$. In particular, the local 1-forms $d\,\mathrm{log}f_\alpha$ glue together into the global 1-form $d\,\mathrm{log}f$. Therefore, if
$\tilde\omega$ is the 1-form defined by the trivialization $\{\tilde{k}_\alpha\}$ , one has
\[
\int_{S^1}\tilde\omega-\omega=\tfrac{1}{2\pi i}\sum_\alpha\int_{S^1}\rho_\alpha(d\,\mathrm{log}k_\alpha-d\,\mathrm{log}\tilde{k}_\alpha)=\int_{S^1}d\, \mathrm{log}f.
\]
This last integral is an integer, namely it is the winding number of $f:S^1\times U\to U(1)$, i.e., equivalently, the element in $H^1(S^1\times U;\mathbb{Z})\cong H^1(S^1;\mathbb{Z})\cong\mathbb{Z}$ corresponding to the homotopy class of $f:S^1\times U\to U(1)\cong K(\mathbb{Z},1)$. Therefore, we see that we have a well defined map
\[
\exp\bigl(2\pi i\int_{S^1}\,\bigr):\{A_\alpha, g_{\alpha\beta}\}\to C^\infty(U;U(1)),
\]
independent of all the choices. Now, if $\{\psi_\alpha\}: \{A_\alpha, g_{\alpha\beta}\}\to \{{A'}_\alpha, {g'}_{\alpha\beta}\}$ is a morphism and $\{k_\alpha\}$ is a trivialization of $\{g_{\alpha\beta}\}$, then $k'_\alpha=\psi_\alpha k_\alpha$ is a trivialization of $\{g'_{\alpha\beta}\}$ and with this choice of trivialization the 1-form $\omega'$ associated to $\{{A'}_\alpha, {g'}_{\alpha\beta}\}$ is locally given by
\[
\omega'_\alpha=
A'_\alpha- \tfrac{1}{2\pi i}d\,\mathrm{log}k'_\alpha=A_\alpha- \tfrac{1}{2\pi i}d\,\mathrm{log}k_\alpha\omega=\omega_\alpha\,.
\]
Hence we see that both $\{A_\alpha, g_{\alpha\beta}\}$ and $\{{A'}_\alpha, {g'}_{\alpha\beta}\}$ have the same image in $C^\infty(U;U(1))$. Since the construction is natural in $U$, we have defined a morphism of simplicial presheaves
\[
\exp\bigl(2\pi i\int_{S^1}\,\bigr):\mathrm{pre}\text{-}[S^1,\mathbf{B}U(1)_{\mathrm{conn}}]\to C^\infty(-;U(1))\,.
\]
By $\infty$-stackification this induces a morphism of $\infty$-stacks
\[
\mathrm{hol}_{S^1}:[S^1,\mathbf{B}U(1)_{\mathrm{conn}}]\to U(1)\,,
\] 
the \emph{holonomy} along $S^1$. Here on the right hand side $U(1)$ is identified with the smooth $\infty$-stack it represents; see example \ref{example.manifold}. More generally, since every compact closed oriented 1-dimensional manifold $\Sigma_1$ is a disjoint union of finitely many copies of $S^1$, one can use the abelian group structure on the target $U(1)$ to assemble the contributions from the various connected components into a global holonomy morphism
\[
\mathrm{hol}_{\Sigma_1}:[\Sigma_1,\mathbf{B}U(1)_{\mathrm{conn}}]\to U(1)\,,
\] 
We will come back to this example in the context of fiber integration in Deligne cohomology in section \ref{FibInt}.
\end{example}

\begin{remark}
The object $[X,A]$ is  known in category theory as the \emph{internal hom} object,
but in applications to physics and to stacks it is often better known as the
``families version'' of $A$-cocycles on $X$: for each smooth parameter space 
$U \in \mathrm{SmthMfd}$, the elements of $[X,A](U)$ are 
``$U$-parameterized families of $A$-cocycles on $X$'', namely $A$-cocycles on $X \times U$.
%
By the universal property characterizing it, the construction of the internal hom object is functorial, i.e., if $f:A\to B$ is a morphism of $\infty$-stacks, then one has a natural morphism
$[X,f]:[X,A]\to [X,B]$
 of $\infty$-stacks, for any $X\in\mathbf{H}$. In particular, taking $X$ to be a spacetime $\Sigma$, one finds a natural morphism 
 \[
 [X,f]:[X,A]\to [X,B]
 \]
from the $\infty$-stack of $A$-field configurations on $\Sigma$ to the $\infty$-stack of $B$-field configurations on $\Sigma$. \end{remark}

%
\begin{example}
  Below in section \ref{CupOfTwo} we discuss how the anomaly forms from example \ref{example.i4i8} appear from
  morphisms of higher moduli stacks 
    $$
    \nabla : \mathbf{SuGra} \to \mathbf{B}^{11}U(1)_{\mathrm{conn}}\;, 
  $$
  for $\mathbf{SuGra}$ the higher stack of supergravity field
  configurations, by sending
  the families of moduli of field configurations on spacetime $\Sigma_{10}$ to their
  anomaly form:
  $$
	\xymatrix{
      [\Sigma_{10}, \mathbf{SuGra}]
	  \ar[rr]^-{[\Sigma_{10}, \nabla]}
	  &&
	  [\Sigma_{10}, \mathbf{B}^{11} U(1)_{\mathrm{conn}}]
	  \ar[rr]^-{[\Sigma_{10}, \mathrm{curv}]}
	  &&
	 [\Sigma_{10}, \Omega^{12}(-)]
	 }
	 \,.
  $$
%
\end{example}

\begin{remark}\label{rem.yoneda}
By the Yoneda lemma one has a natural equivalence
\[
[X\times Y,Z]\cong [X,[Y,Z]],
\]
for any three smooth $\infty$-stacks $X$, $Y$ and $Z$. Also, via the equivalence
\[
\mathbf{H}([X,Y],[X,Y])\cong \mathbf{H}([X,Y]\times X, Y)\;,
\]
the identity of $[X,Y]$ corresponds to the evaluation map $[X,Y]\times X\to Y$.
\end{remark}

\subsection{The Dold-Kan correspondence}
\label{Dold-Kan}

A useful tool for producing $\infty$-stacks $A$ with \emph{abelian} 
$\infty$-group structure 
is the \emph{Dold-Kan correspondence}, which we briefly recall here
(see for instance section III.2 of \cite{GoerssJardine} for a review).
First, at an algebraic level, we have the classical Dold-Kan correspondence
$$
  \xymatrix{
    \mathrm{Ch}_{\bullet\geq 0} \ar[r]^-\Gamma_-\simeq &  \mathrm{sAb},
  }
$$
which
establishes
an equivalence of categories between chain complexes concentrated in non-negative degrees and simplicial abelian groups. Given the chain complex
\[
A_\bullet=\cdots \xrightarrow{\partial}A_3\xrightarrow{\partial}A_2\xrightarrow{\partial}A_1\xrightarrow{\partial}A_0\,,
\]
the simplicial abelian group $\Gamma(A_\bullet)$ is defined as follows:
\begin{itemize}
\item the abelian group of 0-simplices of $\Gamma(A_\bullet)$ is the abelian group $A_0$;
\item the abelian group of $n$-simplices of  $\Gamma(A_\bullet)$ is the abelian group whose elements are standard $n$-simplices decorated by an element $x$ in $A_n$  such that $\partial x$ equals the (oriented) sum of the decorations on the boundary $(n-1)$-simplices.
\end{itemize}
For instance, a 2-simplex in $\Gamma(A_\bullet)$ is 
\[
\begin{xy}
,(0,0)*{\bullet};(-10,-17)*{\bullet}**\dir{-}
,(0,0)*{\bullet};(10,-17)*{\bullet}**\dir{-}
,(-10,-17)*{\bullet};(10,-17)*{\bullet}**\dir{-}
,(-12,-19)*{a_0}
,(12,-19)*{a_1}
,(,2)*{a_2}
,(-8,-6)*{b_{02}}
,(8,-6)*{b_{12}}
,(0,-20)*{b_{01}}
,(0,-11)*{c_{012}}
\end{xy}
\]
where
\begin{itemize}
\item $a_i\in A_0$;
\item $b_{ij}\in A_1$ and $\partial b_{ij}=a_j-a_i$;
\item $c_{012}\in A_2$ and $\partial c_{012}=b_{12}-b_{02}+b_{01}$.
\end{itemize}
Then there is the forgetful functor
$$
  F : \mathrm{sAb} \to \mathrm{sSet}_{\mathrm{fib}} \hookrightarrow \mathrm{sSet}
$$
which forgets the group structure on a simplicial abelian group and just remembers
the underlying simplicial set, which in turn is guaranteed to be a Kan complex. 
Denoting by $ \mathrm{DK}$ the composition of $\Gamma$ and $F$ we obtain the \emph{Dold-Kan correspondence}:
\[
 \mathrm{DK}: 
  \xymatrix{
   \mathrm{Ch}_{\bullet\geq 0}
   \ar[r]^-{\simeq}
   &
   \mathrm{sAb}
   \ar[r]^-{F}
   &
   \mathrm{sSet}
   }
   \,.
\]
By construction, this is such that 
the elements in degree $k$ of a chain complex label the extension of
$k$-cells in the corresponding simplicial set; and the chain homology group in degree
$k$ is naturally identified with the simplicial homotopy group in the same degree
$$
  \pi_k( \mathrm{DK}(A_\bullet) ) \simeq H_k(A_\bullet)  
  \,.
$$
All this prolongs directly to presheaves of chain complexes and presheaves of abelian groups on Cartesian spaces,
and we will use the same symbols and write
$$
  \mathrm{DK}
   :
  \xymatrix{
    [ \mathrm{CartSp}^{\mathrm{op}}, \mathrm{Ch}_{\bullet\geq 0}]
	\ar[r]^-\Gamma
	&
	[ \mathrm{CartSp}^{\mathrm{op}}, \mathrm{sAb}]
	\ar[r]^-{F}
	&
	[ \mathrm{CartSp}^{\mathrm{op}}, \mathrm{sSet}]
  }
  \,.
$$

\begin{example}
If $A$ is an abelian Lie group, then for any nonnegative integer $n$ one can consider the presheaf of chain complexes
\[
C^\infty(-;A)[n]:= \bigl[\cdots\to0\to C^\infty(-,A)\to0\to\cdots\to0\bigr],
\]
with $C^\infty(-,A)$ placed in degree $n$. By applying the Dold-Kan map to this presheaf one gets a simplicial presheaf whose stackification is the $n$-stack $\mathbf{B}^nA$ of principal $A$-$n$-bundles. For $n=0$ this is the sheaf of smooth functions with values in $A$; for $n=1$ this is the usual stack of principal $A$-bundles; for $n=2$ this is the 2-stack of principal $A$-bundle gerbes.  
\end{example}
\begin{remark}\label{rem.hyper}
If $X$ is a smooth manifold, then the set of connected components $\pi_0 \mathbf{H}(X, \mathbf{B}^n A)$ is naturally identified with the traditional $n$-th \emph{sheaf cohomology} group
$H^n(X,\mathcal{A})$
 of $X$ with coefficients in $\mathcal{A}:=C^\infty(-;A)$. Note that, since $\mathbf{H}(X, \mathbf{B}^n A)$ is actually defined for any smooth stack $X$, not just a smooth manifold, one gets this way a natural notion of sheaf cohomology groups for stacks.\par
More generally, if $\mathcal{A}_\bullet \in [ \mathrm{CartSp}^{\mathrm{op}}, \mathrm{Ch}_{\bullet \geq 0}]$ is an arbitrary presheaf
of chain complexes, and $\mathbf{A}$ is the smooth stack obtained by the 
stackification of $\mathrm{DK}(\mathcal{A}_\bullet)$, then 
$$
  \mathbb{H}^0(X, \mathcal{A}_\bullet) := \pi_0 \mathbf{H}(X, \mathbf{A})
$$
is what is traditionally is called the $0$-th \emph{sheaf hypercohomology} group of 
$X$ with coefficients in $\mathcal{A}_\bullet$.

\end{remark}

\begin{example}\label{example.deligne1} The length 1 Deligne complex is the presheaf of chain complexes
\[
\cdots\to0\to C^\infty(-;U(1))\xrightarrow{\frac{1}{2\pi}d\mathrm{log}} \Omega^1,
\]
with $\Omega^1$ in degree zero. It 
presents, via Dold-Kan and stackification, the stack $\mathbf{B}U(1)_{\mathrm{conn}}$ of principal $U(1)$-bundles with connection. In particular, the group $H^1_{\mathrm{diff}}(X,U(1)):=\pi_0\mathbf{H}(X,\mathbf{B}U(1)_{\mathrm{conn}})$ is naturally identified with the degree zero hypercohomology of the length 1 Deligne complex. Moreover, by the quasi-isomorphism of presheaves of chain complexes
\[
\xymatrix{
\cdots\ar[r]\ar[d]&0\ar[r]\ar[d]&\mathbb{Z}\ \ar@{^{(}->}[r]\ar[d]&C^\infty(-;\mathbb{R})\ar[rr]^-{d}\ar[d]_{\exp(2\pi i-)}&&\Omega^1\ar[d]\\
\cdots\ar[r]&0\ar[r]&0\ar[r]&C^\infty(-;U(1))\ar[rr]^-{{\frac{1}{2\pi}d\mathrm{log}} }&&\Omega^1
}
\]
one recovers this way the classical result that isomorphism classes of principal $U(1)$-bundles with connection on $X$ are classified by $H^1_{\mathrm{diff}}(X;\mathbb{Z})$, the first ordinary differential cohomology group of $X$.
\end{example}

\begin{example} The description of the stack $[\Sigma,\mathbf{B}U(1)_{\mathrm{conn}}]$ given in example \ref{example.holonomy} can be rephrased in terms of the Dold-Kan correspondence by saying that $[\Sigma,\mathbf{B}U(1)_{\mathrm{conn}}]$ is the stack presented via Dold-Kan and stackification by the presheaf of chain complexes given by the total complex of the Deligne length 1 complex on $\mathcal{U}_\Sigma\times U$. One sees from this perspective that the holonomy morphism $\mathrm{hol}_{\Sigma_1}$ is the image, still via the Dold-Kan correspondence, of the fiber integration map in Deligne cohomology. We will come back to this point of view in Section \ref{FibInt}.
\end{example}

A crucial property of the Dold-Kan map is its compatibility with the tensor product. More precisely, we have the following. 
\begin{proposition}\label{cup-products}
Let $\mathcal{A}_\bullet, \mathcal{B}_\bullet$ and $\mathcal{C}_\bullet$ be presheaves of chain complexes concentrated in non-negative degrees, and let $\cup: \mathcal{A}_\bullet\otimes \mathcal{B}_\bullet\to \mathcal{C}_\bullet$ be a morphism  of presheaves of chain complexes. Then the Dold-Kan map induces a natural morphism of simplicial preseheaves $\cup_{\mathrm{DK}}: \mathrm{DK}(\mathcal{A}_\bullet)\times \mathrm{DK}(\mathcal{B}_\bullet)\to \mathrm{DK}(\mathcal{C}_\bullet)$.
\end{proposition}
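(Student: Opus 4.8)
The plan is to exhibit $\cup_{\mathrm{DK}}$ as the composite
\[
  \mathrm{DK}(\mathcal{A}_\bullet)\times\mathrm{DK}(\mathcal{B}_\bullet)
  \xrightarrow{\ \kappa\ }
  \mathrm{DK}(\mathcal{A}_\bullet\otimes\mathcal{B}_\bullet)
  \xrightarrow{\ \mathrm{DK}(\cup)\ }
  \mathrm{DK}(\mathcal{C}_\bullet),
\]
in which the second arrow is simply the image of the given chain map $\cup$ under the functor $\mathrm{DK}=F\circ\Gamma$, and hence exists by mere functoriality. All the content therefore lies in producing the first arrow $\kappa$, a ``multiplication'' morphism that is independent of $\cup$ and is built entirely out of the Dold--Kan machinery, and in checking its naturality.

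To construct $\kappa$, write $A:=\Gamma(\mathcal{A}_\bullet)$ and $B:=\Gamma(\mathcal{B}_\bullet)$ for the associated simplicial abelian groups, so that $N(A)\cong\mathcal{A}_\bullet$ and $N(B)\cong\mathcal{B}_\bullet$ under the inverse equivalence $N$ (normalized chains). First I would use the canonical map of simplicial sets
\[
  \mu\colon F(A)\times F(B)\to F(A\otimes B),
  \qquad
  (a,b)\mapsto a\otimes b,
\]
where $A\otimes B$ denotes the \emph{degreewise} tensor product of simplicial abelian groups, $(A\otimes B)_n=A_n\otimes B_n$. This $\mu$ is a morphism of simplicial sets because faces and degeneracies act diagonally, but it is decidedly not a morphism of simplicial abelian groups; this is precisely why the forgetful functor $F$ must be present, the source $\mathrm{DK}(\mathcal{A}_\bullet)\times\mathrm{DK}(\mathcal{B}_\bullet)$ being a Cartesian product of simplicial sets rather than a tensor product of groups. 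Then I would post-compose with $F(\zeta)$, where
\[
  \zeta\colon A\otimes B\to\Gamma(\mathcal{A}_\bullet\otimes\mathcal{B}_\bullet)
\]
is the comparison morphism of simplicial abelian groups witnessing that $\Gamma$ is lax monoidal from the tensor product of chain complexes to the degreewise tensor product of simplicial abelian groups. Setting $\kappa:=F(\zeta)\circ\mu$ then lands in $F\Gamma(\mathcal{A}_\bullet\otimes\mathcal{B}_\bullet)=\mathrm{DK}(\mathcal{A}_\bullet\otimes\mathcal{B}_\bullet)$, as desired.

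The one genuinely nontrivial input is the existence and naturality of $\zeta$, which is supplied by the Eilenberg--Zilber theorem. The Alexander--Whitney map furnishes a natural chain map $N(A\otimes B)\to N(A)\otimes N(B)=\mathcal{A}_\bullet\otimes\mathcal{B}_\bullet$, and transporting it across the equivalence via the natural isomorphism $\Gamma N\cong\mathrm{id}$ yields $\zeta$ with exactly the variance required. I expect this to be the main obstacle: one must verify that it is the Alexander--Whitney map, and not the shuffle map, that makes $\Gamma$ (rather than $N$) lax monoidal in the direction $A\otimes B\to\Gamma(\mathcal{A}_\bullet\otimes\mathcal{B}_\bullet)$ that we need, since the shuffle map has the opposite variance and would render $\Gamma$ only oplax monoidal. (For the cup-square specialization $\mathcal{A}_\bullet=\mathcal{B}_\bullet$ one should keep in mind that $\zeta$, built from Alexander--Whitney, is not symmetric on the nose but only up to coherent homotopy; this is immaterial for the present existence statement.)

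Finally, I would observe that every ingredient---$\Gamma$, $N$, $F$, the map $\mu$, and the Alexander--Whitney comparison---is a functor or natural transformation on abelian groups, chain complexes and simplicial sets, and each is natural in its arguments. Since a presheaf of chain complexes is evaluated objectwise over $U\in\mathrm{CartSp}$, post-composing with these data produces morphisms of simplicial presheaves, and compatibility with the restriction maps along $V\hookrightarrow U$ is inherited from that naturality. Hence $\cup_{\mathrm{DK}}:=\mathrm{DK}(\cup)\circ F(\zeta)\circ\mu$ is a morphism of simplicial presheaves, natural in $(\mathcal{A}_\bullet,\mathcal{B}_\bullet,\mathcal{C}_\bullet,\cup)$, which is the assertion of the proposition.
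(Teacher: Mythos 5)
Your proof is correct and takes essentially the same route as the paper's: the paper also defines $\cup_{\mathrm{DK}}$ as the composite of the canonical product-to-tensor map on underlying simplicial sets (your $\mu$, which the paper factors as $F(A)\times F(B)\cong F(A\times B)\xrightarrow{F(p)}F(A\otimes B)$) with the lax monoidal structure map $\gamma$ of $\Gamma$ (your $\zeta$) and finally with $F(\Gamma(\cup))$. Your only addition is to make explicit that $\zeta$ is the transpose of the Alexander--Whitney map across the Dold--Kan equivalence---with the correct variance, and correctly flagged as symmetric only up to homotopy---where the paper simply invokes lax monoidality of $\Gamma$ without construction.
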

\begin{proof}
Both the categories $\mathrm{Ch}_{\bullet\geq 0}$ and $\mathrm{sAb}$ are monoidal categories under the respective standard tensor products. Namely, 
on $\mathrm{Ch}_{\bullet\geq 0}$ this is given by direct sums of tensor products of abelian groups with fixed total degree 
and on $\mathrm{sAb}$ by the degreewise tensor product of abelian groups. 
Moreover, the functor $\Gamma$ is lax monoidal with respect to these structures, 
i.e., for any $V, W \in \mathrm{Ch}_{\bullet\geq 0}$ we have natural 
weak equivalences
$$
  \gamma_{V,W} : \Gamma(V) \otimes \Gamma(W) \to \Gamma(V \otimes W)
  \,.
$$
 The forgetful functor $F$ is the right adjoint to the functor forming degreewise the free abelian group on a set, therefore it preserves products and hence  there are natural isomorphisms
$$
   F(V \times W) \xrightarrow{\simeq} F(V) \times F(W)
   \,,
$$
for all $V,W \in \mathrm{sAb}$.
Finally, by the definition of tensor product, there are universal natural 
quotient maps $V, W \in \mathrm{sAb}$
$$
  p_{V,W} : V \times W \to V \otimes W
  \,.
$$
The morphism $\cup_{\mathrm{DK}}$ is then defined as the composition indicated in the following diagram:
$$
  \xymatrix{
    \mathrm{DK}(\mathcal{A}_\bullet) \times \mathrm{DK}(\mathcal{B}_\bullet)
    	 \ar@{=}[d]
	 \ar[rrrr]^{\mathbf{\cup}_{\mathrm{DK}}}
	 &&&&
	 \mathrm{DK}(\mathcal{C}_\bullet)
	 \ar@{=}[dd]
	 \\
	 F(\Gamma(\mathcal{A}_\bullet)) \times F(\Gamma(\mathcal{B}_\bullet))
	 \ar[d]^\simeq 
	 \\
     F(\Gamma(\mathcal{A}_\bullet) \times \Gamma(\mathcal{B}_\bullet))
	 \ar[r]^{F(p)}
	 &
     F(\Gamma(\mathcal{A}_\bullet) \otimes \Gamma(\mathcal{B}_\bullet))
	 \ar[r]^-{\hspace{-2mm}F(\gamma)}
	 &
     F(\Gamma(\mathcal{A}_\bullet\otimes \mathcal{B}_\bullet))
	 \ar[rr]^-{F(\Gamma(-\cup-))}
	 &&
	 F(\Gamma(\mathcal{C}_\bullet))\;.
  }
$$
\end{proof}

%
%


\subsection{Deligne cohomology and $n$-stacks of higher $U(1)$-bundles with connection}
\label{Sec BD}

Ordinary degree-2 integral cohomology $H^2(X, \mathbb{Z})$ 
on a smooth manifold $X$ classifies smooth circle bundles on $X$. 
Ordinary \emph{differential cohomology} $H^2_{\mathrm{diff}}(X, \mathbb{Z})$
classifies smooth circle bundles with connection. Namely, the
groupoid $\mathbf{H}(X,\mathbf{B}U(1)_{\mathrm{conn}})$ whose objects are circle
bundles with connection on $X$, and whose morphisms are smooth gauge transformations
on $X$ is such that 
\[
H^2_{\mathrm{diff}}(X, \mathbb{Z}) = H^1_{\mathrm{conn}}(X, U(1)):=\pi_0 \mathbf{H}(X,\mathbf{B}U(1)_{\mathrm{conn}}),
\]
 see example \ref{example.deligne1}.
Generalized to arbitrary degree, one obtains $n$-groupoids 
$\mathbf{H}(X,\mathbf{B}^nU(1)_{\mathrm{conn}})$ whose objects are $U(1)$-$n$-bundles with connection, whose morphisms
are smooth gauge transformations, whose 2-morphisms are gauge-of-gauge transformations,
and so on.

\medskip
A famous model for these $n$-groupoids using chain complexes is due to Deligne
and Beilinson, and it is known as the \emph{Deligne complex}. We briefly
review it, together with its cup product operation. 
In the context of differential geometry
the use of Deligne cohomology was amplified notably by Brylinski \cite{Brylinski}. 

\medskip

\begin{definition}
Write
$
  \mathbb{Z}[n+1]^\infty_D
$
for the presheaf of chain complexes on Cartesian spaces given by
$$
 \mathbb{Z}[n+1]^\infty_D
 :=
 \left[
 \xymatrix{ 
     \mathbb{Z} ~
   \ar@{^{(}->}[r]
       &
     \Omega^0
       \ar[r]^-{d} 
            &
     \Omega^1
       \ar[r]^-{d} 
            &
    ~ \cdots~
     \ar[r]^-{d}
     &
     \Omega^{n}
     }
	\right]\;,
$$
with the constant presheaf of integers $\mathbb{Z}$ in degree $(n+1)$, the inclusion morphism into the sheaf $\Omega^0=C^\infty(-;\mathbb{R})$ of smooth real functions (in degree $n$) and with all the other
differentials being the de Rham differentials on the sheaves of differential forms.
This is the \emph{Deligne complex} in degree $(n+1)$. The sheaf hypercohomology
with coefficients in $\mathbb{Z}[n+1]^\infty_D$ is accordingly the 
\emph{Deligne cohomology}. The cohomology group
\[
H^{n+1}_{\mathrm{diff}}(X;\mathbb{Z}):=\mathbb{H}^0(X;\mathbb{Z}[n+1]^\infty_D)
\]
is called the $n$-th ordinary differential cohomology group of $X$.\label{DeligneComplex}
%
\par
We write
$
  \mathbf{B}^{n} U(1)_{\mathrm{conn}}
$
for the smooth $n$-stack presented by the Deligne complex $\mathbb{Z}[n+1]^\infty_D$ via the Dold-Kan map and stackification, and call it the $n$-stack of $U(1)$-$n$-bundles (or circle $n$-bundles) with connection.\end{definition}
\begin{remark}
It follows by the discussion in Remark \ref{rem.hyper} that $H^n_{\mathrm{diff}}(X;\mathbb{Z})$ classifies $U(1)$-$n$-bundles with connection. Also note that, due to the evident quasi-isomorphism of presheaves of complexes
\[
\xymatrix{
\cdots\ar[r]\ar[d]&0\ar[r]\ar[d]&\mathbb{Z}\ \ar@{^{(}->}[r]\ar[d]&C^\infty(-;\mathbb{R})\ar[rr]^-{d}\ar[d]_{\exp(2\pi i-)}&&\Omega^1\ar[d]\ar[r]^{d}&\cdots\ar[r]&\Omega^{n-1}\ar[d]\ar[r]^{d}&\Omega^n\ar[d]\\
\cdots\ar[r]&0\ar[r]&0\ar[r]&C^\infty(-;U(1))\ar[rr]^-{{\frac{1}{2\pi}d\mathrm{log}} }&&\Omega^1\ar[r]^{d}&\cdots\ar[r]&\Omega^{n-1}\ar[r]^{d}&\Omega^n\;,
}
\]
the $n$-stack $\mathbf{B}^nU(1)_{\mathrm{conn}}$ is equivalently presented via Dold-Kan and stackification by the presheaf of chain complexes
$$
 \left[
 \xymatrix{ 
     \cdots\ar[r]&0\ar[r]&C^\infty(-;U(1))\ar[rr]^-{{\frac{1}{2\pi}d\mathrm{log}} }&&\Omega^1\ar[r]^{d}&\cdots\ar[r]&\Omega^{n-1}\ar[r]^{d}&\Omega^n
	}
	\right]\;.
$$
It is precisely this presentation that is the one
 making the interpretation of $\mathbf{B}^nU(1)_{\mathrm{conn}}$ as the $n$-stack of $U(1)$-$n$-bundles with connection manifest.
\end{remark}
\begin{remark}
The obvious morphism
\[
\xymatrix{
\cdots\ar[r]&0\ar[r]\ar[d]&C^\infty(-;U(1))\ar[rr]^-{{\frac{1}{2\pi}d\mathrm{log}} }\ar[d]&&\Omega^1\ar[r]^{d}\ar[d]&\cdots\ar[r]&\Omega^{n-1}\ar[r]^{d}\ar[d]&\Omega^n\ar[d]\\
\cdots\ar[r]&0\ar[r]&C^\infty(-;U(1))\ar[rr]&&0\ar[r]&\cdots\ar[r]&0\ar[r]&0
}
\]
induces the ``forget the connection'' morphism $\mathbf{B}^nU(1)_{\mathrm{conn}}\to \mathbf{B}^nU(1)$ to the stack of principal $U(1)$-$n$-bundles. As in the $n=1$ case, by the exponential exact sequence $0\to \mathbb{Z}\to \mathbb{R}\to U(1)\to 1$ one sees that $H^{n+1}(X;\mathbb{Z})\cong   \pi_0\mathbf{H}(X,\mathbf{B}U(1))$, 
so that principal $U(1)$-$n$-bundles on $X$ are classified by degree $n+1$ integral cohomology.
\end{remark}

\begin{remark}
The morphism of presheaves of chain complexes
\[
\xymatrix{
\cdots\ar[r]\ar[d]&0\ar[r]\ar[d]&\mathbb{Z}\ \ar@{^{(}->}[r]\ar[d]&C^\infty(-;\mathbb{R})\ar[r]^-{d}\ar[d]&\Omega^1\ar[d]\ar[r]^{d}&\cdots\ar[r]&\Omega^{n-1}\ar[d]\ar[r]^{d}&\Omega^n\ar[d]\\
\cdots\ar[r]&0\ar[r]&0\ar[r]&0\ar[r]&0\ar[r]&\cdots\ar[r]&0\ar[r]^{d}&\Omega^{n+1}_{\mathrm{cl}}
}
\]
induces the \emph{curvature} morphism
\[
\mathrm{curv}: \mathbf{B}^nU(1)_{\mathrm{conn}}\to \Omega^{n+1}_{\mathrm{cl}}
\] 
from the stack of $U(1)$-$n$-bundles with connection to the sheaf of degree $n+1$ closed form. For $n=1$ this is nothing but the familiar morphism mapping a $U(1)$-bundle with connection to its curvature 2-form.
\end{remark}

\subsection{Total spaces of higher $U(1)$-bundles}
\label{CircleBundleModuli}

%
%

The (higher) stacks perspective offers a very neat point of view on the construction of the total space of a principal $U(1)$-$n$-bundle classified by a given cohomology class $c$ in $H^{n+1}(X;\mathbb{Z})$. Namely, due to the natural isomorphism $H^{n+1}(X;\mathbb{Z})\cong   \pi_0\mathbf{H}(X,\mathbf{B}U(1))$ one can pick a morphism $\mathbf{c} : X \to \mathbf{B}^n U(1)$ representing the homotopy class $c$ and realize the total space $P$ as the homotopy fiber of $\mathbf{c}$ i.e., as the
%
object universally fitting into a square 
$$
  \raisebox{20pt}{
  \xymatrix{
    P \ar[r]_{\ }="s" \ar[d]^>{\ }="t" & {*} \ar[d]
 	\\
	X \ar[r]_-{\mathbf{c}} & \mathbf{B}^n U(1)~~.
	\ar@{=>}^\simeq "s"; "t"
  }
  }
$$
\label{TotalSpaceOfPrequantumBundle}

This is the (higher) stack version of the usual universal property of classifying spaces for topological groups; a detailed discussion can be found in  in \cite{NSSa}, where the above statement appears as a special case of the first main theorem. Note that the space $X$ need not be a smooth manifold; it can be an arbitrary smooth stack. 
\begin{example}We unravel the above definition of the total space $P$ in the $n=1$ case, to show how it precisely reproduces the construction of the total space of a principal $U(1)$-bundle on a smooth manifold $X$ from its $U(1)$-cocycle data. The first step consists in replacing $X$ with the \v{C}ech nerve $\check{C}(\mathcal{U})$ of a good open cover $\mathcal{U}$ of $X$ (this operation is a fibrant replacement from the point of view of smooth stacks), and in looking at the cocycle data of a $U(1)$-bundle on $X$ as a simplicial morphism $\{g_{\alpha\beta}\}:\check{C}(\mathcal{U})\to \mathbf{B}U(1)$. Now we just take the homotopy fiber of this simplicial map, obtaining the following data:
\begin{itemize}
\item the collection of products $U_\alpha\times U(1)$;
\item the collection of the gluing data $\tilde{g}_{\alpha\beta}:U_{\alpha\beta}\times U(1)\to U_{\alpha\beta}\times U(1)$ given by $\tilde{g}_{\alpha\beta}: (x,a)\mapsto (x,g_{\alpha\beta}\cdot a)$. 
\end{itemize}
These are precisely the data describing the total space of the $U(1)$ bundle associated with the cocycle $\{g_{\alpha\beta}\}$.
\end{example}

\begin{definition}
We denote by $\Omega^{1 \leq \bullet \leq n}$ the $(n-1)$-stack obtained via Dold-Kan and stackification from the presheaf of chain complexes on Cartesian spaces
 \[
 \dots\to0\to\Omega^1\xrightarrow{\,\,d\,\,}\Omega^2\xrightarrow{\,\,d\,\,}\cdots\xrightarrow{\,\,d\,\,}\Omega^n.
 \]
 \end{definition}
\begin{remark}
The short exact sequence of complexes of sheaves
\[
\xymatrix{
\cdots\ar[r]&0\ar[r]\ar[d]&0\ar[rr]\ar[d]&&\Omega^1\ar[r]^{d}\ar[d]&\cdots\ar[r]&\Omega^{n-1}\ar[r]^{d}\ar[d]&\Omega^n\ar[d]\\
\cdots\ar[r]&0\ar[r]\ar[d]&C^\infty(-;U(1))\ar[rr]^-{{\frac{1}{2\pi}d\mathrm{log}} }\ar[d]&&\Omega^1\ar[r]^{d}\ar[d]&\cdots\ar[r]&\Omega^{n-1}\ar[r]^{d}\ar[d]&\Omega^n\ar[d]\\
\cdots\ar[r]&0\ar[r]&C^\infty(-;U(1))\ar[rr]&&0\ar[r]&\cdots\ar[r]&0\ar[r]§ &0
}
\]
exhibits $\Omega^{1 \leq \bullet \leq n}$ as the homotopy fiber of the forgetful morphism $\mathbf{B}^nU(1)_{\mathrm{conn}}\to \mathbf{B}^nU(1)$, i.e., we have a natural homotopy pullback diagram\footnote{
To see that this homotopy commutative square is indeed a homotopy pullback, notice that: {\it i)} the Dold-Kan map $\mathrm{DK}$ is right Quillen for the 
global projective model structure on simplicial presheaves, {\it ii)} homotopy pullbacks in the local model structure may be computed in the global model structure ($\infty$-stackification is left exact), and {\it iii)} the pre-image under $\mathrm{DK}$ of the forgetful morphism is manifestly a fibration. Hence,
we may compute the homotopy pullback of the forgetful morphism as the ordinary pullback
of presheaves of chain complexes, under $\mathrm{DK}$, and, since these are computed as objectwise and degreewise pullbacks
of abelian groups, this manifestly
yields the fiber $\Omega^{1 \leq \bullet \leq n}$ as indicated. See \cite{cohesive} for a comprehensive account on this argument.}
\[
 \raisebox{20pt}{
  \xymatrix{
    \Omega^{1 \leq \bullet \leq n}(-)  \ar[r] \ar[d] & {*} \ar[d]
 	\\
	 \mathbf{B}^n U(1)_{\mathrm{conn}}  \ar[r]^-{\text{forget}} & \mathbf{B}^n U(1)~.
  }
  }
\]
 This gives a natural identification of $\Omega^{1 \leq \bullet \leq n}$
with the $(n-1)$-stack of $U(1)$-$n$-bundles with connections whose underlying principal $U(1)$-$n$-bundle is trivial.
\end{remark}

\medskip

\noindent As a matter of terminology, we will say that a morphism $\mathbf{c}:X\to \mathbf{B}^nU(1)$ \emph{modulates} a $U(1)$-$n$-bundle (and similarly for $U(1)$-$n$-bundles with connection).
\medskip

\begin{corollary}
For a morphism 
$\nabla : X \to \mathbf{B}^n U(1)_{\mathrm{conn}}$ modulating a $U(1)$-$n$-bundle
with connection, the total space of the underlying principal $U(1)$-$n$-bundle is
equivalently realized as the homotopy pullback
$$
  \raisebox{20pt}{
  \xymatrix{
    P \ar[r]_{\ }="s" \ar[d]^>{\ }="t" & \Omega^{1 \leq \bullet \leq n} \ar[d]
 	\\
	X \ar[r]_-\nabla & \mathbf{B}^n U(1)_{\mathrm{conn}}~~,
	\ar@{=>}^\simeq "s"; "t"
  }
  }
$$
where the right vertical morphism is the obvious inclusion.
\label{TotalSpaceOfnConnection}
\end{corollary}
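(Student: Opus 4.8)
The plan is to realize $P$ by pasting two homotopy pullback squares, thereby reducing the claim to the defining homotopy pullback for total spaces of principal $U(1)$-$n$-bundles together with the homotopy pullback already exhibited for $\Omega^{1 \leq \bullet \leq n}$.

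First I would form the composite
$$
\mathbf{c} : X \xrightarrow{\ \nabla\ } \mathbf{B}^n U(1)_{\mathrm{conn}} \xrightarrow{\ \text{forget}\ } \mathbf{B}^n U(1),
$$
and observe that $\mathbf{c}$ modulates precisely the principal $U(1)$-$n$-bundle underlying $\nabla$; this is immediate from the construction of the forgetful morphism and the functoriality of the cocycle $\infty$-groupoids. By the definition of the total space of a principal $U(1)$-$n$-bundle as the homotopy fiber of its modulating map (the homotopy pullback whose top-right corner is the point), the total space of this underlying bundle is the homotopy pullback of $\mathbf{c}$ along $* \to \mathbf{B}^n U(1)$.

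Next I would assemble the diagram of two side-by-side squares
$$
\xymatrix{
P \ar[r] \ar[d] & \Omega^{1 \leq \bullet \leq n} \ar[r] \ar[d] & {*} \ar[d] \\
X \ar[r]^-{\nabla} & \mathbf{B}^n U(1)_{\mathrm{conn}} \ar[r]^-{\text{forget}} & \mathbf{B}^n U(1),
}
$$
in which the right-hand square is the homotopy pullback established in the preceding remark, exhibiting $\Omega^{1 \leq \bullet \leq n}$ as the homotopy fiber of the forgetful morphism. Taking $P$ to be the homotopy pullback of the left-hand square---which is exactly the square in the statement---the pasting law for homotopy pullbacks, valid in any $\infty$-topos and in particular in $\mathbf{H}$ (see \cite{cohesive}), ensures that the outer rectangle is again a homotopy pullback. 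Since the outer rectangle is precisely the defining homotopy pullback for the total space of the bundle modulated by $\mathbf{c}$, I conclude that the left-hand square computes that same total space.

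This argument is essentially formal, so there is no substantial obstacle; the only point demanding care is that every square be read as a homotopy rather than a strict pullback, and that the pasting law be invoked in its homotopy-coherent form. Since both input squares have already been exhibited as genuine homotopy pullbacks in $\mathbf{H}$, and the pasting law holds there, the conclusion follows at once.
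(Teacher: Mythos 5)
Your proposal is correct and follows essentially the same route as the paper: factor the modulating map of the underlying bundle as $\nabla$ followed by the forgetful morphism, invoke the homotopy pullback square exhibiting $\Omega^{1 \leq \bullet \leq n}$ as the homotopy fiber of the forgetful morphism, and apply the pasting law for homotopy pullbacks to identify the square in the statement with the defining homotopy fiber of the composite. Your closing remark about reading all squares homotopy-coherently is exactly the point the paper handles in its footnote justifying the right-hand square, so nothing is missing.
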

\proof
As we have remarked, the total space of the underlying principal $U(1)$-$n$-bundle is
homotopy pullback
$$
  \raisebox{20pt}{
  \xymatrix{
    P \ar[r]_{\ }="s" \ar[d]^>{\ }="t" & {*} \ar[d]
 	\\
	X \ar[r]_-{\chi(\nabla)} & \mathbf{B}^n U(1)~~,
	\ar@{=>}^\simeq "s"; "t"
  }
  }
$$
where the bottom morphism is the composite
$$
  \chi(\nabla) : 
   \xymatrix{
     X \ar[r]^-\nabla 
	 & \mathbf{B}^n U(1)_{\mathrm{conn}}
      \ar[r]^-{\text{forget}} & \mathbf{B}^n U(1)
  }\,.
$$
By the pasting law for homotopy pullbacks the pullback along such a composite
map may be computed by iteratively pulling back along the two components,
hence by forming the following pasting composite of homotopy pullback squares:
$$
  \raisebox{20pt}{
  \xymatrix{
    P \ar[r] \ar[d] & \Omega^{1 \leq \bullet \leq n}(-)  \ar[r] \ar[d] & {*} \ar[d]
 	\\
	X \ar[r]^-{\nabla} & \mathbf{B}^n U(1)_{\mathrm{conn}}  \ar[r]^-{\text{forget}} & \mathbf{B}^n U(1)~.
  }
  }
$$
\endofproof

\begin{remark}
  In the context of higher geometry, the total space object $P$ may have a deeper meaning
  than in ordinary geometry. For instance if $X = \mathbf{B}G_{\mathrm{conn}}$ is the stack of $G$-connections  for some
  Lie  group $G$, or more generally the higher stack of $G$-$\infty$-connections for some smooth higher group $G$, and $\nabla:\mathbf{B}G_{\mathrm{conn}}\to  \mathbf{B}^n U(1)_{\mathrm{conn}}$ is some universal differential characteristic class, then the total space $P$ is itself also a higher
  stack: namely it is the higher stack of  $G$-gauge fields equipped with 
  a trivialization of their underlying topological class $\chi(\nabla)$.
  Noteworthy examples of this phenomenon
  are discussed below in sections \ref{3dCSSpin}, \ref{7dCSString} and
  \ref{DiffT}.
\end{remark}

\subsection{The Beilinson-Deligne cup product as a morphism of stacks}
\label{Sec BD2}

The Beilinson-Deligne cup product is an explicit presentation of the cup product in ordinary differential cohomology for the case that the latter is modeled by the \v{C}ech-Deligne cohomology. 

\begin{definition}
The \emph{Beilinson-Deligne cup product} is the morphism of sheaves of chain complexes 
$$
\cup_{\mathrm{BD}}: \mathbb{Z}[p+1]^\infty_D\otimes  \mathbb{Z}[q+1]^\infty_D ~\longrightarrow~ \mathbb{Z}[p+q+2]^\infty_D,
$$
given on homogeneous elements $\alpha$, $\beta$ as follows:

$$
  \alpha \cup_{\mathrm{BD}} \beta :=
  \left.
    \begin{cases}
    \alpha \beta  &\text{ if } \mathrm{deg}(\alpha) = p+1,
    \\
    \alpha \wedge d\beta &\text{ if }  \mathrm{deg}(\alpha) \leq p 
    \text{ and } \mathrm{deg}(\beta) = 0, \\
    0  &\text{ otherwise}.
    \end{cases}
  \right.
$$
\end{definition}
A survey of this can be found in \cite{Brylinski} (around Prop. 1.5.8 there).
\begin{remark}
 When restricted to the diagonal in the case that $p= q$, 
 this means that the cup product sends a $p$-form $\alpha$ to the $(2p+1)$-form 
 $\alpha \wedge d \alpha$. This is of course the local Lagrangian for 
 cup product Chern-Simons theory of $p$-forms. We discuss this case in detail in
 section \ref{4k+3}.
\end{remark}
The Beilinson-Deligne cup product is associative and commutative up to homotopy, so it induces an associative and commutative cup product on ordinary differential cohomology,
\[
\cup_{\mathrm{BD}}:H^{p+1}_{\mathrm{diff}}(X;\mathbb{Z})\otimes H^{q+1}_{\mathrm{diff}}(X;\mathbb{Z})\to H^{p+q+2}_{\mathrm{diff}}(X;\mathbb{Z}),
\]
covering the usual cup product in integral cohomology. 
\par
We may now refine the differential cup product to $\infty$-stacks.
With the presentation of $\infty$-stacks by the Dold-Kan correspondence
discussed above this is now immediate, but important.
\begin{definition}
For $p,q \in \mathbb{N}$, we denote by
$$
  \mathbf{\cup}_{\mathrm{conn}} : 
    \mathbf{B}^{p}U(1)_{\mathrm{conn}} \times
   \mathbf{B}^{q}U(1)_{\mathrm{conn}}
   \to
   \mathbf{B}^{p+q+1} U(1)_{\mathrm{conn}}
$$
the morphism of (higher) stacks associated to the Beilinson-Deligne cup product $\cup_{\mathrm{BD}}: \mathbb{Z}[p+1]^\infty_D\otimes  \mathbb{Z}[q+1]^\infty_D ~\longrightarrow~ \mathbb{Z}[p+q+2]^\infty_D$ by Proposition \ref{cup-products} and stackification.
 \label{CupOnStacks}
 \label{ExtendedDifferentialCup}
\end{definition}

\begin{example}
For $p=q=1$ we obtain a natural morphism
\[
\mathbf{\cup}_{\mathrm{conn}}:\mathbf{B}U(1)_{\mathrm{conn}}\times \mathbf{B}U(1)_{\mathrm{conn}}\to \mathbf{B}^3U(1)_{\mathrm{conn}}.
\]
In particular for a smooth manifold $X$ this means  that cocylce data $(A_\alpha,g_{\alpha\beta})$ and $(A'_\alpha,g'_{\alpha\beta})$ for two $U(1)$-principal bundles with connection on $X$ can be used to define a $U(1)$-$3$-bundle with connection on $X$. To do so, one first lifts the transition functions of the bundles to a smooth function
\[
\tilde{g}_{\alpha\beta}, \tilde{g}'_{\alpha\beta}: U_{\alpha\beta}\to \mathbb{R}.
\]
The $\tilde{g}_{\alpha\beta}$ will not in general be a cocycle with values in $\mathbb{R}$, but their failure to be a cocycle will be well behaved: there will exist an integer $n_{\alpha\beta\gamma}$ such that
\[
\tilde{g}_{\alpha\beta}+\tilde{g}_{\beta\gamma}+\tilde{g}_{\gamma\alpha}=2\pi i n_{\alpha\beta\gamma},
\]
and analogously for $\tilde{g}'_{\alpha\beta}$. The triples $(n_{\alpha\beta\gamma},\tilde{g}_{\alpha\beta},A_\alpha)$ and $(n'_{\alpha\beta\gamma},\tilde{g}'_{\alpha\beta},A'_\alpha)$ are degree zero cocycles in the total complex of $X$ with coefficients in 
$\mathbb{Z}[2]^\infty_D$ relative to the good open cover $\mathcal{U}$, 
and so their Beilinson-Deligne cup product will produce a degree zero cocycle $(m_{\alpha\beta\gamma\delta\epsilon},l_{\alpha\beta\gamma\delta},h_{\alpha\beta\gamma},k_{\alpha\beta},B_\alpha)$ with coefficients in $\mathbb{Z}[4]^\infty_D$.
Explicitly, these data are as follows:
\begin{itemize}
\item $m_{\alpha\beta\gamma\delta\epsilon}=n_{\alpha\beta\gamma}n'_{\gamma\delta\epsilon}$ is an integer, for any quintuple intersection $U_{\alpha\beta\gamma\delta\epsilon}$;
\item $l_{\alpha\beta\gamma\delta}=n_{\alpha\beta\gamma}\tilde{g}'_{\gamma\delta}$ is a smooth $\mathbb{R}$-valued function on $U_{\alpha\beta\gamma\delta}$, for any quadruple intersection;
\item $h_{\alpha\beta\gamma}=n_{\alpha\beta\gamma}A'_\gamma$ is a smooth 1-form on $U_{\alpha\beta\gamma}$ for any triple intersection;
\item $k_{\alpha\beta}=\tilde{g}_{\alpha\beta}\wedge dA'_\beta$ is a smooth 2-form on $U_{\alpha\beta}$ for any triple intersection;
\item $B_\alpha=A_\alpha\wedge d A'_\alpha$ a smooth 3-form on $U_{\alpha}$ for any $U_\alpha$ in the cover.
\end{itemize}
If we set 
\[
\lambda_{\alpha\beta\gamma\delta}=\exp(2\pi i \, l_{\alpha\beta\gamma\delta}):U_{\alpha\beta\gamma\delta}\to U(1), 
\]
then $\{\lambda_{\alpha\beta\gamma\delta}\}$ is a 3-cocycle with values in $U(1)$ and so defines a principal $U(1)$-3-bundle on $X$. The data $(h_{\alpha\beta\gamma},k_{\alpha\beta},B_\alpha)$ are then the differential forms data for a connection on this 3-bundle. Note how the local liftings $\tilde{g}_{\alpha\beta}$ as well as the integers $n_{\alpha\beta\gamma}$ measuring the failure of the liftings to be a cocycle, appear in the formulas for the 1-form and 2-form data of the connection. 
\par
By composing the morphism $\mathbf{\cup}_{\mathrm{conn}}$ with the diagonal embedding of $\mathbf{B}U(1)_{\mathrm{conn}}$ into $\mathbf{B}U(1)_{\mathrm{conn}}\times \mathbf{B}U(1)_{\mathrm{conn}}$ we get
a natural morphism of stacks
\[
\mathbf{B}U(1)_{\mathrm{conn}}\to \mathbf{B}^3U(1)_{\mathrm{conn}}
\] 
associating a $U(1)$-3-bundle with connection to a $U(1)$-bundle with connection on $X$. In particular, we see from the explicit formulas above that, if $\{A_\alpha\}$ are the local 1-form data for a $U(1)$-connection, then the 3-forms $\{A_\alpha\wedge dA_\alpha\}$ are the local 3-form data for the local 3-form component of a 3-connection on a $U(1)$-3-bundle.
\par
Note that if we start with a connectionon on a \emph{trivial} $U(1)$-bundle on a fixed spacetime $\Sigma$ , then we can choose cocycle data $(g_{\alpha\beta},A_\alpha)$ for it with $g_{\alpha\beta}\equiv 1$ and $A_\alpha$ the local data for a global 1-form $A$ on $\Sigma$. Choosing the trivial lifts $\tilde{g}_{\alpha\beta}\equiv 0$ we then see that the associated 3-cocycle is $(1,0,0,A_\alpha\wedge dA_\alpha)$, i.e. also the associated $U(1)$-3-bundle is trivial and the associated 3-connection reduces to the globally defined 3-form $A\wedge dA$ on $\Sigma$.
\end{example}

\begin{remark}
Since
the Beilinson-Deligne cup product is associative up to homotopy, it induces a well defined morphism
$$
\mathbf{B}^{n_1}U(1)_{\mathrm{conn}}\times \mathbf{B}^{n_2}U(1)_{\mathrm{conn}}\times \cdots \times \mathbf{B}^{n_{k+1}}U(1)_{\mathrm{conn}}\to \mathbf{B}^{n_1+\cdots+n_{k+1}+k}U(1)_{\mathrm{conn}}.
$$
For instance, if $n_1=\cdots=n_{k+1}=3$, we find a morphism
$$
\left(\mathbf{B}^{3}U(1)_{\mathrm{conn}}\right)^{k+1}\to \mathbf{B}^{4k+3}U(1)_{\mathrm{conn}}.
$$
Composing this with the diagonal embedding $\mathbf{B}^{3}U(1)_{\mathrm{conn}}\to \left(\mathbf{B}^{3}U(1)_{\mathrm{conn}}\right)^{k+1}$ we get a natural morphism
\[
\mathbf{B}^{3}U(1)_{\mathrm{conn}} \to \mathbf{B}^{4k+3}U(1)_{\mathrm{conn}},
\]
which associates a $(4k+3)$-$U(1)$-bundle with connection to a $U(1)$-3-bundle with connection. From the explicit expression of the Beilinson-Deligne cup product we see that, if $\{C_\alpha\}$ are the local 3-form data for the 3-connection on the $U(1)$-3-bundle, 
then the $4k+3$-form local data for the corresponding connection on the associated $U(1)$-$(4k+3)$-bundle are
\(
C_\alpha\wedge \underbrace{dC_\alpha\wedge \cdots \wedge{dC_\alpha}}_{k\text{ times}}.
\label{Eq ktimes}
\)
We will illustrate the above constructions with various (classes of) examples arising from 
string theory and M-theory in Section \ref{ExamplesAndApplications}.
\end{remark}

\subsection{Fiber integration in Deligne cohomology as a morphism of stacks}
\label{FibInt}

Let $\Sigma_k$ be a $k$-dimensional closed (i.e., compact and without boundary) oriented manifold of dimension $k\leq n$, and let $M$ be an arbitrary smooth manifold. Gomi and Terashima show in \cite{gomi-terashima1} that the fiber integration map in ordinary cohomology
\[
\int_{\Sigma_k}:H^{n+1}(\Sigma_k\times M;\mathbb{Z})\to H^{n-k+1}(M;\mathbb{Z})
\]
has a lift to differential cohomology
\[
\int_{\Sigma_k}:H^{n+1}_{\mathrm{diff}}(\Sigma_k\times M;\mathbb{Z})\to H^{n-k+1}_{\mathrm{diff}}(M;\mathbb{Z})
\]
which is the degree zero part of the fiber integration map in Deligne (hyper-)cohomology:
\[
\int_{\Sigma_k}:\mathbb{H}^{\bullet}(\Sigma_k\times M;\mathbb{Z}[n+1]^\infty_D)\to \mathbb{H}^{\bullet}_{\mathrm{diff}}(M;\mathbb{Z}[n-k+1]^\infty_D).
\]
Moreover, this fiber integration map in hypercohomology is induced by a morphism between the total chain complexes
\[
\int_{\Sigma_k}:\mathrm{Tot}^\bullet(\mathcal{U}_{\Sigma_k}\times \mathcal{U}_M;\mathbb{Z}[n+1]^\infty_D)\to \mathrm{Tot}^\bullet(\mathcal{U}_M;\mathbb{Z}[n-k+1]^\infty_D),
\]
where $\mathcal{U}_\Sigma$ and $\mathcal{U}_M$ are good open covers of $\Sigma$ and $M$, respectively. In particular, if $U$ is a Cartesian space, Gomi and Terashima construction gives a morphism of chain complexes
\[
\int_{\Sigma_k}:\mathrm{Tot}^\bullet(\mathcal{U}_{\Sigma_k}\times U;\mathbb{Z}[n+1]^\infty_D)\to \mathrm{Tot}^\bullet(U;\mathbb{Z}[n-k+1]^\infty_D)
\]
natural in $U$, and so a morphism of presheaves of chain complexes on Cartesian spaces. Applying Dold-Kan and stackification this gives a morphism
\[
  \mathrm{hol}_\Sigma:=  \exp(2 \pi i \int_{\Sigma_k} )
	:
	[\Sigma_k, \mathbf{B}^n U(1)_{\mathrm{conn}}]
	\to
	\mathbf{B}^{n-k}U(1)_{\mathrm{conn}}
\]
  from the moduli $n$-stack of $U(1)$-$n$-bundles with connection on $\Sigma_k$
  to the $(n-k)$-stack of $U(1)$-$(n-k)$-bundles with connection. For $n=1$ this is the morphism described in example \ref{example.holonomy}. By analogy with the $n=1$ case, the morphism $\mathrm{hol}_\Sigma$ is called the \emph{$k$-dimensional holonomy} (or $k$-dimensional parallel transport) along $\Sigma$, see, e.g.,  \cite{gomi-terashima2}. Notice how the exponentiation $\exp(2\pi i -)$ has appeared as an effect of going from the Deligne complexes $\mathbb{Z}[p+1]^\infty_D$ to the quasi-isomorphic complexes of sheaves
  \[
   \left[
 \xymatrix{ 
     \cdots\ar[r]&0\ar[r]&C^\infty(-;U(1))\ar[rr]^-{{\frac{1}{2\pi}d\mathrm{log}} }&&\Omega^1\ar[r]^{d}&\cdots\ar[r]&\Omega^{p-1}\ar[r]^{d}&\Omega^p
	}
	\right]\;.
  \]

\begin{remark}
For $k=n$ the $n$-dimensional holonomy is a morphism
\[
\mathrm{hol}_{\Sigma_n}= \exp(2 \pi i \int_{\Sigma_n} (-))
	:
	[\Sigma_n, \mathbf{B}^n U(1)_{\mathrm{conn}}]
	\to
	\mathbf{U}(1),
\]
where $\mathbf{U}(1)$ is the 0-stack (i.e., the sheaf) of $U(1)$-valued smooth functions. This is the  \emph{$n$-volume holonomy} of a $U(1)$-$n$-connection over the
  ``$n$-dimensional Wilson volume'' $\Sigma_n$.
\end{remark}

\begin{remark}
If the manifold $\Sigma_k$ is a product, $\Sigma_k=\Sigma_{k_1}\times\Sigma_{k_2}$, then fiber integration along $\Sigma_k$ can be computed in two steps: first by fiber integrating along $\Sigma_{k_1}$ and then integrating the result along $\Sigma_{k_2}$. n terms of the holonomy morphisms, this Fubini formula translates into the natural homotopy commutative diagram
\[
\xymatrix{
[\Sigma_{k_1},[\Sigma_{k_2}, \mathbf{B}^n U(1)_{\mathrm{conn}}]\ar@{=}[d]^\wr\ar[rr]^{[\Sigma_{k_1},\mathrm{hol}_{\Sigma_2}]}&& [\Sigma_{k_1},\mathbf{B}^{n-k_2} U(1)_{\mathrm{conn}}]\ar[d]^{\mathrm{hol}_{\Sigma_1}}\\
[\Sigma_{k_1}\times \Sigma_{k_2}, \mathbf{B}^n U(1)_{\mathrm{conn}}]\ar[rr]^{\mathrm{hol}_{\Sigma_{k_1}\times \Sigma_2}}&&\mathbf{B}^{n-k_1-k_2} U(1)_{\mathrm{conn}} }\,,
\]
where the left vertical equivalence is the one described in remark \ref{rem.yoneda}.
\end{remark}

\subsection{Extended higher Chern-Simons actions}
\label{extendedCS}

The general setting of $n$-dimensional Chern-Simons theory is that of a smooth (higher) stack of fields endowed with a $U(1)$-$n$-bundle with connection:
\[
\nabla:\mathbf{Fields}\to \mathbf{B}^nU(1)_{\mathrm{conn}},
\]
the \emph{prequantum $n$-bundle of the theory}.
By functoriality of the internal hom, this gives natural morphisms
\[
[\Sigma_k,\mathbf{Fields}]\xrightarrow{[\Sigma_k,\nabla]}[\Sigma_k,\mathbf{B}^nU(1)_{\mathrm{conn}}]
\]
for any closed oriented $k$-dimensional manifold $\Sigma_k$. If $k\leq n$ we can further compose with fiber integration along $\Sigma_k$ to 
%
get a prequantum bundle on the moduli stack of fields configuration on $\Sigma_k$.
\begin{definition}
Let
$
  \nabla
  :
  \mathbf{Fields}
  \to
  \mathbf{B}^n U(1)_{\mathrm{conn}}
$
be a differential characteristic map. Then for $\Sigma_k$
a closed smooth manifold of dimension $k \leq n$, we call
$$
  \exp(2 \pi i \int_{\Sigma_k} [\Sigma_k, \nabla] )
  :
  \xymatrix{
    [\Sigma_k, \mathbf{Fields}]
	\ar[rr]^-{[\Sigma_k, \nabla]}
	&&
	[\Sigma_k,\mathbf{B}^n U(1)_{\mathrm{conn}}]
	\ar[rr]^{\mathrm{hol}_{\Sigma_k}}
	&&
	\mathbf{B}^{n-k}U(1)_{\mathrm{conn}}
  }
$$
the \emph{off-shell prequantum $(n-k)$-bundle of the 
$\infty$-Chern-Simons theory} defined by $\nabla$. 
For $n = k$ we have a \emph{circle 0-bundle} 
$$
  \exp(2 \pi i \int_{\Sigma_n} [\Sigma_n, \nabla] )
  :
  \xymatrix{
    [\Sigma_n, \mathbf{Fields}]
	\ar[rr]^-{[\Sigma_n, \nabla]}
	&&
	[\Sigma_n,\mathbf{B}^n U(1)_{\mathrm{conn}}]
	\ar[rr]^-{\mathrm{hol}_{\Sigma_n}}
	&&
	\mathbf{U}(1)  }\;,
$$
which we call the  \emph{(stacky) action functional} of the theory.
\label{FiberIntegration}
\end{definition}
A detailed discussion of higher prequantum bundles as such is in 
\cite{hgp}.
\begin{remark}[Gauge invariance and smoothness]
 \label{gauge invariance and smoothness}
 The fact that the stacky action functional 
 \[
 [\Sigma_n,\mathbf{Fields}]\to \mathbf{U}(1)
 \]
 goes from the smooth (higher) stack of filed configurations on $\Sigma_n$ to the the sheaf $\mathbf{U}(1)$ of smooth $U(1)$-valued functions (which is a smooth 0-stack) tells us that the action functional is smooth. Moreover, since the target is a 0-stack (and so tehre are no nontrivial gauge transformations on the target), the action functional necessarily maps every gauge transformation in $[\Sigma_n,\mathbf{Fields}]$ to the identity. This means that once formulated in the language of higher stacks as in definition \ref{FiberIntegration}, the action functional is automatically smooth and gauge invariant. Evaluating the stacky action functional on the one-point manifold, we get a morphism of $\infty$-groupoids
 \[
 \mathbf{H}(\Sigma_n,\mathbf{Fields})\to U(1)_{\mathrm{disc}},
 \]
 where $U(1)_{\mathrm{disc}}$ is the discrete groupoid corresponding to $U(1)$ seen as a set. Passing to the sets of connected components, one finds the \emph{classical action functional} of the theory,
 \[
\exp(i S): \mathrm{Fields}(\Sigma_n)/\mathrm{gauge}\to U(1).
 \]

\end{remark}

The crucial example to have in mind is the following.
\begin{example}[Tradtional Chern-Simons theory]
Let $G$ be a compact simple and simply connected Lie group, and let 
\[
 \mathbf{c}_{\mathrm{conn}}:\mathbf{B}G_{\mathrm{conn}}\to \mathbf{B}^3U(1)_{\mathrm{conn}}
\]
be a differential refinement of a cohomology class $H^4(BG;\mathbb{Z})$ to a morphism of stacks from the stack of principal $G$-bundles with connection to the 3-stack of $U(1)$-3-bundles with connection (see \cite{brylinski-mclaughlin, FSS} for a detailed construction of these differential refinements). Then the induced 
classical action functional
\[
\{\text{Principal $G$-bundles with connection on $\Sigma_3$}\}/\mathrm{gauge}\to U(1)
\]
is the standard Chern-Simons action \cite{FreedCS}. Also the prequantum 1- and 2-bundles associated with $\mathbf{c}_{\mathrm{conn}}$ have a classical interpretation. Namely, 
$$
 \xymatrix{
    [S^1, \mathbf{B}G_{\mathrm{conn}}]
	\ar[rr]^-{[S^1, \mathbf{c}_{\mathrm{conn}}]}
	&&
	[S^1,\mathbf{B}^3 U(1)_{\mathrm{conn}}]
	\ar[rr]^-{\mathrm{hol}_{S^1}}
	&&
	\mathbf{B}^{2}U(1)_{\mathrm{conn}}
  }
$$
is the $U(1)$-bundle gerbe with connection inducing the Wess-Zumino-Witten gerbe on $G$ and 
 $$
  \xymatrix{
    [\Sigma_2, \mathbf{B}G_{\mathrm{conn}}]
	\ar[rr]^-{[\Sigma_2, \mathbf{c}_{\mathrm{conn}}]}
	&&
[\Sigma_2,\mathbf{B}^3 U(1)_{\mathrm{conn}}]	\ar[rr]^-{\mathrm{hol}_{\Sigma_2}}
	&&
	\mathbf{B}^{1}U(1)_{\mathrm{conn}}
  }
$$
is the $U(1)$-bundle with connection whose curvature form induces the canonical symplectic structure on the moduli space of flat $G$-connections on a Riemann surface $\Sigma_2$. See \cite{FiorenzaSatiSchreiberCS} for a careful discussion of these aspects of classical Chern-Simons theory from the point of view of smooth higher stacks.
%
\end{example}

\noindent Notice how the stacky realization subsumes several fundamental aspects 
of 
Chern-Simons theory:

\begin{enumerate}
  \item \emph{Gauge invariance and smoothness of the action functional}. This is remark \ref{gauge invariance and smoothness} above.
  
 \vskip 3mm
  
  \item \emph{Inclusion of instanton sectors (nontrivial $G$-bundles)}. Ordinary 3-dimensional Chern-Simons theory is
often discussed for the special case when the gauge group $G$ is connected 
and simply connected. This
yields a drastic simplification compared to the general case. Namely a 
compact 1-connected Lie group is automatically 2-connected, and so its classifying
space $BG$ is 3-connected. Hence 
every continuous map $\Sigma_3 \to B G$ out of a 3-manifold is homotopic to the trivial map.
This implies that every $G$-principal bundle over $\Sigma_3$ is trivializable. 
As a result, the moduli stack $[\Sigma_3, \mathbf{B}G_{\mathrm{conn}}]$
 of $G$-gauge fields on $\Sigma_3$, is actually equivalent 
to the stack of $\mathfrak{g}$-valued 1-forms on $\Sigma_3$ and
gauge transformations between these, which is indeed the familiar configurations 
space for 3-dimensional $G$-Chern-Simons theory.
\par
One should compare this to the case of 4-dimensional $G$-gauge theory on a 
4-dimensional manifold $\Sigma_4$,
such as $G$-Yang-Mills theory. In this case $G$-principal bundles may be nontrivial,
but are classified enirely by the second Chern class (or first Pontrjagin class) 
$[c_2] \in H^4(\Sigma_4, \pi_3(G))$.
In Yang-Mills theory with $G = SU(n)$, this class is known as the 
\emph{instanton number} of the gauge field.
\par
The simplest case where non-trivial classes occur already in dimension 3 
is the non-simply connected
gauge group $G = U(1)$, discussed in section \ref{3dU1CS} below. 
Here the moduli stack of fields 
$[\Sigma_3, \mathbf{B}U(1)_{\mathrm{conn}}]$ contains configurations which are not
given by globally defined 1-forms, but by connections on non-trivial circle bundles. 
By analogy with the case
of $SU(n)$-Yang-Mills theory, we will loosely refer to such field configurations 
as instanton field congurations,
too. In this case it is the first Chern class $[c_1] \in H^2(X,\mathbb{Z})$ 
that measures the non-triviality of the bundle.
If the first Chern-class of a $U(1)$-gauge field configurations happens to vanish, 
then the gauge field is again given by just a 1-form $A \in \Omega^1(\Sigma_3)$, 
the familiar gauge potential of electromagnetism. The value of the
3d Chern-Simons action functional on such a zero-instanton configuration 
is then simply the familiar expression 
$$
  \exp(i S(A)) = \exp(2 \pi i \int_{\Sigma_3} A \wedge d A)
  \,,
$$
where on the right we have the ordinary integration of the 3-form 
$A \wedge d A $ over $\Sigma_3$.

\vspace{3mm}
In the general case, however, i.e., when the configuration 
in $[\Sigma_3, \mathbf{B}U(1)_{\mathrm{conn}}]$ has non-trivial first Chern class,
the expression for the value of the action functional on this configuration is 
more complicated. If we pick a
good open cover $\{U_i \to \Sigma_3\}$, then locally on 
each patch $U_i$ the gauge field is given by
a 1-form $A_i$ and there is a contribution to the action functional coming from the integral over $U_\alpha$ 
of the local 3-form $A_\alpha \wedge d A_\alpha$ (suitably cut by the use of a partition of unit, see example \ref{example.holonomy}). But there are further terms to be included to get the correct action functional. It is precisely this what the fiber integration construction in definition \ref{FiberIntegration} achieves.
 
 \vspace{3mm}
  \item \emph{Level quantization.}   Traditionally, Chern-Simons theory in 3-dimensions
  with a gauge group $G$  which is connected and simply connected group 
  comes in a family parameterized by a \emph{level} $k \in \mathbb{Z}$.
  This level is secretly the cohomology class of the differential characteristic
  map 
  $
    \mathbf{c}_{\mathrm{conn}} : \mathbf{B}G_{\mathrm{conn}} \to \mathbf{B}^3 U(1)_{\mathrm{conn}}
  $
in 
  $
     H^4(B G, \mathbb{Z}) \simeq \mathbb{Z}$.
  So the traditional level is a cohomological shadow of the differential characteristic map 
  that we interpret as the off-shell prequantum $n$-bundle in full codimension $n$
  (down on the point). Notice that for a general smooth $\infty$-group $G$ the cohomology
  group $H^{n+1}(B G , \mathbb{Z})$ need not be equivalent to $\mathbb{Z}$ and so 
  in general the level need not be an integer.
  For for every smooth $\infty$-group $G$, and given 
  a morphism of moduli stacks 
  $\mathbf{c}_{\mathrm{conn}} : \mathbf{B}G_{\mathrm{conn}} \to \mathbf{B}^n U(1)_{\mathrm{conn}}$,
  also every integral multiple $k \mathbf{c}_{\mathrm{conn}}$ gives an
  $n$-dimensional Chern-Simons theory, ``at $k$-fold level''. The converse is in 
  general hard to establish: one is asking whether a given $\mathbf{c}_{\mathrm{conn}}$ can be divided
  by a fixed integer. 
  For instance for 3-dimensional Chern-Simons theory division by 2 may be possible
  for a Spin-structure. For 7-dimensional Chern-Simons theory division by 6 may be possible
  in the presence of a String-structure \cite{FiorenzaSatiSchreiberI}.
   
 \vspace{3mm}
  \item \emph{Definition on non-bounding manifolds and 
    relation to topological Yang-Mills on bounding manifolds.} Ordinary level $k$ 3-dimensional Chern-Simons theory is often defined on 
  bounding 3-manifolds $\Sigma_3$ by the formula
  $$
    \exp(i S(\nabla)) = \exp(2 \pi i k \int_{\Sigma_4} 
	\langle F_{\widehat \nabla} \wedge F_{\widehat \nabla}\rangle)
	\,,
  $$
  where $\Sigma_4$ is any 4-manifold with $\Sigma_3 = \partial \Sigma_4$ and where
  $\widehat \nabla$ is any extension of the gauge field configuration from 
  $\Sigma_3$ to $\Sigma_4$. Similar expressions exist for higher dimensional Chern-Simons theories.
  If one takes these expressions to be the actual definition of Chern-Simons
  action functional, then one needs extra discussion for which manifolds 
  (with desired structure) are bounding, hence which vanish in the respective
  cobordism ring, and, more seriously, one needs to exclude from the discussion those 
  manifolds which are not bounding.
  For example, in type IIB string theory one encounters the cobordism group 
$\Omega_{11}^{\rm Spin}(K(\Z, 6))$ \cite{Witten96}, which is 
proven to vanish in \cite{KS2}, meaning that all the desired manifolds happen
to be bounding.
\par
We emphasize that the formula for the action functinal given in definition \ref{FiberIntegration} applies generally,
 whether or not a manifold
 is bounding. Moreover, it is guaranteed that 
 \emph{if} $\Sigma_n$ happens to be bounding after all, then the action
 functional is equivalently given by integrating a higher curvature invariant over
 a bounding $(n+1)$-dimensional manifold.
 At the level of differential cohomology classes 
 of a differential manifold $X$, this is 
a well-known property 
 which is an explicit axiom in the equivalent
formulation by Cheeger-Simons differential characters:
 a Cheeger-Simons differential character of degree $(n+1)$ 
 is by definition a group homomorphism from (piecewise smooth) $n$-cycles in $X$ to $U(1)$ such 
 that whenever an $n$-cycle happens to be represented by $f_*\partial\Sigma_{n+1}$ for some smooth map $f:\Sigma_{n+1}\to X$, the value in $U(1)$
 is given by the exponentiated integral of $f^*\omega$ over $\Sigma_{n+1}$, for a fixed $(n+1)$-form $\omega$ over $X$  (a review and further pointers are given in \cite{HopkinsSinger}).
 With reference to such differential characters, Chern-Simons action 
 functions have been formulated for instance in \cite{Witten96,Witten98}.
 The sheaf hypercohomology classes of the Deligne complex that we are concerned with  
 here are well known to be equivalent
 to these differential characters, and {\v C}ech-Deligne cohomology has the 
 advantage that, with the results from
  \cite{gomi-terashima1} which are at the base of  
 definition \ref{FiberIntegration} above, it yields explict formulas for the action functional
 on non-bounding manifolds in terms of local differential form data.

%
%
%
%
%

  

\end{enumerate}

\subsection{Higher abelian Chern-Simons theories with background charge and quadratic refinement} 
\label{CSWithBackgroundCharge}

Let $\Sigma_{4n+3}$ be an $4n+3$-dimensional closed oriented manifold. Then the Beilinson-Deligne cup product and  holonomy along $\Sigma$ define a morphism of stacks
\begin{align*}
[\Sigma_{4n+3},\mathbf{B}^{2n+1}U(1)_{\mathrm{conn}}]&\times [\Sigma_{4n+3},\mathbf{B}^{2n+1}U(1)_{\mathrm{conn}}]\cong
[\Sigma_{4n+3},\mathbf{B}^{2n+1}U(1)_{\mathrm{conn}}\times \mathbf{B}^{2n+1}U(1)_{\mathrm{conn}}]\xrightarrow{\cup_{\mathrm{conn}}}\\
&\xrightarrow{\cup_{\mathrm{conn}}}
[\Sigma_{4n+3},\mathbf{B}^{4n+3}U(1)_{\mathrm{conn}}]\xrightarrow{\mathrm{hol}_{\Sigma_{4n+3}}} \mathbf{U}(1)
\end{align*}
inducing the
\emph{intersection pairing}
\begin{align*}
 \langle -,-\rangle: H^{2n+2}_{\mathrm{diff}}(\Sigma_{4n+3};\mathbb{Z})\otimes H^{2n+2}_{\mathrm{diff}}(\Sigma_{4n+3};\mathbb{Z})\to  U(1)\\
 ([\hat{a}],[\hat{b}])\mapsto \exp(2\pi i\int _{\Sigma_{4n+3}}\hat{a}\cup_{\mathbf{BD}} \hat{b})
\end{align*}
in Deligne cohomology.
\par
For $n = 0$ this yields the action functional of 
ordinary 3d Chern-Simons theory with gauge group $U(1)$. Notice that 
its expression in terms of the differential cup product on differential
cocycles in this case is what takes care of the fact that there are 
in general non-trivializable $U(1)$-principal bundles on a 3-dimensional manifold.
This is contrary to the case of a simply connected gauge group, 
for which over a 3-manifold every bundle is trivilizable, so that 
every pricipal connection is given by a globally defined differential 1-form.
The above cup product formula takes care of the subtlety that this is not
in general the case for $U(1)$-principal connections on a 3-manifold.

One step up the hierarchy, for $n =1$, the above action functional defines a 7-dimensional
Chern-Simons theory of principal 3-form connections. This theory appears
as the bosonic abelian sector of the Chern-Simons term of 11-dimensional 
supergravity, when the latter is compactified on a 4-sphere. 
This is important, since $\mathrm{AdS}_7/\mathrm{CFT}_6$-duality asserts
that the holographic dual of this 7-dimensional theory controls the 
self-dual 2-form theory on the worldvolume of the M5-brane
\cite{Witten96}.  
This we turn to below in \ref{7dSugra}. However, in order for this
holographic relation to work precisely, this 7-dimensional cup product 
action functional needs to be ``quadratically refined'' by imposing
a ``flux quantization'' condition
\cite{WittenFluxQuantization}. This was formalized 
on the level of differential cohomology classes in 
\cite{HopkinsSinger} via \emph{integral Wu classes}
and further refined to the level of smooth moduli stacks, as
discussed here, in our previous
article in \cite{FiorenzaSatiSchreiberII}, via a smooth stacky refinement
of integral Wu classes.
Since this is an important subtlety, we now explain 
this into the present context.	

Generally, an interesting question is to decide whether the intersection pairing admits a \emph{quadratic refinement}, i.e., if there exist a  function
$$
  q : H_{\mathrm{diff}}^{2n+2}(\Sigma_{4n+3};\mathbb{Z}) \to U(1)
$$
 for which
the intersection pairing is obtained via the polarization formula
$$
  \langle \hat a, \hat b\rangle 
    = 
  q([\hat a] + [\hat b])\,
  q([\hat a])^{-1}\,
  q([\hat b])^{-1}\,
  q(0)
  \,.
$$
The naive guess is, obviously,
\[
 q([\hat{a}])= \langle [\hat{a}],[\hat{a}]\rangle^{1/2}=\exp(\pi i\int _{\Sigma_{4n+3}}\hat{a}\cup_{\mathbf{BD}} \hat{a})\;,
 \]
but this is actually not a well defined expression due to the sign ambiguity in the square root: as $\hat{a}$ varies in the Deligne cohomology class $[\hat{a}]$, the integral $\int _{\Sigma_{4n+3}}\hat{a}\cup_{\mathbf{BD}} \hat{a}$ is only well defined mod $\mathbb{Z}$ and so the exponential which should define $q([\hat{a}])$ is only well defined up to a sign. The problem is that, since the differential classes in $H^{2n+2}_{\mathrm{diff}}(\Sigma_{4n+3};\mathbb{Z})$ refine 
\emph{integral} cohomology, we cannot in general simply divide by 2 and
pass from $\exp(2\pi i \int_{\Sigma_{4n+3}} \hat a \cup \hat a)$ to 
$\exp( \pi i  \int_{\Sigma_{4n+3}}  \hat a \cup \hat a)$: the integrand in the
latter expression just does not make any sense in general in differential cohomology.
Namely, if one tried to write it out in the ``obvious'' local formulas one would
find that it is a functional on fields which is not gauge invariant.
This phenomenon is not surprising. Something analogous happens with Chern-Simons theory
with simply-connected gauge group $G$, where the theory is consistent only at 
integer \emph{levels}; here the ``level'' is nothing but the 
underlying integral class $[a] \cup [a]$, where $[a]$ is the integral cohomology class underlying the Deligne cohomology class $[\hat{a}]$, and the theory is consistent only at ``levels which are divisible by 2''. Let us briefly explain this, by pushing forward the analogy with traditional Chern-Simons theory. Since, by dimensional reasons, $H^{4n+4}(\Sigma_{4n+3};\mathbb{Z})=0$, the underlying topological $U(1)$-$(4n+3)$-bundle of an  $U(1)$-$(4n+3)$-bundle with connection in the equivalence class $\hat{a}$ is trivializable, and one can compute the integral $\int _{\Sigma_{4n+3}}\hat{a}\cup_{\mathbf{BD}} \hat{a}$ in terms of a chosen trivialization (see Example \ref{example.holonomy} for a toy version of this computation). The ambiguity in the value of the integral is then given by the effect of a change of trivialization. If we assume that $\Sigma_{4n+3}$ is a boundary of some oriented $4n$-dimensional manifold, then the values for the integral for two different trivializations are given by the integer
\[
\int_{\Sigma_{4n+4}}[a]\cup [a]\;,
\]
where $\Sigma_{4n+4}$ is some closed oriented $4n+4$-dimensional manifolds containing $\Sigma_{4n+3}$ as an hypersurface, and where $[a]$ is the integral cohomology class underlying an extension of $[\hat{a}]$ to $\Sigma_{4n+4}$. It is precisely this integral over $\Sigma_{4n+4}$ that needs to be an even integer in order to avoid the sign ambiguity in the square root of $\langle[\hat{a}],[\hat{a}]\rangle$, and so what we want is  $[a]\cup [a]$ to be divisible by 2 in $H^{4n+4}(\Sigma_{4n+4};\mathbb{Z})$.
\par
Although the divisibility of  $[a]\cup [a]$ is an apparently  insurmountable obstruction to having a quadratic refinement of the cup-product intersection pairing, there is  actually a systematic way to obtain a square root of the quadratic form $Q([\hat{a}]):=\langle [\hat{a}],[\hat{a}]\rangle$ by \emph{shifting it}. Here we think of 
the analogy with a
quadratic form 
$$
  Q : x \mapsto x^2
$$ 
on the real numbers (a parabola in the plane).
Replacing this by 
$$
  Q^{\lambda} : x \mapsto x^2 - \lambda x
$$ 
for some real number
$\lambda$ means keeping the shape of the form, but shifting its minimum from 0 to
$\frac{1}{2}\lambda$. If we think of this as the potential term for a scalar field
$x$ then its ground state is now at $x = \frac{1}{2}\lambda$. We may say that there is
a \emph{background field} or \emph{background charge} that pushes
the field out of its free equilibrium.
\par
In order to apply this reasoning to the action quadratic form 
$Q([\hat{a}])$ on differential cocycles in a way that leads to a well-defined square root, we need a 
differential class $[\hat \lambda] \in H^{2n+2}_{\mathrm{diff}}(\Sigma_{4n+3})$ such that 
for every $[\hat a] \in H^{2n+2}_{\mathrm{diff}}(\Sigma_{4n+3})$ the class
$$
  [a] \cup [a] - [a] \cup [\lambda]
$$
in $H_{\mathrm{diff}}^{4n+4}(\Sigma_{4n+4})$ is even, i.e., it is divisible by 2 (where, as above, $[a]$ and $[\lambda]$ denote the underlying topological classes of extension of $[\hat{a}]$ and $[\hat{\lambda}]$ to $\Sigma_{4n+4}$). Because then the integral
\[
\int_{\Sigma_{4n+3}} \hat a \cup \hat a - \hat a \cup \hat \lambda
\]
will be well-defined mod $2\mathbb{Z}$ and so the \emph{shifted
action functional}
$$
q^{[\hat{\lambda}]}:= [\hat a] \mapsto \exp\left(\pi
  i \int_{\Sigma_{4n+3}} \hat a \cup_{\mathrm{BD}} \hat a - \hat a \cup_{\mathrm{BD}} \hat \lambda\right)
  \,,
$$
will be well defined. 
One directly sees that this shifted action is indeed a
quadratic refinement of the intersection pairing:
$$
q^{[\hat{\lambda}]}([\hat a] + [\hat b])
q^{[\hat{\lambda}]}([\hat a])^{-1}
  q^{[\hat{\lambda}]}([\hat b])^{-1}
  q^{[\hat{\lambda}]}(0)
  = 
  \exp(2\pi i \int_{\Sigma_{4n+3}} \hat a \cup \hat b)
  \,.
$$
The divisibility by 2 property we are requiring on $[a] \cup [a] - [a] \cup [\lambda]$  means that this class vanishes under the reduction mod 2 map
$$
  (-)_{\mathbb{Z}_2} : H^{4n+4}(\Sigma_{4n+4}, \mathbb{Z}) \to H^{4n+4}(\Sigma_{4n+4}, \mathbb{Z}_2)
$$
from integral cohomology to to $\mathbb{Z}_2$-cohomology. When one passes to $\mathbb{Z}_2$-cohomology, it is well known that, for any nonegative integer $k$, on every oriented manifold $X$ there exist a unique class $\nu_{2k+2} \in H^{2k+2}(X,\mathbb{Z}_2)$, namely, the \emph{Wu class},  such that 
$$
   [a] \cup [a] 
     - 
   [a] \cup \nu_{2k+2} 
   = 
   0
$$
 in $H^{4k+4}(X, \mathbb{Z}_2)$ for any cohomology class $[a] $ in $H^{2k+2}(X,\mathbb{Z}_2)$. Moreover, if $X$ is a $\mathrm{Spin}$-manifold, then every
Wu class $\nu_{2k+2}$  with $2k+2\equiv 0 \mod 4$ odd  can be lift to integral cohomology, so in these cases there actually exist an integral cohomology class $\lambda$ such that 
$$
   [a] \cup [a] 
     - 
   [a] \cup [\lambda] 
   \equiv 
   0 \mod 2
$$
in $H^{4k+4}(X, \mathbb{Z})$ for any cohomology class $[a] $ in $H^{2k+2}(X,\mathbb{Z})$. This is almost what we were looking for: 
the last step at the level cohomology is to lift this class $[\lambda]$ from integral to differential cohomology. To do this one needs a further piece of information. Namely, that the lift of Wu classes to integral cohomology 
is given by 
polynomials in the Pontrjagin classes of $X$
(see section E.1 of \cite{HopkinsSinger}). For instance
the degree 4 Wu class is refined by the first fractional Pontrjagin class
$\frac{1}{2}p_1$
$$
  (\tfrac{1}{2}p_1)_{\mathbb{Z}_2} = \nu_4
  \,.
$$
This was observed in \cite{Witten96} (see around eq. (3.3) there). 
So the problem of quadratic refinement of the intersection pairing in Deligne cohomology in dimension $4n+3$ is solved by taking differential refinements 
of the Pontrjagin classes \cite{brylinski-mclaughlin}. Finally one goes back
to an action functional with purely quadratic terms by introducing
the shifted field
$$
 \hat G := \hat a - \tfrac{1}{2}\hat \lambda
$$
in terms of which th above action functional takes again the quadratic 
but globally shifted form
$\exp(\pi i\int_X \hat G \cup_{\mathrm{BD}} \hat G 
- \tfrac{1}{2}\hat \lambda \cup_{\mathrm{BD}} \tfrac{1}{2}\hat \lambda  )$. 

But so far all of this is on the level of differential cohomology
classes only. In the context of \emph{extended} higher cup product 
theories that are the topic here, one needs to further lift this quadratic refinement
from differential cohomology to differential cocycles, hence from Deligne cohomology
to the moduli stacks of higher connections.  This first of all means to
refine the differential Pontrjagin classes to morphisms of smooth stacks. 
For the first and second Pontrjagin class this was accomplished in \cite{FSS},
yielding maps of the form
\[
\tfrac{1}{2}\hat{\mathbf{p}}_1:\mathbf{B}\mathrm{Spin}_{\mathrm{conn}}\to \mathbf{B}^3U(1)_{\mathrm{conn}}
\]
and
\[
\tfrac{1}{6}\hat{\mathbf{p}}_2:\mathbf{B}\mathrm{String}_{\mathrm{conn}}\to \mathbf{B}^7U(1)_{\mathrm{conn}}\;,
\]
where $\mathbf{B}\mathrm{String}_{\mathrm{conn}}$ is the 2-stack of String 2-connections.

Now to refine the above ``flux quantization condition'' 
$\hat G = \hat a - (\tfrac{1}{2}\hat \lambda)^2$ from cohomology to cocycles means
to replace the \emph{equation} on cohomology classes with the corresponding 
\emph{homotopy fiber product} of such maps of moduli stacks. This leads to 
a mixing of the $\mathrm{Spin}$-connection with the ($2n +1$)-form gauge field to a 
higher gauge field.  For the 7d cup product theory
this was discussed in detail in \cite{FiorenzaSatiSchreiberII}.
We will come back to the simpler case of the quadratic refinement
via mixed terms below in example \ref{CScharge}.

\section{Examples and applications}
\label{ExamplesAndApplications}

Here we list and discuss examples of higher extended cup-product Chern-Simons theories
constructed by the general procedure introduced above in section \ref{GeneralTheory}.
Some of the examples below are known from constructions
in string theory and M-theory, while others have maybe not been considered before. Even in the known cases, only their action functionals
(codimension 0) and their prequantum line bundles (codimension 1) are usually investigated.
Our discussion provides the refinement of the action functional
\begin{enumerate}
  \item to the full higher moduli stacks of fields;
  \item to arbitrary codimension.
\end{enumerate}
The titles of the following subsections follow the pattern
\begin{center}
  XYZ Chern-Simons theory $\;$ and $\;$ ABC theory
\end{center}
where ``ABC theory'' is an incarnation of the extended ``XYZ Chern-Simons theory''  in 
higher codimension.

\medskip
Before we proceed to section \ref{UnaryExamples},
the following list gives an overview of the various types of examples that we consider,
and how they conceptually relate to each other as specializations and/or 
combinations of other classes of 
examples. 

\vspace{5mm}
\noindent{\bf List of classes of examples.}
\begin{enumerate}
  \item
   {\bf Fully general $\infty$-Chern-Simons theory.}
  In full generality, an ``$\infty$-Chern-Simons theory'' is specified
  by a smooth gauge $\infty$-group $G$ and a differential characteristic map
  of moduli stacks
  $$
    \mathbf{c}_{\mathrm{conn}} : \mathbf{B}G_{\mathrm{conn}} \to 
	\mathbf{B}^n U(1)_{\mathrm{conn}}
	\,.
  $$
  This is such that, for $\Sigma_k$ a $k$-dimensional smooth manifold, the object
  $[\Sigma_k, \mathbf{B}G_{\mathrm{conn}}]$ 
  discussed in section \ref{FibInt}
  is the moduli stack of $G$-gauge fields on 
  $\Sigma_k$, and $[\Sigma_k, \mathbf{B}^n U(1)_{\mathrm{conn}}]$ is the 
  moduli stack of $U(1)$-$n$-bundles with connection on $\Sigma_k$. Then if $0 \leq k \leq n$ and $\Sigma_k$
  is closed and oriented, we obtain a morphism
  $$
    \exp(2 \pi i \int_{\Sigma_k} [\Sigma_k, \mathbf{c}_{\mathrm{conn}}])
	: 
	[\Sigma_k, \mathbf{B}G_{\mathrm{conn}}] \to \mathbf{B}^{n-k}U(1)_{\mathrm{conn}}
  $$
  as in section \ref{FibInt}, which gives the \emph{off-shell prequantum $(n-k)$-bundle}
  of an $n$-dimensional Chern-Simons theory. In particular, for $k = n$ this is the 
  \emph{action functional} of the higher extended Chern-Simons theory specified by 
  $\mathbf{c}_{\mathrm{conn}}$.
  \item
    {\bf Inhomogeneous $U(1)$ cup-product theories.}
    In this general context, the cup product 
	$$
	  \cup_{\mathrm{conn}} : \mathbf{B}^p U(1)_{\mathrm{conn}} \times
	  \mathbf{B}^q U(1)_{\mathrm{conn}} \to  \mathbf{B}^{p+q+1} U(1)_{\mathrm{conn}}
	$$
	from section \ref{Sec BD} for $p,q \geq 1$, is itself a differential characteristic map, 
	since we may regard it as defining an $\infty$-Chern-Simons theory with gauge 
	$\infty$-group the product  
	$(\mathbf{B}^{p-1} U(1))\times (\mathbf{B}^{q-1} U(1))$, hence by reading
	$\cup_{\mathrm{conn}}$ as
	$$
	  \cup_{\mathrm{conn}} 
	    : 
	  \mathbf{B}\left(
	     \mathbf{B}^{p-1} U(1)\times \mathbf{B}^{q-1} U(1)\right)_{\mathrm{conn}}
	    \to  
	  \mathbf{B}^{p+q+1} U(1)_{\mathrm{conn}}
	  \,.
	$$
   A class of examples of this form that does appear in the physics literature is the
   electric-magnetic coupling term in higher abelian gauge theory. This is a \emph{section}
   of the prequantum 1-bundle of this Chern-Simons theory, and the class of that bundle
   is the electric-magnetic quantum anomaly.
   
   Two variants of this theory are important.
   \begin{enumerate}
    \item
	  {\bf $U(1)$ cup-square theories.}
     In the case that
	  $p = q$ we may restrict to the diagonal of the cup pairing, hence taking the two $p$-form
	fields to be two copies of one single field. Formally this means that we are considering the 
	differential characteristic map which is the composite
	$$
	    (-)^{\cup^2_{\mathrm{conn}}}
		:
	  \xymatrix{
		\mathbf{B}^p U(1)_{\mathrm{conn}}
		\ar[r]^-{\Delta}
		&
		\mathbf{B}^p U(1)_{\mathrm{conn}} \times
		\mathbf{B}^p U(1)_{\mathrm{conn}}
		\ar[rr]^-{\cup_{\mathrm{conn}}}
		&&
		\mathbf{B}^{2p+1} U(1)_{\mathrm{conn}}
	  }
	  \,.
	$$
	For $p = 1$, this yields (the higher codimension-extended version of)
	traditional 3-dimensional $U(1)$-Chern-Simons theory. For $p = 2k+1$
	it yields the $(4k+3)$-dimensional $U(1)$-Chern-Simons theory which 
	is the holographic dual of self-dual $2k$-form theory in dimension $4k+2$.
	\item
	  {\bf Cup product of two nonabelian theories.}
    Given two possibly nonabelian gauge $\infty$-groups $G_1$ and $G_2$
	equipped with two differential characteristic maps
	$(\mathbf{c}_1)_{\mathrm{conn}}$ and $(\mathbf{c}_1)_{\mathrm{conn}}$,
	we may form the ``cup product of two nonabelian Chern-Simons theories''
	$$
	\hspace{-0.7cm}
	  (\mathbf{c}_1)_{\mathrm{conn}}
	  \cup_{\mathrm{conn}}
	  (\mathbf{c}_2)_{\mathrm{conn}}
	  :
	  \xymatrix{
	    \mathbf{B} (G_1 \times G_2)_{\mathrm{conn}}
		\ar[rr]^-{ ((\mathbf{c}_1)_{\mathrm{conn}}, (\mathbf{c}_2)_{\mathrm{conn}}) }~~
		&&
		~~\mathbf{B}^p U(1)_{\mathrm{conn}}
		\times
		\mathbf{B}^q U(1)_{\mathrm{conn}}
		\ar[r]^-{\hspace{-2mm}\cup_{\mathrm{conn}}}
		&
		\mathbf{B}^{p+q+1}U(1)_{\mathrm{conn}}
	  }
	  \,.
	$$
	This appears for instance in the electric-magnetic anomaly of the heterotic string.
	\end{enumerate}
  \item
    {\bf Cup-square of one non-abelian theory.}
    The two variants above may be combined to yield the cup product of a 
	non-abelian Chern-Simons theory with itself.
  \item
    {\bf Multiple-factor cup-product theories.}
    Finally, all of this can be considered with three cup factors
	(``cubic theories'') or more cup-factors, instead
	of just two of them as in the ``quadratic theories'' described above. 
	Examples of cubic Chern-Simons theories appear, for instance, in
	11-dimensional supergravity.
\end{enumerate}

\subsection{Unary examples}
\label{UnaryExamples}

Before discussing genuine cup-product higher Chern-Simons 
theories we consider here some indecomposable theories --
\emph{unary} cup-product theories, if one wishes -- 
that serve as building blocks for the cup product theories.

\subsubsection{Higher differential Dixmier-Douady class and higher dimensional $U(1)$-holonomy}
\label{VolumeHolonomy}

The \emph{degenerate} or rather \emph{tautological} 
case of extended $\infty$-Chern-Simons theories nevertheless
deserves special attention, since it appears universally in all other examples:
it is the case where the extended action functional is the \emph{identity} morphism
$$
  (\mathbf{DD}_n)_{\mathrm{conn}} 
    : 
   \xymatrix{
	\mathbf{B}^n U(1)_{\mathrm{conn}} 
     \ar[r]^{\mathrm{id}}
	 &
	 \mathbf{B}^n U(1)_{\mathrm{conn}}
    }\;,
$$
for some $n \in \mathbb{N}$.
Trivial as this may seem, this is the differential refinement of what 
is called the (higher) \emph{universal Dixmier-Douady class} 
the higher universal first Chern class -- of circle $n$-bundles
/ bundle $(n-1)$-gerbes, which on the topological classifying space $B^n U(1)$
is the weak homotopy equivalence
$$
  \mathrm{DD}_n
  :
  \xymatrix{
    B^n U(1)
	\ar[r]^{\hspace{-3mm}\simeq}
	&
	K(\mathbb{Z}, n+1)
  }
  \,.
$$
Therefore, we are entitled to consider
$(\mathbf{DD}_n)_{\mathrm{conn}}$ as the extended action functional
of an $n$-dimensional $\infty$-Chern-Simons theory. Over an $n$-dimensional
manifold $\Sigma_n$ the moduli $n$-stack of field configurations is 
that of circle $n$-bundles with connection on $\Sigma_n$.
In generalization to how a circle 1-bundle with connection
has a \emph{holonomy} over closed 1-dimensional oriented manifolds, we note
 that
 a circle $n$-connection
has a \emph{$n$-volume holonomy} over an $n$-dimensional closed oriented manifold $\Sigma_n$.
This is the ordinary (codimension-0) action functional associated to $(\mathbf{DD}_n)_{\mathrm{conn}}$
regarded as an extended action functional:
$$
  \mathrm{hol} := \exp(2 \pi i \int_{\Sigma_n} [\Sigma_n, (\mathbf{DD}_n)_{\mathrm{conn}}])
  :
  [\Sigma_n, \mathbf{B}^n U(1)_{\mathrm{conn}}]
  \to 
  U(1)
  \,.
$$
This formulation makes it manifest that, for $G$ any smooth $\infty$-group and
$\mathbf{c}_{\mathrm{conn}} : \mathbf{B}G_{\mathrm{conn}} \to \mathbf{B}^n U(1)_{\mathrm{conn}}$
any extended $\infty$-Chern-Simons action functional in codimension $n$, the 
induced action functional is indeed the $n$-volume holonomy of a family of 
``Chern-Simons circle $n$-connections'', i.e., that we have
$$
  \exp(2 \pi i \int_{\Sigma_n} [\Sigma_n, \mathbf{c}_{\mathrm{conn}}])
  \simeq
  \mathrm{hol}_{\mathbf{c}_{\mathrm{conn}}}
  \,.
$$
This is most familiar in the case where the moduli $\infty$-stack $\mathbf{B}G_{\mathrm{conn}}$
is replaced with an ordinary smooth oriented manifold $X$ 
(of any dimension and not necessarily
compact). In this case $\mathbf{c}_{\mathrm{conn}} : X \to \mathbf{B}^n U(1)_{\mathrm{conn}}$
modulates a circle $n$-bundle with connection $\nabla$ on this smooth manifold. 
Now regarding
this as an extended Chern-Simons action function in codimension $n$ means to 
\begin{enumerate}
  \item
    take the moduli stack of fields over a given closed oriented
	manifold $\Sigma_n$ to be $[\Sigma_n,X]$, which is simply the
	space of maps between these manifolds, equipped with its natural 
	(``diffeological'') smooth structure (for instance the smooth loop space
	$L X$ when $n = 1$ and $\Sigma_n = S^1$);
  \item 
	take the value of the action functional on a field configuration
	$\phi : \Sigma_n \to X$ to be the $n$-volume holonomy
	of $\nabla$
	$$
	  \mathrm{hol}_\nabla(\phi)
	  =
	  \exp(2 \pi i \int_{\Sigma_n} [\Sigma_n, \mathbf{c}_{\mathrm{conn}}] )
	  :
	  \xymatrix{
	    [\Sigma_n, X]
		\ar[rr]^-{[\Sigma_n,\mathbf{c}_{\mathrm{conn}}]}
		&&
		[\Sigma_n, \mathbf{B}^n U(1)_{\mathrm{conn}}]
		\ar[rrr]^-{\exp(2 \pi \int_{\Sigma_n}(-))}
		&&&
		U(1)
	  }
	  \,.
	$$
\end{enumerate}
Using 
the constructions in \ref{FibInt}
to unwind this in terms of local differential
form data, this reproduces the familiar formulas for (higher) $U(1)$-holonomy.

\subsubsection{Ordinary $3d$ $\mathrm{Spin}$-Chern-Simons theory and String-2-connections}
\label{3dCSSpin}

For $G$ any connected and simply connected compact simple Lie group
we have $H^4(B G, \mathbb{Z}) \simeq \mathbb{Z}$. In the case that
$G = \mathrm{Spin}$ is the spin group (in dimension $\geq 3$), the
generator (unique up to sign) of this group is called the
\emph{first fractional Pontrjagin class}, represented by a map
$$
  \tfrac{1}{2}p_1 : B \mathrm{Spin} \to B^3 U(1) \simeq K(\mathbb{Z}, 4)
  \,.
$$
In \cite{cohesive} it is shown that this has a unique (up to equivalence)
smooth refinement to a morphism of higher smooth moduli stacks of the form
$$
  \tfrac{1}{2}\mathbf{p}_1 : \mathbf{B}\mathrm{Spin} \to \mathbf{B}^3 U(1)
  \,.
$$
Moreover, in \cite{FSS} we construct the further differential refinement
$$
  \tfrac{1}{2}(\mathbf{p}_1)_{\mathrm{conn}}
  :
  \mathbf{B}\mathrm{Spin}_{\mathrm{conn}}
  \to 
  \mathbf{B}^3 U(1)_{\mathrm{conn}}
$$
from the moduli stack of $\mathrm{Spin}$-principal bundles with connection
to the smooth moduli 3-stack of smooth circle 3-bundles (bundle 2-gerbes) with connection.
Regarding this as an extended action functional for an $\infty$-Chern-Simons theory,
it is not hard to see that the corresponding action functional 
$$
  \exp(2 \pi i \int_{\Sigma_3})
  :
  [\Sigma_3, \mathbf{B}\mathrm{Spin}_{\mathrm{conn}}]
  \to
  U(1)
$$
is that 
of ordinary 3d $\mathrm{Spin}$-Chern-Simons theory, 
as discussed in the Introduction, section \ref{Introduction}
(in terms of cohomology classes this was first highlighted in \cite{CJMSW}).

\medskip
In addition to the comments on ordinary Chern-Simons theory regarded
as an extended prequantized theory already made in the Introduction, 
we here observe the following. The total space 2-stack is of the prequantum circle 3-bundle 
of this theory, regarded as an 0-1-2-3 extended prequantum Chern-Simons theory, is,
by Corollary \ref{TotalSpaceOfnConnection}, the homotopy pullback  of the form
$$
  \raisebox{20pt}{
  \xymatrix{
     \mathbf{B}\mathrm{String}_{\mathrm{conn}'}
	 \ar[r]
	 \ar[d]
	 &
	 \Omega^{1 \leq \bullet \leq 3}
	 \ar[r]
	 \ar[d]
	 &
	 {*}
	 \ar[d]
	 \\
	 \mathbf{B}\mathrm{Spin}_{\mathrm{conn}}
	 \ar[r]_-{\tfrac{1}{2}(\mathbf{p}_1)_{\mathrm{conn}}}
	 &
	 \mathbf{B}^3 U(1)_{\mathrm{conn}}
	 \ar[r]_{\text{forget}}
	 &
	 \mathbf{B}^3 U(1)~.
  }
  }
$$
Comparison with \cite{FiorenzaSatiSchreiberII} shows that this total space is
the moduli 2-stack $\mathbf{B}\mathrm{String}_{\mathrm{conn}'}$ of (twisted)
$\mathrm{String}$-principal 2-connections, as indicated (see the appendix to 
\cite{FiorenzaSatiSchreiberI} for a discussion of how these are 
nonabelian 2-form connections). If one further restricts along
the inclusion 
$\Omega^3 \hookrightarrow \Omega^{1 \leq \bullet \leq 3}$,
then these restrict
to the structures discussed in \cite{Waldorf}. Further 
restriction along the inclusion $\{0\}\hookrightarrow \Omega^3$ leads to 
the moduli 2-stack $\mathbf{B}\mathrm{String}_{\mathrm{conn}}$ of 
\emph{strict} $\mathrm{String}$-principal 2-connections.
If, on the other hand, one replaces
the twist by a fixed 3-form with the twist by the differential second Chern-class
of an $E_8 \times E_8$-principal bundle
$$
  \mathbf{a} : \mathbf{B}(E_8 \times E_8) \to \mathbf{B}^3 U(1)
$$
then one obtains the moduli 2-stack of $\mathrm{String}^{\mathbf{a}}$-connections
that control the anomaly-free field content, including the twisted B-field,
of the heterotic Green-Schwarz mechanism as discussed in \cite{SSSIII}.

\begin{remark}
Note that there are other effectively unary theories which fall under our formulation;
notably, those whose action functional takes the form $\int \Omega \wedge CS$,
where $CS$ is a Chern-Simons term, not necessarily three-dimensional,
and $\Omega$ is an auxiliary form on the underlying manifold, independent of the
Chern-Simons term. Since $\Omega$ is a fixed form it does not enter into the dynamics
and so the whole system is governed by the Chern-Simons term.
Examples include 
\begin{enumerate}
\item K\"ahler-Chern-Simons theories (see \cite{NS, LMNS, IKU, Li})
where $\Omega$ is a K\"ahler form,

\item holomorphic Chern-Simons theories (see \cite{Holom, FT}) where $\Omega$ is 
a middle form on a Calabi-Yau manifold, 

\item theories where $\Omega$ is a form on special holonomy manifolds (see \cite{BLN, BKS}), 
as well as 

\item theories
that lift M-theory via terms of the form $\int_{M^{27}} \Omega_{16} \wedge CS_{11}$ \cite{OP2},
where $CS_{11}$ is the Chern-Simons term in M-theory \eqref{CGG}, and $\Omega_{16}$ is
a composite form on the octonionic projective plane.
\end{enumerate}
\end{remark}

\subsubsection{$7d$ $\mathrm{String}$-Chern-Simons theory and Fivebrane 6-connections}
\label{7dCSString}

The construction of the total space of the fully extended prequantum $n$-bundle
for the $\mathrm{Spin}$ group in section \ref{3dCSSpin} above is just the first step in a whole tower of
\emph{higher Spin structure} and (extended) \emph{higher Spin-Chern-Simons theories}
that are obtained by a smooth and differential refinement of the
\emph{Whitehead tower} of $B O$. This is the tower of homotopy types on the
left vertical axis of the following diagram:
$$
\hspace{1.5cm}
  \xymatrix{
    \vdots
    \\
    \vspace{-3mm}
    B \mathrm{Fivebrane}
	\ar[d]
	\ar[r]
	&
	\cdots
	\ar[r]
	&
	{*}
	\ar[d]
    \\
    B \mathrm{String}
	\ar[d]
	\ar@{-}[r]
	&
	\cdots
	\ar[r]^<<<<<{\tfrac{1}{6}p_2}
	&
	B^8 \mathbb{Z}
	\ar[r]
	\ar[r]
	\ar@{-}[d]
	&
	{*}
	\ar[d]
	&&&&&&&
    \\
    B \mathrm{Spin}
	\ar[d]
	\ar@{-}[r]
	&
	\cdots
	\ar[rr]^<<<<<{\tfrac{1}{2}p_1}
	&
	\ar@{-}[d]
	&
	B^4 \mathbb{Z}
	\ar[r]
	\ar@{-}[d]
	&
	{*}
	\ar[d]
	&&&&&&&&&&&
    \\
    B SO
	\ar[d]
	\ar@{-}[r]
	&
	\cdots
	\ar[rrr]^<<<<<{w_2}
	&
	\ar[d]&
	\ar[d]
	&
	B^2 \mathbb{Z}_2
	\ar[r]
	\ar[d]
	&
	{*}
	\ar[d]
	&&&&&&&&
    \\
    B O
	\ar[r]
	\ar[d]
	\ar@/_1pc/[rrrrr]_<<<<<<<<<<<<<{w_1}
	&
	\cdots
	\ar[r]
	&
	\tau_{\leq 8} B O
	\ar[r]
	&
	\tau_{\leq 4} B O
	\ar[r]
	&
	\tau_{\leq 2} B O
	\ar[r]
	&
	\tau_{\leq 1} B O \simeq B \mathbb{Z}_2
	&&
	\\
	B \mathrm{GL}\;.
  }
  \,.
  $$
Here the bottom horizontal tower is the Postnikov tower of $B O$
and all rectangles are homotopy pullbacks (see section 4 of \cite{cohesive} for more details). 

\medskip
For $X$ a smooth manifold, there is a canonically given map $X \to B \mathrm{GL}$,
which classifies the tangent bundle $T X$. The lifts of this classifying map
through the above Whitehead tower correspond to structures on $X$ as indicated
in the following diagram:
$$
  \xymatrix{
    &&& B \mathrm{Fivebrane}  \ar[d]
    \\  
    &&& B \mathrm{String} \ar[d]\ar[rr]^{\tfrac{1}{6}p_2} && B^7 U(1) \ar@{}[r]|\simeq & K(\mathbb{Z},8)    
	& 
    \mbox{second fractional Pontrjagin class}    
    \\
    &&& B \mathrm{Spin} \ar[d]\ar[rr]^{\tfrac{1}{2}p_1} && B^3 U(1) \ar@{}[r]|\simeq & K(\mathbb{Z},4)
	& 
    \mbox{first fractional Pontrjagin class}    
    \\
    &&& B \mathrm{SO} \ar[d]\ar[rr]^{w_2} && B^2 \mathbb{Z}_2 \ar@{}[r]|\simeq & K(\mathbb{Z}_2,2)
	&
	\mbox{second Stiefel-Whitney class}
    \\
    &&& B \mathrm{O} \ar[d]\ar[rr]^{w_1} \ar[d]^{\simeq} && B \mathbb{Z}_2 \ar@{}[r]|\simeq & K(\mathbb{Z}_2,1)
	&
	\mbox{first Stiefel-Whitney class}
    \\
    X 
	  \ar[rrr]|{T X } 
	  \ar@{->}[urrr]|{}
	  \ar@{-->}[uurrr]|{\mathrm{orientation}\;\mathrm{structure}}
	  \ar@/^.4pc/@{-->}[uuurrr]|{\mathrm{spin}\;\mathrm{structure}}
	  \ar@/^1.4pc/@{-->}[uuuurrr]|{\mathrm{string}\;\mathrm{structure}}
	  \ar@/^2.9pc/@{-->}[uuuuurrr]|{\mathrm{fivebrane}\;\mathrm{structure}}
	&&& B \mathrm{GL}\;.
  }
  \,.
$$

\vspace{2mm}
\noindent Here the horizontal morphisms denote representatives of
universal characteristic classes. These are such that the sub-diagrams on the right of the form
$$
  \xymatrix{
     B \widehat G
	 \ar[d]
	 \\
	 B G \ar[rr]^c && B^n K
  }
$$
are homotopy fiber sequences, i.e., these universal characteristic classes represent the obstructions to lift a given structure to the ``upper level'' in the Postnikov tower. 
Several variations and twists on the above structures are considered in 
\cite{S2, S3, S4}.

\medskip
In \cite{FSS} we gave an explicit construction of the smooth refinement of the
second fractional Pontrjagin class to a morphism of smooth moduli stacks
$$
  \tfrac{1}{6}(\mathbf{p}_2)_{\mathrm{conn}}
  :
  \mathbf{B}\mathrm{String}_{\mathrm{conn}}
  \to 
  \mathbf{B}^7 U(1)_{\mathrm{con}}
$$
from that of String 2-connections to that of circle 7-bundles with connection.
When regarding this as the fully extended action functional of an 
$\infty$-Chern-Simons theory it produces a 7-dimensional theory which in 
\cite{FiorenzaSatiSchreiberI} we argued is part of the holographic dual
of the M5-brane theory, see section \ref{7dSugra} below.
As before, it is of interest to 
compute the total space of the prequantum circle 7-bundle on the moduli 2-stack
of $\mathrm{String}$-connections. 
By Corollary \ref{TotalSpaceOfnConnection} and after comparison with
\cite{SSSIII}, this is the moduli 6-stack of (twisted) 
{$\mathrm{Fivebrane}$-principal 6-connections.
$$
  \xymatrix{
     \mathbf{B}\mathrm{Fivebrane}_{\mathrm{conn}'}
	 \ar[r]
	 \ar[d]
	 &
	 \Omega^{1 \leq \bullet \leq 7}
	 \ar[r]
	 \ar[d]
	 &
	 {*}
	 \ar[d]
	 \\
	 \mathbf{B}\mathrm{String}_{\mathrm{conn}}
	 \ar[r]_-{\tfrac{1}{6}(\mathbf{p}_2)_{\mathrm{conn}}}	
	 &
	 \mathbf{B}^7 U(1)_{\mathrm{conn}}
	 \ar[r]_-{\chi}
	 &
	 \mathbf{B}^7 U(1)~.
  }
$$

Another important example is the Whitehead tower of $\mathrm{U}(n)$: 
the $k$-connected cover 
$\mathrm{U}(n)\langle 2k-1\rangle\simeq \mathrm{U}(n)\langle 2k\rangle$ is 
the natural home for the
differential refinement $({\mathbf{c}}_{k+1})_{\mathrm{conn}}:\mathbf{B}\mathrm{U}(n)\langle 2k\rangle_\mathrm{conn}\to \mathbf{B}^{2k+1}\mathrm{U}(1)_\mathrm{conn}$ of the $(k+1)$st Chern class $c_{k+1}\in H^{2k+2}(B\mathrm{U}(n);\mathbb{Z})$. Constructions analogous to those of the orthogonal case 
follow similarly. 

\subsubsection{$(2n+1)d$ Chern-Simons (super)gravity and $\mathrm{WZW}_{2n}$-models}

A remarkable property of 3-dimensional pure gravity, described by Chern-Simons theory,
is that it forms an exactly solvable system \cite{Soluble}. 
The literature contains various proposals of 
higher-dimensional (super) Chern-Simons-type theories, all \emph{unary}
in our sense here, that are argued to be possible candidates for a
theory related to actual (super)gravity \cite{BTZ, TZ}, see \cite{Zanelli}
for a review. 
In codimension 1 these theories are known to be related to 
higher dimensional analogs of the 2d WZW-model in dimension $2n$ 
\cite{BGH, GK}.
In the case of M-theory, with $n=5$, there are candidates that 
propose to describe the theory based on holography and Chern-Simons 
theory \cite{Horava, Nastase, IR}. 

\medskip
These unary nonabelian higher dimensional Chern-Simons theories
are interesting candidates for extended prequantization as considered here, but whether or
in which cases their fully extended prequantizations exist has not been worked out yet.

\subsection{Quadratic examples}

We now consider examples of extended $\infty$-Chern-Simons theories 
that are formed by the differential cup-product of two 
factors.
\subsubsection{3d $U(1)$-theory with two species and differential T-duality}
\label{DiffT}

Consider the extended $\infty$-Chern-Simons action functional given simply by the differential
cup product of def. \ref{ExtendedDifferentialCup} in the first non-trivial degree:
$$
  (-) \mathrm{\cup}_{\mathrm{conn}} (-)
  :
  \xymatrix{
    \mathbf{B}U(1)_{\mathrm{conn}}
    \times
    \mathbf{B}U(1)_{\mathrm{conn}}
	\ar[r]
	&
	\mathbf{B}^3 U(1)
  }
  \,.
$$ 
Its moduli stack of fields 
$[\Sigma_3, \mathbf{B}U(1)_{\mathrm{conn}} \times \mathbf{B}U(1)_{\mathrm{conn}}]$
consists of pairs of two different $U(1)$-gauge fields on $\Sigma_3$.
On those field configurations that have trivial underlying integral classes and are hence
given by globally defined 1-forms $A_1, A_2$, the action functional in dimension 3
takes these to
$$
  \exp(2 \pi i \int_{\Sigma_3}[\Sigma_3, (-)\cup_{\mathrm{conn}}(-)])
  :
  (A_1, A_2)
  \mapsto
  \exp(2 \pi i \int_{\Sigma_3} A_1 \wedge d A_2)
  =
  \exp(2 \pi i \int_{\Sigma_3} A_2 \wedge d A_1)
  \,.
$$
The ``diagonal of this theory'', namely the extended action functional obtained by
precomposition with the diagonal map
$\Delta : \mathbf{B}U(1)_{\mathrm{conn}} \to \mathbf{B}U(1)_{\mathrm{conn}}\times \mathbf{B}U(1)_{\mathrm{conn}}$ is the ordinary 3d $U(1)$-Chern-Simons theory of a single gauge field
species discussed below in section \ref{3dU1CS}.

\medskip
By Corollary \ref{TotalSpaceOfnConnection} 
the total space object $P$ of the prequantum circle 3-bundle of 
the above extended action functional is the 
homotopy pullback
$$
  \raisebox{20pt}{
  \xymatrix{
    P \ar[rr] \ar[d] && \Omega^{1 \leq \bullet \leq 3}(-) \ar[r] \ar[d] & {*} \ar[d]
	\\
	\mathbf{B}U(1)_{\mathrm{conn}}
	\times
	\mathbf{B}U(1)_{\mathrm{conn}}
    \ar[rr]^-{\cup_{\mathrm{conn}}}
	&&
	\mathbf{B}^3 U(1)_{\mathrm{conn}}
	\ar[r]
	&
	\mathbf{B}^3 U(1)~.
  }
  }
$$
By the universal property of the homotopy pullback this means that $P$ is the
moduli 2-stack for pairs $(\nabla_1,\nabla_2)$ of circle bundles with connection -- 
hence pairs of \emph{1-torus bundles} with connection -- equipped with a 
smooth trivialization
of the cup product 
$$
  \mathbf{c}_1(\nabla_1) \cup \mathbf{c}(\nabla_2)
  = \chi(\nabla_1) \cup \chi(\nabla_2)
$$
of their Chern classes. This is the
structure called a \emph{differential T-duality pair} 
in Def. 2.1 of \cite{KahleValentino}, expressing the necessary differential geometric
structure for an action of T-duality between two torus fibrations 
on the \emph{differential K-theory} of the underlying spaces, hence on the 
charge-quantized RR-fields\footnote{Here
we do not discuss the smooth stacky formulation of 
RR-fields, hence of differential K-cocycles, since this is 
a topic well beyond the scope of the present article.
To obtain such one has to provide a smooth stacky delooping
$\mathbf{B}_\oplus \mathbf{B}U$ of the moduli stack of 
stable unitary bundles with respect to their direct sum, such that the differential
refinement of the resulting abelian geometric cohomology theory as in 
\cite{cohesive} is a stacky refinement of differential K-theory.} in type II string theory.

\subsubsection{Ordinary 3d $U(1)$-Chern-Simons theory and generalized $B_n$-geometry}
  \label{3dU1CS}

As remarked above, ordinary 3-dimensional $U(1)$-Chern-Simons theory 
on a closed oriented manifold $\Sigma_3$ 
contains field configurations which are given by globally defined 1-forms
$A \in \Omega^1(\Sigma_3)$ and on which the action functional is given by
the familiar expression
$$
  \exp(i S(A)) = \exp(2 \pi i k \int_{\Sigma_3} A \wedge d A)
  \,.
$$
More generally, though, a field configuration of the theory is a connection 
$\nabla$ on 
a $U(1)$-principal bundle $P \to \Sigma_3$ and this simple formula
is modified, from being the exponential of the ordinary integral of 
the wedge product of two differential forms, to the fiber integration
in differential cohomology 
(section \ref{FibInt}) 
of the differential cup-product 
(Def. \ref{CupOnStacks}):
$$
  \exp(i S(\nabla)) = \exp(2 \pi i k \int_{\Sigma_3} \nabla \cup_{\mathrm{conn}} \nabla)
  \,.
$$
This defines the action functional on the set 
$H^1_{\mathrm{conn}}(\Sigma_3, U(1))$ of equivalence classes of $U(1)$-principal bundles
with connection
$$
  \exp(i S(-)) : H^1_{\mathrm{conn}}(\Sigma_3) \to U(1)
  \,.
$$
That the action functional is gauge invariant means that it extends from
a function on gauge equivalence classes to a functor on the groupoid 
$\mathbf{H}^1_{\mathrm{conn}}(\Sigma_3, U(1))$, whose objects are actual
$U(1)$-principal connections, and whose morphsims are smooth gauge transformations
between these:
$$
  \exp(i S(-)) : \mathbf{H}^1_{\mathrm{conn}}(\Sigma_3) \to U(1)
  \,.
$$
Finally, that the action functional depends \emph{smoothly} on the connections
means that it extends further to the moduli stack of fields to a morphism of stacks
$$
  \exp(i S(-)) : [\Sigma_3, \mathbf{B} U(1)_{\mathrm{conn}}] \to U(1)
  \,.
$$

\medskip
The fully extended prequantum circle 3-bundle of this extended 3d Chern-Simons theory
is that of the two-species theory in section \ref{DiffT}, restricted along the 
diagonal $\Delta : \mathbf{B}U(1)_{\mathrm{conn}} 
\to \mathbf{B}U(1)_{\mathrm{conn}}\times \mathbf{B}U(1)_{\mathrm{conn}}$.
This is the homotopy fiber of the smooth cup square in these degrees.

\medskip
According to \cite{Hitchin} 
aspects of the differential geometry of 
the homotopy fiber of a differential refinement of this 
cup square are captured by the ``generalized geometry of $B_n$-type''
that was suggested in section 2.4 of \cite{Baraglia}.
In view of the relation of the same structure to differential T-duality discussed above
in section \ref{DiffT} one is led to expect that ``generalized geometric of $B_n$-type''
captures aspects of the differential cohomology
on fiber products of torus bundles that exhibit auto T-duality on differential K-theory. 
Indeed, such a relation is pointed out in \cite{Bouwknegt}\footnote{
Thanks, once more, to Alexander Kahle, for discussion of this point, at \emph{String-Math 2012}.}. 

\subsubsection{$(4k+3)d$ $U(1)$-Chern-Simons theory and self-dual $(2k+1)$-form field theory}
\label{4k+3}
\label{7dSugra}

The differential cup square in general degree
$$
  (-)^{\cup_{\mathrm{conn}}^2}
  :
  \mathbf{B}^{2k+1}U(1)_{\mathrm{conn}}
  \to
  \mathbf{B}^{4k+3}U(1)_{\mathrm{conn}}
$$
for any $k \in \mathbb{N}$ reduces in codimension 0 and on cohomology classes
to the action functional
$$
  \exp(2 \pi i \int_{\Sigma_{4k+3}} [\Sigma_{4k+3}, (-)^{\cup_{\mathrm{conn}}^2}])
  :
  H^{2k+2}_{\mathrm{conn}}(\Sigma_{4k+3})
  \to 
  U(1)
$$
on differential cohomology that exhibits $(4k+3)$-dimensional $U(1)$-Chern-Simons theory,
as it
is generally considered. See,  for instance,  \cite{FP,HopkinsSinger}.
For $k = 0$ this is the 3-dimensional system from section \ref{3dU1CS}.
For a general $k$, its space of quantum states in codimension 1 gives the conformal
blocks of self-dual $(2k+1)$-form gauge theory on $\Sigma_{4k+2}$ 
-- this is the higher Chern-Simons \emph{holography}, as discussed
generally in \cite{BM, BM2} and for the case of $k=1$ famously in \cite{Witten96, Witten98}.

\medskip
We briefly recall a few instances of  self-dual theories of this kind, for $k$ ranging from $0$ to $3$.
  In some cases of self-dual theories, notably 
  for the 6d-theory on the Fivebrane, one wants to further divide the 
  differential cup class after adding a background shift, 
  hence perform a quadratic refinement of the cup square in the holographic
  dual Chern-Simons type theory.
  This was originally argued for in \cite{Witten96} and then formalized 
  in terms of differential cohomology classes in \cite{HopkinsSinger}
  via \emph{integral Wu structures}. In \cite{FiorenzaSatiSchreiberII}
  we showed how this is further refined to differential cocycles on moduli stacks
  of fields, as used here, by stacky \emph{differential Wu structures}. 
In section \ref{CSWithBackgroundCharge} we provided the conceptual and 
theoretical foundation for such a refinement. We highlight this explicitly by 
example \ref{CScharge} in six dimensions at the end of this section. The other cases
in other dimensions would work in a similar way, with obvious changes.

\paragraph{$k=0$: the self-dual scalar field in 3 dimensions.}
The action for the scalar field $\phi$ in two dimensions 
is $d \phi \wedge *d \phi$. The partition function of this field 
can be described via 3-dimensional Chern-Simons theory, taking the form
\(
i\int_{Y^3}\mathrm{CS}_1(A) \cup d \mathrm{CS}_1(A)=i\int_{Y^3}\mathrm{tr}(A) \wedge F_A,
\)
where the curvature 2-form $F_A$ is a representative for the first Chern of a complex line bundle.

\paragraph{$k=1$: the $6d$ self-dual theory on the M5-brane.}
The action functional of classical 11-dimensional supergravity 
contains a \emph{cubic} abelian Chern-Simons term, recalled below
in section \ref{11dSupergravity}. After compactification on a 
four-sphere $S^4$ this becomes an abelian 7-dimensional quadratic Chern-Simons term,
an example of the above system for $k = 1$. In \cite{Witten96,Witten98} 
it is argued that this topological term alone in the full supergravity
action functional determines the conformal blocks of the 
$(0,2)$-superconformal field theory on a single M5-brane under
$\mathrm{AdS}_7/\mathrm{CFT}_6$-duality.
But if the 11-dimensional quantum corrections are taken into account,
the 11-dimensional Chern-Simons term is accompanied by 
further terms which after reduction to 7 dimensions involve a cup
product of a nonabelian 3d Chern-Simons theory with itself, as in section \ref{3dCSSpin}, whose action thus locally reads \cite{Witten98}
\(
 -i \tfrac{N}{4\pi} \int_{Y^7}\mathrm{CS}_3(A) \cup d \mathrm{CS}_3(A)\;= -i \tfrac{N}{4\pi}\int_{Y^7}\left(\langle A,dA\rangle+\tfrac{1}{3}\langle A, [A,A]\rangle\right)\wedge \langle F_A,F_A\rangle\,,
 \) 
as well as an indecomposable 7-dimensional term.
In \cite{FiorenzaSatiSchreiberI,FiorenzaSatiSchreiberII} we argued that if furthermore
the \emph{flux quantization} of the supergravity $C$-field is taken into account, then 
the quantum-corrected 7d Chern-Simons action that is holographically dual
to the M5-brane theory is defined on \emph{String 2-form fields}
as in section \ref{UnaryExamples}.

\paragraph{$k=2$: Ramond-Ramond fields in type IIB string theory.}
Type II RR fields are self-dual. The relation between the RR
partition function and the Chern-Simons theory in eleven dimensions is 
explained in \cite{BM} (see also \cite{S-gerbe}). 
The action is of the form $\int_{Y^{11}}F_5 \wedge dF_5$ and 
the quantization condition of the Ramond-Ramond fields 
implies that these fields are given essentially by the Chern character: 
$F_5= {\rm ch}(E) \sqrt{A(X)}$, where $E$ is the Chan-Paton bundle
\cite{MW}.
The way the Chern character is to be interpreted is by extending by 
a circle to one dimension higher. 
Alternatively, one can view $F_5$ as a ``composite connection" for 
a degree six field strength \cite{Witten96}. 
By identifying $F_5$ with the Chern-Simons 5-form $\mathrm{CS}_5$, one
sees that the Chern-Simons action is indeed of the form 
$\int_{Y^{11}}\mathrm{CS}_5 \wedge d \mathrm{CS}_5$.

\paragraph{$k=3$: Fivebrane structures and 15-dimensional theories.}
One could continue this pattern in the obvious way. For example, 
one could consider $\mathrm{CS}_7(A) \cup d \mathrm{CS}_7(A)$ with $d\mathrm{CS}_7(A)$ an $8$-form representative for the second Pontrjagin class of a String bundle
\cite{SSSI,SSSII}.
With the right normalization constant $\kappa$, one associates this 15-dimensional action to 
the Fivebrane structure \cite{SSSI}. 
A lift of this to sixteen dimensions would have the form $x_8 \cup x_8$, an instance of 
which is studied in \cite{OP2} in the lift of M-theory to higher dimensions. 

The following example spells out the mixed-term version of the quadratic
refinement that we discussed above in 
section \ref{CSWithBackgroundCharge}. 
(This is a simplified version of the quadratic refinement of 
shifted ``flux quantized'' fields. For a full discussion of the latter
see \cite{FiorenzaSatiSchreiberII}.)

\begin{example} 
Let $\Sigma_7$ be a closed 7-dimensional Spin manifold, $G$ a simple and simply connected compact Lie group and $\hat{\mathbf{c}}:\mathbf{B}G_{\mathrm{conn}}\to \mathbf{B}^3U(1)_{\mathrm{conn}}$ a refinement of a degree 4 characteristic class for $BG$ to a morphism of stacks. The tangent bundle of $\Sigma_7$ defines a characteristic map $T\Sigma_7:\Sigma_7\to \mathbf{B}SO(7)$ and the datum of the Spin structure is the datum of a lift of $T\Sigma_7$ to a map $\Sigma_7\to \mathbf{B}\mathrm{Spin}(7)$. Moreover, the Levi-Civita connection gives a further lift $\nabla_{\mathrm{LC}}$ to $\mathbf{B}\mathrm{Spin}(7)_{\mathrm{conn}}$. In terms of moduli stacks of field configurations, this amounts to saying that $[\Sigma_7,\mathbf{B}\mathrm{Spin}(7)_{\mathrm{conn}}]$ is a pointed stack, with distinguished point given by the Levi-Civita connection. We can then consider the sequence of morphism of stacks
\begin{align*}
[\Sigma_7,&\mathbf{B}G_{\mathrm{conn}}]\cong [\Sigma_7,\mathbf{B}G_{\mathrm{conn}}]\times *\xrightarrow{(\mathrm{id},\nabla_{\mathrm{LC}})}[\Sigma_7,\mathbf{B}G_{\mathrm{conn}}]\times [\Sigma_7,\mathbf{B}\mathrm{Spin}(7)_{\mathrm{conn}}]\xrightarrow{[\Sigma_7,(\hat{\mathbf{c}},\frac{1}{2}\hat{\mathbf{p}}_1)]}\\
&\xrightarrow{[\Sigma_7,(\hat{\mathbf{c}},\frac{1}{2}\hat{\mathbf{p}}_1)]}[\Sigma_7,\mathbf{B}^3U(1)_{\mathrm{conn}}\times \mathbf{B}^3U(1)_{\mathrm{conn}}]\xrightarrow{[\Sigma_7,(\hat{x},\hat{y})\mapsto \hat{x}\cup_{\mathrm{conn}}\hat{x}-\hat{x}\cup_{\mathrm{conn}}\hat{y}]}[\Sigma_7,\mathbf{B}^7U(1)_{\mathrm{conn}}]\xrightarrow{\mathrm{hol}_{\Sigma_7}}\\
&\xrightarrow{\mathrm{hol}_{\Sigma_7}}\mathbf{U}(1)\;,
\end{align*}
which is the stacky quadratic refinement of the cup-product intersection pairing of $G$-Chern-Simons theories and which, for a topologically trivial $G$ bundle over $\Sigma_7$ has the form
\[
A\mapsto \exp\left(\pi i\int_{\Sigma_7}\mathrm{CS}_3(A)\wedge \left(\mathrm{tr}(F_A)-\tfrac{1}{2}p_1(T\Sigma_7)\right)\right)\;,
\]
where $A\in \Omega^1(\Sigma_7;\mathfrak{g})$ is the $G$-connection 1-form and $\mathrm{CS}_3(A)$ and $F_A$ are its Chern-Simons 3-form and curvature 2-form, respectively. 
\label{CScharge}
\end{example}

\subsubsection{The cup-product of two extended CS theories and the higher charge anomaly}
\label{CupOfTwo}

We have already discussed the interpretation of  the differential 
cup product from def. \ref{CupOnStacks} as an extended action functional
$$
  (-)\cup_{\mathrm{conn}} (-)
  :
  \mathbf{B}^p U(1)_{\mathrm{conn}}
  \times 
  \mathbf{B}^q U(1)_{\mathrm{conn}}
  \to
  \mathbf{B}^{p+q+1}U(1)_{\mathrm{conn}}
  \,.
$$
By itself this encodes higher Maxwell charge anomalies
in terms of extended Chern-Simons theory. 
We briefly recall what this looks like in 
heterotic string theory.
 See the third 
section of \cite{Lectures} for more exposition in the present context.

\paragraph{Charge anomaly in heterotic string theory.}
The \emph{local anomaly} term (the curvature of the fully extended
action functional on the moduli stack of fields) in this example is a 12-form 
$I_4(F, R) \wedge I_8(F, R)$, where $I_4(F, R)$ and $I_8(F, R)$ are the 
Green-Schwarz anomaly polynomials in degree 4 and 8, respectively, in 
terms of the curvature $R$ of the tangent bundle and the curvature $F$ of the 
gauge bundle (see \cite{Freed}). These terms are given essentially by a difference of 
first Pontrjagin classes and a difference of second Pontrjagin classes, respectively. 
In simplified form (as in \cite{SSSII, SSSIII}) 
\(
I_4(F, R)={\rm ch}_2(F) - \tfrac{1}{2}p_1(R)\;, \qquad \qquad
I_8(F, R) = {\rm ch}_4(F) - \tfrac{1}{48}p_2(R)\;. 
\)
The trivializations  $I_4(F, R)=dH_3$  and $I_8(F, R)=dH_7$ are of the form 
\(
H_3= CS_3(A) - \tfrac{1}{2}CS_3(\nabla)\;, \qquad \qquad
H_7=CS_7(A) - \tfrac{1}{48}CS_7(\nabla)\;,
\)
where $A$ is the gauge connection with curvature $F$ and 
 $\nabla$ is the Spin connection with curvature $R$. 

\medskip
Thus, in eleven dimensions, 
this is a cup product Chern-Simons theory, which can be written as 
the integral of local date of the form
\(
[\mathrm{CS}_3(\nabla)- \tfrac{1}{2}CS_3(\nabla)]\wedge
d[ CS_7(A) - \tfrac{1}{48}CS_7(\nabla)]\;,
\)
or, dually, 
\(
[CS_7(A) - \tfrac{1}{48}CS_7(\nabla)]
\wedge d[\mathrm{CS}_3(\nabla)- \tfrac{1}{2}CS_3(\nabla)]\;.
 \) 
Notice that, since 
 both the gauge bundle and the 
tangent bundle are involved, the Chern-Simons action term is of the mixed type.
In particular, from the mixed terms we see that 
we have a new type of examples of the form 
$CS(\omega_1) \wedge dCS(\omega_2)$ for 
two different connections $\omega_1$ and $\omega_2$.

\subsection{Higher order examples}
\label{Cubic examples}

We have seen so far examples that are the cup products of two copies of the 
same or different Chern-Simons theories. One might wonder whether 
more than two terms can naturally occur.
There are at least two remarkable examples of abelian Chern-Simons theories 
where there are three terms in the action. 

\subsubsection{$5d$ supergravity }

The topological part of pure five-dimensional $N=2$ supergravity resembles 
that of M-theory, except that a connection 1-form $A_1$ replaces the 
C-field. This topological term is locally given by
\(
\int_{Y^5} A_1 \wedge F_2 \wedge F_2\;,
\)
where $F_2=dA_1$ is the curvature of the $U(1)$-connection $A_1$. 
Globally this means that $\hat F_2$ is a differential 2-cocycle 
with curvature $F_2$ and that $A_1$ is locally the corresponding
abelian Chern-Simons term.
In terms of this refinement to differential cohomology the above is
globalized to the the cubic cup term
\(
\int_{Y^5} \hat F_1 \cup \hat F_2 \cup \hat F_2\;, 
\)
i.e. a 3-fold Chern-Simons theory. Thus this falls under our 
formulation and hence admits a refinement to the corresponding 
moduli stacks of supergravity fields. 

\subsubsection{$11d$ supergravity}
\label{11dSupergravity}

The topological aspects of this supergravity theory allows for 
a glimpse at the elusive M-theory. An ingredient which allows for this
is the Chern-Simons term for the C-field given by
\(
\tfrac{1}{6}\int_{Y^{11}} C_3 \wedge G_4 \wedge G_4\;,
\label{CGG}
\)
where $G_4$ is the field strength of the C-field 3-form $C_3$. 
Geometrically, 
this can be seen as the curvature 4-form of a connection $\hat G_4$ on a $U(1)$-2-gerbe.
Therefore, refined to differential cohomology, the above action takes the form 
of a three-term cup-product $\frac{1}{6}\int_{Y^{11}} \hat G_4 \cup \hat G_4 \cup \hat G_4$ of the type \eqref{Eq ktimes} for $k=2$. Note that
the C-field is essentially a Chern-Simons 3-form $CS_3(A)$ for a connection 
1-form $A$ which admits a refinement to moduli 3-stacks (see \cite{FiorenzaSatiSchreiberII}).
The total term  
\eqref{CGG} thus admits a refinement in the sense of  higher 
cup-product Chern-Simons theories.

\subsubsection{Final remarks}
\label{SectionHigher}

\paragraph{Cup product Chern-Simons theories on manifolds with corners}

The Chern-Simons theory considered in Sec. \ref{CupOfTwo}
can be reduced by one dimension, as is the 
case in the above systems. 
One can further reduce dimension by one, by working in the context
of manifolds of corners of codimension 2, as explained in \cite{S-corner}. 
On these codimension 2 corners one then has terms of the form 
$\mathrm{CS}_3 \wedge \mathrm{CS}_7$; see \cite{S-f} for more details
and for topological significance.

\medskip 
One can more generally consider an arbitrary number $k$ of
terms in the cup product. The pattern that 
emerges is a generalization of the heterotic anomaly
cancellation discussed above, where the anomaly takes the form 
of the wedge product of two Chern characters ${\rm ch}_{n_1}$ 
and ${\rm ch}_{n_2}$, to more terms, that is to 
\(
S_Z=\int_{Z^{n_1 + n_2 + \cdots n_k}} 
{\rm ch}_{n_1} \wedge {\rm ch}_{n_2} \wedge \cdots \wedge 
{\rm ch}_{n_k}\;.
\label{many}
\) 
With the local formula ${\rm ch}_{n_i}=dCS_{2n_i +1}$ and 
passing to differential cohomology, we can write each of the factors in 
\eqref{many} in terms of $CS_{2n_i +1}$, for $i=1, \cdots, k$. This involves using 
a type of Stokes formula for various faces in codimension $k$, in the
setting advocated in \cite{S-corner,Coh, S-f}. That is, 
we take $Z^{n_1 + n_2+ \cdots + n_k}$ to admit a codimension-$k$
 corner $X^{n_1 + n_2 + \cdots + n_k -k}$, on which the action schematically 
 takes the 
 form 
 \(
 S_X= \int_{X^{n_1 + n_2 + \cdots + n_k -k} }
 CS_{2n_1 +1} (A_1)
 \wedge 
 CS_{2n_2 +1} (A_2)
 \wedge
 \cdots
 \wedge
 CS_{2n_k +1} (A_k)\;.
 \)
The detailed study of such 
systems is currently under investigation.

\paragraph{The classification of abelian Spin $n$-fold Chern-Simons theories}
Classification of general Chern-Simons theories is a formidable task. 
Three-dimensional abelian Spin Chern-Simons theories with structure group $U(1)^N$ have
been classified by Belov and Moore \cite{BM1}. 
This classification of quantum theories involves three invariants, one of which 
is a quadratic form. 
It is natural to ask what the 
corresponding classification for cup products of such theories would be. 
We do not attempt an answer to this question here, but merely point out that 
that such an extension should involve 
a correspondence with higher forms, that is beyond quadratic forms. More precisely, a $k$-fold
Chern-Simons theory is expected to involve a $k$-ary form.

\vspace{5mm}
\noindent
{\bf \large Acknowledgements}

\vspace{1mm}

\noindent The research of H.S. is supported by NSF Grant PHY-1102218. We thank the referee for important suggestions and constructive criticism which helped us a lot in revising a first version of this article.


\end{document}